\newtheorem{theorem}{Theorem}
\newtheorem{lemma}{Lemma}
\newtheorem{definition}{Definition}
\newcommand{\ket}[1]{\left| #1 \right>}
\newcommand{\bra}[1]{\left< #1 \right|}
\begin{document}

\title{A Mathematical Structure for Amplitude-Mixing Error-Transparent Gates for Binomial Codes} 

\author{Owen C. Wetherbee}
\email{ocw6@cornell.edu}
\affiliation{Department of Physics, Cornell University, Ithaca, NY, 14853, USA}

\author{Saswata Roy}
\affiliation{Department of Physics, Cornell University, Ithaca, NY, 14853, USA}

\author{Baptiste Royer}
\email{baptiste.royer@usherbrooke.ca}
\affiliation{Département de Physique and Institut Quantique, Université de Sherbrooke, Sherbrooke J1K 2R1, QC, Canada}

\author{Valla Fatemi}
\email{vf82@cornell.edu}
\affiliation{School of Applied and Engineering Physics, Cornell University, Ithaca, NY, 14853, USA}

\maketitle

\begin{abstract}
Bosonic encodings of quantum information offer hardware-efficient, noise-biased approaches to quantum error correction relative to qubit register encodings. 
Implementations have focused in particular on error correction of stored, idle quantum information, whereas quantum algorithms are likely to desire high duty cycles of active control.
Error-transparent operations are one way to preserve error rates during operations, but, to the best of our knowledge, only phase gates have so far been given an explicitly error-transparent formulation for binomial encodings.
Here, we introduce the concept of `parity nested' operations, and show how these operations can be designed to achieve continuous amplitude-mixing logical gates for binomial encodings that are fully error-transparent to the photon loss channel.
For a binomial encoding that protects against $l$ photon losses, the construction requires $\lfloor{l/2}\rfloor + 1$ orders of generalized squeezing in the parity nested operation to fully preserve this protection.
We further show that error-transparency to all the correctable photon jumps, but not the no-jump errors, can be achieved with just a single order of squeezing.
Finally, we comment on possible approaches to experimental realization of this concept. 
\end{abstract}

\section{Introduction}\label{sec:intro}
Logical quantum information schemes based on encoding logical qubits in the large Hilbert space of harmonic oscillators, so-called bosonic encodings, aim to provide hardware-efficient approaches to quantum error correction (QEC)~\cite{cai_bosonic_2021,terhal_towards_2020,liu_hybrid_2025}.
One key experimental platform is that of superconducting devices, in which harmonic oscillators demonstrate the lowest decoherence and a large noise bias to photon loss~\cite{reagor_quantum_2016,lei_high_2020,milul_superconducting_2023, ganjam_surpassing_2024}.
These approaches were the first among superconducting devices to demonstrate a gain from error correction~\cite{hu_quantum_2019,ni_beating_2023,sivak_real-time_2023,ofek_extending_2016}, owing in part to their low experimental overhead that simplifies various aspects of experiments. 

A special class of bosonic encodings are rotation-symmetric codes, such as the binomial encoding~\cite{michael_new_2016,grimsmo_quantum_2020}. 
These encodings support the logical and error subspaces on distinct parity (or generalized parity) manifolds of the oscillator Fock space.
Therefore, parity checks are sufficient for detecting photon loss errors in a correctable way, which is a good match for superconducting bosonic modes, which have single photon loss as the dominant source of errors.

So far, most demonstrations of error correction gain, on binomial codes and otherwise, are in the context of idling quantum information.
However, quantum algorithms will likely have a high duty cycle of logical operations, so the  design and implementation of the logical gate set are critical~\cite{hu_quantum_2019}.
Therefore, we must design operations that are fault tolerant, in which errors are not amplified if they occur during an operation (e.g., no increase in rate and no conversion to leakage errors).
A key class of fault tolerant gates are error-transparent (ET) gates~\cite{kapit_error-transparent_2018,ma_path-independent_2020}, which are designed to commute with the set of correctable errors within the codespace so that the errors pass through the algorithm with zero amplification.
So far, ET constructions for bosonic codes have focused on phase gates~\cite{ma_error-transparent_2020,reinhold_error-corrected_2020}, which were possible due to easier-to-implement operations that required tuning a number of parameters that scales linearly with the codespace dimension, $\mathcal{O}(d)$ (see Appendix~\ref{app:phasegates} for a discussion on this).

Here, we take a step forward by designing error-transparent gates for bosonic codes that transfer weight between Fock states, i.e., `amplitude-mixing operations'.
Specifically, we describe a method for constructing ET Hamiltonians which generate continuous logical rotations, like $\bar{X}(\theta)$, for binomial codes.
This problem is more challenging because the number of parameters scales quadratically with the codespace dimension, $\mathcal{O}(d^{2})$.
We introduce the concept of `parity nested' operations, illustrated in Fig.~\ref{fig:paritynested}, in which the Hamiltonian is block-diagonal in generalized photon number parity.
We then specifically design the couplings within each generalized parity manifold to produce amplitude-mixing gates for the binomial encodings that are fully error-transparent to the photon loss channel.
In particular, we define the ET distance of a gate to be the number of photon losses, and commensurate no-jump errors, that pass transparently through the operation.
For a binomial encoding that nominally protects against $l$ photon losses while idling, the construction requires $\lfloor{l/2}\rfloor +1$ orders of generalized squeezing to achieve an ET distance of $l$ (compare the solid and dotted couplings in Fig.~\ref{fig:paritynested}).
We further show that even with a single order of squeezing, it is possible to design an operation that is ET to all the correctable photon loss jumps, but not the no-jump errors.
Finally, in the concluding discussion, we comment on possible approaches and physical platforms for experimental realization of this concept.

\begin{figure*}
    \centering
    \includegraphics[width = \textwidth]{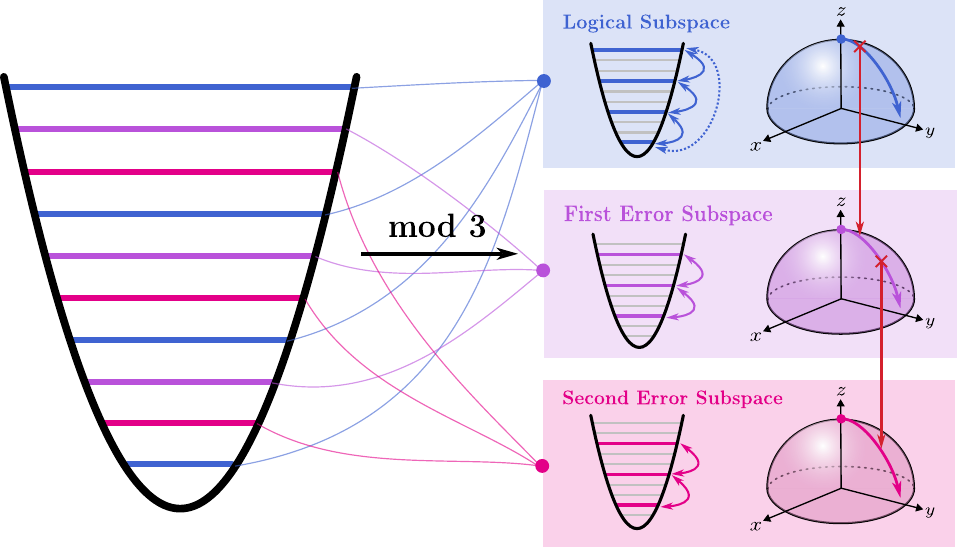}
    \caption{ 
        \textbf{Parity nesting schematic concept.}
        General procedure for constructing a `parity-nested' error-transparent $\bar{X}(\theta)$ gate for a binomial code, depicted here with order $N = 3$ and cutoff $K = 3$ (see Sec.~\ref{sec:rotSymmetric}).
        After breaking the Fock states into generalized mod $N$ parity manifolds, the couplings in each manifold are tuned to achieve identical $\bar{X}(\theta)$ evolution in the codespace and all error subspaces.
        The solid colored arrows depict the couplings required for error-transparency to the errors $\hat{a}$ and $\hat{a}^{2}$, and the dotted arrow is the extra coupling required to achieve a full ET distance of $2$ (i.e. also ET to the no-jump error $\hat{n}$).
        We note that couplings between Fock states separated by an odd number of states within the generalized parity manifolds do not need to be used.
        This is a consequence of the Fock-space structure of rotation-symmetric codes and in general results in fewer required orders of squeezing to achieve a given ET distance.
    }
    \label{fig:paritynested} 
\end{figure*}

\section{Background}\label{sec:background}
\subsection{Rotation-symmetric and binomial codes}\label{sec:rotSymmetric}
Rotation-symmetric codes~\cite{grimsmo_quantum_2020} are a broad class of bosonic encodings characterized by a discrete rotation symmetry in phase-space.
In particular, an order-$N$ bosonic rotation-symmetric code is defined as any code where the discrete rotation operator
\begin{equation}
    \begin{aligned}
        \hat{Z}_{N} = \exp\left(i\frac{\pi}{N}\hat{n}\right) ~,
    \end{aligned}
    \label{eqn:discreteRot}
\end{equation}
acts as a logical $\bar{Z}$ gate on the codespace, where $\hat{n}=\hat{a}^\dagger\hat{a}$ is the photon number operator.
This $N$-fold rotation symmetry forces the $\bar{Z}$ basis code words to have a Fock-basis structure of the form
\begin{equation}
    \begin{aligned}
        |0_{L}\rangle & = \sum_{k = 0}^{\infty}c_{2k}|2kN\rangle ~, \\
        |1_{L}\rangle & = \sum_{k = 0}^{\infty}c_{2k + 1}|(2k + 1)N\rangle ~.
    \end{aligned}
    \label{eqn:rotSymmetric}
\end{equation}
We collect these alternating Fock-grid coefficients $c_{2k}$ and $c_{2k + 1}$ into a single vector $\vec{c}$ that satisfies the normalization condition
\begin{equation}
    \begin{aligned}
        \sum_{k = 0}^{\infty}|c_{2k}|^{2} = \sum_{k = 0}^{\infty}|c_{2k + 1}|^{2} = 1 ~.
    \end{aligned}~
    \label{eqn:normalization}
\end{equation}
and defines a specific instance of an order-$N$ rotation-symmetric code. 
The dual-basis codewords $|\pm_{L}\rangle$ can also be written in terms of this vector of Fock-grid coefficients $\vec{c}$ as
\begin{equation}
    \begin{aligned}
        |\pm_{L}\rangle = \frac{1}{\sqrt{2}}\sum_{k = 0}^{\infty}(\pm1)^{k}c_{k}|kN\rangle ~.
    \end{aligned}
    \label{eqn:dualBasisRotSymmetric}
\end{equation}

Our error-transparent constructions will specifically apply to rotation-symmetric codes that have a hard energy cutoff $K$ such that the highest occupied Fock state is $|KN\rangle$ and the code is defined by a finite number of coefficients $\{c_{0}, \dots c_{K}\}$.
The principal example of such codes are the binomial codes, which are defined for each spacing $N \in \{1, 2, \dots\}$ and cutoff $K \in \{1, 2, \dots \}$ by the Fock-grid coefficients $c_{k} = \sqrt{\binom{K}{k}}$, where $\binom{K}{k}$ defines the binomial coefficient of $K$ choose $k$.
Binomial codes have the special property that they exactly satisfy the Knill-Laflamme conditions for the error set $\mathcal{A}_{l} = \{\hat{I}, \hat{a}, \dots, \hat{a}^{l}\}$, where $l = \min(N - 1, K - 1)$~\cite{michael_new_2016, grimsmo_quantum_2020}.
Thus, the Fock spacing parameter $N - 1$ specifies the number of detectable photon losses and the cutoff parameter $K - 1$ specifies the number of these photon losses that are also exactly correctable.

The error words associated with a given rotation-symmetric code are now defined.
In general, for any code with logical states $|\pm_{L}\rangle$ that corrects an (ordered) error set $\mathcal{E} = \{E_{1}, \dots, E_{n}\}$, the normalized and mutually orthogonal error words $|\pm_{E_{i}}\rangle$ can be iteratively defined for each $E_{i} \in \mathcal{E}$ as
\begin{equation}
    \begin{aligned}
        |\pm_{E_{i}}\rangle = \frac{\mathcal{P}_{i}E_{i}|\pm_{L}\rangle}{\sqrt{\langle\pm_{L}|E_{i}^{\dagger}\mathcal{P}_{i}E_{i}|\pm_{L}\rangle}} ~,
    \end{aligned}
  \label{eqn:errorWords}
\end{equation}
where $\mathcal{P}_{i} = \hat{I} - |\pm_{L}\rangle\langle\pm_{L}| - \sum_{j = 1}^{i - 1}|\pm_{E_{j}}\rangle\langle\pm_{E_{j}}|$ is the projector onto the subspace orthogonal to all the previous code and error words, which is what ensures these error subspaces are orthogonal.
In the case of rotation-symmetric codes, the error words with respect to the photon loss error set $\mathcal{E} = \mathcal{A}_{l} = \{\hat{I}, \hat{a}, \dots, \hat{a}^{l}\}$ are automatically orthogonal (i.e. the projector $\mathcal{P}_{i}$ can be omitted) and take on the form:
\begin{equation}
    \begin{aligned}
        |\pm_{\hat{a}^{m}}\rangle = \frac{\mathcal{N}_{m}}{\sqrt{2}}\sum_{k = 0}^{K}(\pm1)^{k}\varepsilon_{k}^{m}c_{k}|kN - m\rangle ~,
    \end{aligned}
    \label{eqn:errorWordsRotSym}
\end{equation}
where $\varepsilon_{k}^{m} = \sqrt{(kN - m + 1)\cdots(kN)}$ and $\mathcal{N}_{m}$ is a normalization constant that is the same for $|+_{\hat{a}^{m}}\rangle$ and $|-_{\hat{a}^{m}}\rangle$.
Notably, we can recognize these error words $|\pm_{\hat{a}^{m}}\rangle$ as themselves order-$N$ rotation-symmetric codes which have been `shifted' down $m$ levels in Fock space (i.e. obey Eq.~\eqref{eqn:discreteRot} with the added global phase $e^{-i\pi m / N}$) and have new coefficients $(c_{m})_{k} = \mathcal{N}_{m}\sqrt{(kN - m + 1)\cdots(kN)}c_{k}$.
We can thus think of the code and error subspaces of a rotation-symmetric code as a series of shifted rotation-symmetric codes which fill up the generalized parity manifolds of the Hilbert space -- a conceptualization which comes up throughout our derivation of error-transparent gates in Sec.~\ref{sec:parityNested}.

\subsection{Pure-loss bosonic channel}\label{sec:errorChannel}
The pure-loss bosonic channel is the main error channel that rotation-symmetric codes are designed to protect against. 
This loss channel can be defined in terms of the Kraus operators~\cite{chuang_bosonic_1997,ueda_probability-density-functional_1989,lee_superoperators_1994}
\begin{equation}
  \begin{aligned}
    E_{m} = \left(\frac{\gamma}{1 - \gamma}\right)^{m/2}\frac{\hat{a}^{m}}{\sqrt{m!}}(1 - \gamma)^{\hat{n}/2} \hspace{10pt} \forall m \in \mathbb{N}~,
  \end{aligned}
  \label{eqn:photonLoss}
\end{equation}
where $\gamma = 1 - e^{-\kappa t}$ is the probability of losing a photon after time $t$ when starting with one photon, given an excitation loss rate $\kappa$.
We are interested in codes which can correct these errors up to a specified order in $\gamma$, so it is useful to expand these errors in orders of $\sqrt{\gamma}$, which we do in Table~\ref{tab:errors}.

A quantum code corrects this set of errors up to $l$-th order in $\gamma$ if the following approximate Knill-Laflamme error-correction conditions are satisfied~\cite{grassl_quantum_2018}:
\begin{equation}
  \begin{aligned}
    \langle \mu_{L} | E_{i}^{\dagger}E_{j} | \nu_{L} \rangle = \delta_{\mu\nu}a_{ij} + \mathcal{O}(\gamma^{l + 1}) \hspace{10pt} \forall i, j \in \mathbb{N} ~,
  \end{aligned}
  \label{eqn:ECConditions}
\end{equation}
where $\mu, \nu$ $\in \{0, 1\}$ index the logical basis states, $\delta_{\mu\nu}$ is the Kronecker delta function, and $a_{ij}$ is a constant that does not depend on $\mu$ and $\nu$, but can depend on the error pair $E_{i}$ and $E_{j}$.
As discussed in~\cite{michael_new_2016,grassl_quantum_2018}, these conditions hold if all the leading terms of the errors $E_{m}$ up to order $\gamma^{l / 2}$ are mutually correctable.
In general, cross terms with the sub-leading terms involving factors of $\gamma^{k / 2}$ for $l < k < 2l$ also need to satisfy Eq.~\eqref{eqn:ECConditions}, but this always follows in the case of the pure-loss channel~\cite{michael_new_2016} and does not affect our error-transparent gate constructions.
 We can visualize these leading terms that need to be correctable as all those in the first $l + 1$ columns of Table~\ref{tab:errors}, up to and including column $l$.
In particular, we see that the $\gamma^{\frac{m}{2} + k}$ term of the $m$-th error $E_{m}$ involves $\hat{a}^{m}$ multiplied on the right by some polynomial in $\hat{n}$ of degree $k$.
Thus, for all the terms of order $\gamma^{l / 2}$ or less to be mutually correctable, we require the error set
\begin{equation}
    \begin{aligned}
        \mathcal{E}_{l} = \left\{\hat{a}^{m}\hat{n}^{k} \, \big| \, \frac{m}{2} + k \leq \frac{l}{2}\right\}
    \end{aligned}
    \label{eqn:ETSet}
\end{equation}
to exactly satisfy the Knill-Laflamme conditions.
We refer to the $\hat{n}^{k}$ factors of these errors as `no-jump' contributions, and the $\hat{a}^{m}$ errors as `pure jump' errors, so that the pure jump errors live on the main $m = l$ diagonal of Table~\ref{tab:errors}, and the no-jump contributions only appear for $l > m$ (i.e. above the main diagonal of Table~\ref{tab:errors}).
It turns out that, for any rotation-symmetric code, if just these `diagonal' pure photon jump errors $\mathcal{A}_{l} =\{\hat{I}, \hat{a}, \dots, \hat{a}^{l}\}$ are a correctable set, then the full error set $\mathcal{E}_{l}$ will also be correctable~\cite{michael_new_2016}.
Thus, in addition to protecting against the first $l$ photon jump errors $\mathcal{A}_{l}$, the binomial codes with $N, K > l$ also protect against the full pure-loss channel (including the $l > m$ no-jump contributions) to an accuracy of $\mathcal{O}(\gamma^{l + 1})$.

\begin{table*}
    \centering
    \renewcommand{\arraystretch}{1.5}
    \begin{tabular}{c|cccccc}
        \diagbox[linewidth = 0.65pt, width = 30pt, height = 25pt]{$m$}{$l$} & 0 & 1 & 2 & 3 & 4 & $\cdots$ \\
        \hline
        0 & $I$ & - & $-\frac{1}{2}\hat{n}$ & - & $\frac{1}{8}(\hat{n}^{2} - 2\hat{n})$ & $\cdots$ \\
        1 & - & $\hat{a}$ & - & $-\frac{1}{2}\hat{a}\hat{n}$ & - & $\cdots$ \\
        2 & - & - & $\frac{1}{\sqrt{2}}\hat{a}^{2}$ & - & $-\frac{1}{2\sqrt{2}}\hat{a}^{2}\hat{n}$ & $\cdots$ \\
        3 & - & - & - & $\frac{1}{\sqrt{6}}\hat{a}^{3}$ & - & $\cdots$ \\
        4 & - & - & - & - & $\frac{1}{2\sqrt{6}}\hat{a}^{4}$ & $\cdots$ \\
        $\vdots$ & $\vdots$ & $\vdots$ & $\vdots$ & $\vdots$ & $\vdots$ & $\ddots$ \\
    \end{tabular}
    \caption{\textbf{Expansions of error operators.} The $m$-th row gives the expansion of the $E_{m}$ Kraus operator of the pure-loss bosonic channel in orders of $l / 2$, where $l$ is the column index. For example, we can read off $E_{0} = I - \frac{1}{2}\hat{n}\gamma + \frac{1}{8}(\hat{n}^{2} - 2\hat{n})\gamma^{2} + \mathcal{O}(\gamma^{3})$ and $E_{1} = \hat{a}\sqrt{\gamma} - \frac{1}{2}\hat{a}\hat{n}\gamma^{3 / 2} + \mathcal{O}(\gamma^{5 / 2})$.}
    \label{tab:errors}
\end{table*}

\subsection{Error-transparent gates}\label{sec:ETgates}
We colloquially say a gate is fault tolerant if it does not amplify errors that occur before or during the gate.
Error-transparent gates are operations that achieve such fault tolerance by ensuring the corresponding unitary acts identically within all the logical and error subspaces.
In particular, no matter when an error occurs during the gate, the unitary evolution continues within the error subspace exactly as in the logical subspace, so that the result is the same as if the error had occurred at the end of the gate.
This can be formulated as a condition on the Hamiltonian generating the unitary evolution: the Hamiltonian must commute with the error within the logical subspace.
Specifically, we say that a unitary $U(t)$ generated by Hamiltonian $H(t)$ is error-transparent to a correctable error set $\mathcal{E}$ if~\cite{kapit_error-transparent_2018,vy_error-transparent_2013}
\begin{equation}
    \begin{aligned}
        \left[E, H(t)\right]|\mu_{L}\rangle = 0\hspace{10pt}\forall E \in \mathcal{E}, \mu \in \{0, 1\}, t ~.
    \end{aligned}
    \label{eqn:ETCondition}
\end{equation}
When $\mathcal{E} = \mathcal{E}_{l}$, defined in Eq.~\eqref{eqn:ETSet}, we say that the operation has an error-transparency distance of $l$, since the resulting unitary is $l$-th order error-transparent to the full pure-loss bosonic channel (i.e. the unitary is error-transparent to all the errors which must be correctable for $E_{m}$ to satisfy Eq.~\eqref{eqn:ECConditions}).
Although formulated separately for each error $E \in \mathcal{E}$, we note that these conditions ensure that even when multiple errors occur all at different times of the evolution, they will still all commute through the unitary in the sense described above as long as the resulting product of errors is correctable.\footnote{We also remark that there are ways to slightly relax this error-transparency condition, such as allowing the accumulation of different global phases~\cite{ma_error-transparent_2020} or requiring the errors to simply remain correctable after the gate (instead of remaining the same)~\cite{tsunoda_error-detectable_2023}.
However, we have found that the simpler and stricter conditions given by Eq.~\eqref{eqn:ETCondition} are achieved by our construction.}

Logical phase gates $\bar{Z}(\theta)$ for binomial codes have already been made error-transparent to one photon loss $\mathcal{E}_{1} = \mathcal{A}_{1} = \{\hat{I}, \hat{a}\}$.
This can be achieved by using Stark shifts~\cite{ma_error-transparent_2020}, or SNAP gates~\cite{reinhold_error-corrected_2020}\footnote{Although this work focused on error-transparency to ancilla errors, such SNAP gates could in principle be used to achieve error-transparency to one bosonic photon loss.}, to design diagonal Hamiltonians of the form $H = \sum_{n = 0}^{n_{\text{trc}}}\varphi_{n}|n\rangle\langle n|$.
Leveraging the Fock-space structure of rotation-symmetric codes, these operations directly add different phases to each code and error words to accomplish $\bar{Z}(\theta)$ gates in both the logical and error subspaces, yielding error-transparent phase gates (see Appendix~\ref{app:phasegates}).
This cannot, however, be easily extended to amplitude-mixing gates since the requisite non-diagonal matrix elements of $H$ do not interact simply under commutation with the $\sqrt{n}$ factor of $\hat{a}$ or the $\hat{n}^{k}$ no-jump contributions of the errors in $\mathcal{E}_{l}$.

\section{Amplitude-Mixing Gates Using Parity Nested Operations}\label{sec:parityNested}
We will design amplitude-mixing error-transparent gates for the binomial codes by leveraging the periodic spacing in Fock space of their codewords.
In particular, recall that (see Sec.~\ref{sec:background}) the logical and error subspaces of rotation-symmetric codes can be segregated into different $\left|-m \bmod N\right\rangle$ generalized parity manifolds depending on the power of $\hat{a}$ in the error $\hat{a}^{m}\hat{n}^{k}$: the codespace has support only on the $|0 \bmod N\rangle$ Fock states, the $\hat{a}$ error subspace has support only on the $\left|-1 \bmod N\right\rangle$ Fock states, etc.
Here, $\left|-m \bmod N\right\rangle$ denotes the set of all Fock states $|n\rangle$ whose number $n$ is equivalent to $-m$ modulo $N$.
Operations which preserve generalized mod $N$ parity at each time step will then always keep the logical and error subspaces within their corresponding parity manifolds and thus prevent them from mixing.
At the very least this means that detectable photon losses will remain detectable under parity-preserving operations.

We will take this one step further by designing parity-preserving operations which specifically act identically on each logical and error subspace of the binomial code, yielding gates which are error-transparent up to some desired distance $l < N$.
We call such parity-preserving error-transparent gates `parity nested gates,' and they can theoretically be implemented by tuning the matrix elements of generalized higher-order squeezing operations (Sec.~\ref{sec:paritysqueezing}).
In the following sections, we give two general constructions of parity nested $\bar{X}(\theta)$ gates for binomial codes: one achieves error-transparency to the full error set $\mathcal{E}_{l}$ using $\left\lfloor l / 2 \right\rfloor + 1$ orders of squeezing (Sec.~\ref{sec:squeezingScaling}), and the other achieves error-transparency to just the pure-jump errors $\mathcal{A}_{l} = \{\hat{I}, \hat{a}, \dots, \hat{a}^{l}\}$ using only a single order of squeezing (Sec.~\ref{sec:singleSqueezing}).

\subsection{Parity-preserving operations and generalized squeezing}\label{sec:paritysqueezing}
Any unitary which preserves generalized mod $N$ parity at each time step must have a corresponding Hamiltonian $H$ which can be decomposed into independent Hamiltonians $H_{m}$ acting on each respective $\left|-m \bmod N\right\rangle$ parity manifold:
\begin{equation}
    \begin{aligned}
        H = \sum_{m = 0}^{N - 1}H_{m} ~,
    \end{aligned}
    \label{eqn:parityBlocked}
\end{equation}
where
\begin{equation}
    \begin{aligned}
        H_{m} = \sum_{k, k' = 0}^{K}(H_{m})_{k', k}|k'N - m\rangle\langle kN - m| ~.
    \end{aligned}
    \label{eqn:parityBlockedContinued}
\end{equation}
When $m > 0$ the summations start at $k, k' = 1$ so that $H$ does not involve any negative Fock states.
Defined in this way, $H_{0}$ is a $(K + 1) \times (K + 1)$ matrix that determines the unitary evolution in the $|0 \bmod N\rangle$ logical parity manifold, and the remaining $H_{m}$ are $K \times K$ matrices (since the $|0\rangle$ Fock state has been annihilated) that determine the evolution in the corresponding $\left|-m \bmod N\right\rangle$ $m$-th error parity manifolds.
Intuitively, we can think of $H$ as the `parity-blocked' matrix formed by stacking these separate parity manifold Hamitonians $H_{m}$ block-diagonally:
\begin{equation}
    \begin{aligned}
        H & = M^{-1} \begin{bmatrix} H_{0} & 0 & 0 & 0 \\ 0 & H_{1} & 0 & 0 \\ 0 & 0 & \ddots & 0 \\ 0 & 0 & 0 & H_{N - 1} \end{bmatrix} M \\
        & = M^{-1}H'M~,
    \end{aligned}
    \label{eqn:parityBlocked2}
\end{equation}
where $M = \sum_{n = 0}^{KN}\left|\left\lceil n / N \right\rceil + \left(-n \bmod N\right)K\right\rangle\langle n|$\footnote{This usage of mod refers to the modulo operation (i.e. the remainder of $-n$ divided by $N$).} is the permutation matrix that rearranges the Fock states so that those which are congruent mod $N$ are all adjacent to each other. We note that we can only write a Hamiltonian in this block diagonal form if it is parity nested and thus does not contain matrix elements which mix two different generalized parity manifolds.

Thinking about how to actually implement these operations, the simplest type of nontrivial unitary that preserves even/odd (i.e. mod $2$) parity at each time step is the squeeze operator $\hat{S}(z) = \exp\left(\frac{1}{2}\left(z^{*}\hat{a}^{2} - z(\hat{a}^{\dagger})^{2}\right)\right)$, whose Hamiltonian only has matrix elements on the two second off-diagonals.
This can be extended to generalized mod $N$ parity-preserving operations by going to `higher order' squeezing: $\hat{S}_{N}(z) = \exp\left(\frac{1}{2}\left(z^{*}\hat{a}^{N} - z(\hat{a}^{\dagger})^{N}\right)\right)$.
Writing the Hamiltonian $H = z^{*}\hat{a}^{N} + z(\hat{a}^{\dagger})^{N}$ of this unitary in the parity-blocked form of Eq.~\eqref{eqn:parityBlocked2} results in a block-diagonal matrix $H'$ containing $N$ sub-matrices $H_{0}, H_{1}, \dots, H_{N - 1}$ (since $H$ preserves mod $N$ parity), each only having nearest-neighbor couplings (since $H$ only has matrix elements on the two $N$-th off-diagonals).
We emphasize that these `bare' bosonic squeezing operators are not the ones that accomplish the relevant gates for binomial codes, but they provide a useful conceptualization framework for parity nesting.
In particular, in a similar spirit to~\cite{rojkov_stabilization_2024_2}, the matrix elements of the $\hat{a}$ and $\hat{a}^{\dagger}$ operations need to be tuned to design more general Hamiltonians of the form
\begin{equation}
    H = f(\hat{n})^{*}\hat{a}^{N} + f(\hat{n})(\hat{a}^{\dag})^{N} ~,
\end{equation}
where $f(\hat{n})$ is some diagonal matrix in Fock space which effectively changes the $\sqrt{n}$ matrix elements of the $\hat{a}$ operator.

Unfortunately, when implementing actual gates, the parity manifold Hamiltonians $H_{m}$ may require nonzero matrix elements on not only these two first off-diagonals, corresponding to the $N$-th off-diagonals of $H$, but also on the other off-diagonals as well, corresponding to the $2N$-th, $3N$-th, etc. off-diagonals of $H$.
Physically realizing such Hamiltonians would require the simultaneous implementation of several different layers of higher order squeezing -- one for each $kN$-th off-diagonal of $H$ that contains nonzero matrix elements.
With this in mind, we classify the complexity of a parity nested gate by the number of nonzero off-diagonals in its Hamiltonian, and therefore the number of theoretical orders of squeezing that would be needed to implement its Hamiltonian, assuming the nonzero matrix elements could be arbitrarily tuned.

\subsection{Parity nested \texorpdfstring{$\bar{X}(\theta)$}{barX} gates for the binomial codes}\label{sec:derivingGates}
To serve as a starting point for our two main constructions, we now derive parity nested $\bar{X}(\theta)$ gates for the cutoff-$K$ order-$N$ binomial codes (with $N \geq K$) that have an error-transparency (ET) distance $l = K - 1$ commensurate with the order $\mathcal{O}(\gamma^{K})$ of protection achieved while idling.
In particular, for a Hamiltonian $H$ to generate continuous $\bar{X}(\theta)$ gates, it must have $|+_{L}\rangle$ in its $+1$ eigenspace and $|-_{L}\rangle$ in its $-1$ eigenspace.
For $H$ to also be error-transparent to $\mathcal{E}_{K - 1}$ (and thus have an ET distance of $K - 1$), it must act in this same way on all the error subspaces and must therefore likewise have $E|+_{L}\rangle$ in its $+1$ eigenspace and $E|-_{L}\rangle$ in its $-1$ eigenspace, for every error $E \in \mathcal{E}_{K - 1}$.
We can directly derive an $H$ with this eigenstructure by using the orthogonalized error words defined in Eq.~\eqref{eqn:errorWords}:
\begin{equation}
    \begin{aligned}
        H = \sum_{E \in \mathcal{E}_{K - 1}}\left(|+_{E}\rangle\langle+_{E}| - |-_{E}\rangle\langle-_{E}|\right) ~,
    \end{aligned}
  \label{eqn:HETgeneral}
\end{equation}
which generates a continuous $\bar{X}(\theta)$ gate in the logical subspace (since $I \in \mathcal{E}_{K - 1}$) and all the orthogonalized error subspaces. 
Thus, the Hamiltonian $H$ in Eq.~\eqref{eqn:HETgeneral} automatically satisfies the ET conditions in Eq.~\eqref{eqn:ETCondition}: $[H, E]|\pm_{L}\rangle = H\left(E|\pm_{L}\rangle\right) - E\left(H|\pm_{L}\rangle\right) = \pm E|\pm_{L}\rangle - E\left(\pm|\pm_{L}\rangle\right) = 0$ for all $E \in \mathcal{E}_{K - 1}$.
Since each of the errors words $|\pm_{\hat{a}^{m}\hat{n}^{k}}\rangle$ lives in a single parity manifold, dictated by the power $m$ of the $\hat{a}$ operator in the error, this $H$ will also always be parity nested.

To understand how many orders of squeezing are required for this construction, we can analyze each parity manifold independently.
Starting with the logical $|0 \bmod N\rangle$ parity manifold, only the error words arising from the no-jump errors $\hat{n}, \hat{n}^{2}, \dots, \hat{n}^{\left\lfloor \frac{K - 1}{2} \right\rfloor}$ live in this parity manifold (see Eq.~\eqref{eqn:ETSet} with $l = K - 1$ or the first, $m = 0$, row of Table~\ref{tab:errors}).
So, we have:
\begin{equation}
  H_{0} = \sum_{k = 0}^{\left\lfloor \frac{K - 1}{2} \right\rfloor}\left(|+_{\hat{n}^{k}}\rangle\langle+_{\hat{n}^{k}}| - |-_{\hat{n}^{k}}\rangle\langle-_{\hat{n}^{k}}|\right) ~.
  \label{eqn:H0ET}
\end{equation}
Since the $|0_{L}\rangle$ and $|1_{L}\rangle$ codewords of rotation-symmetric codes, and their corresponding $|0_{\hat{n}^{k}}\rangle$ and $|1_{\hat{n}^{k}}\rangle$ error words, have support on alternating Fock states (see Eq.~\eqref{eqn:rotSymmetric}), each $|+_{\hat{n}^{k}}\rangle\langle+_{\hat{n}^{k}}| - |-_{\hat{n}^{k}}\rangle\langle-_{\hat{n}^{k}}| = |0_{\hat{n}^{k}}\rangle\langle 1_{\hat{n}^{k}}| + |1_{\hat{n}^{k}}\rangle\langle 0_{\hat{n}^{k}}|$ pair of terms in Eq.~\eqref{eqn:H0ET} will only contribute nonzero matrix elements to the \textit{odd} off-diagonals of $H_{0}$.
For cutoff-$K$ binomial codes, $H_{0}$ is a $(K + 1) \times (K + 1)$ matrix, meaning it will have $\left\lfloor \frac{K + 1}{2} \right\rfloor$ of these odd off-diagonals (above the main diagonal).

Now considering the other parity manifold Hamiltonians $H_{m}$ arising from the jump errors $\hat{a}^{m}\hat{n}^{k}$, since the $|\pm_{\hat{a}^{m}}\rangle$ error words are just shifted rotation-symmetric codes, prepending factors of $\hat{a}$ to all the code and error words in Eq.~\eqref{eqn:H0ET} (and removing factors of $\hat{n}$ accordingly) results in essentially the same story for each error parity manifold Hamiltonian $H_{m}$.
This means the entire matrix $H$ will generically have nonzero matrix elements on all its $\left\lfloor \frac{K + 1}{2} \right\rfloor$ odd off-diagonals (above the main diagonal), and thus will require $\left\lfloor \frac{K + 1}{2} \right\rfloor$ squeezing orders to implement, scaling with the size $K$ of the binomial code.

\subsection{Number of squeezing orders scales with error-transparency distance}\label{sec:squeezingScaling}
Any Hamiltonian that generates continuous logical $\bar{X}(\theta)$ gates and is error-transparent (ET) to $\mathcal{E}_{l} = \mathcal{E}_{K - 1}$ must contain the eigenstructure in Eq.~\eqref{eqn:HETgeneral}.
However, we can potentially reduce the number of required squeezing orders by adding extra terms to this eigendecomposition.
Specifically, we can construct Hamiltonians of the form
\begin{equation}
    H = \sum_{E \in \mathcal{E}_{l}}\left(|+_{E}\rangle\langle +_{E}| - |-_{E}\rangle\langle -_{E}|\right) + \sum_{i = 1}^{D}\lambda_{i}|\psi_{i}\rangle\langle\psi_{i}| ~,
    \label{eqn:addedTerms}
\end{equation}
where $\lambda_{i} \in \mathbb{R}$, the $|\psi_{i}\rangle$ are orthogonal to each other and all the code and error words, and $D = (KN + 1) - 2|\mathcal{E}_{l}|$ is the dimension of the subspace orthogonal to all the code and error subspaces.
By picking these extra $\lambda_{i}$ eigenvalues and $|\psi_{i}\rangle$ eigenvectors judiciously, we can potentially cancel out all the matrix elements on certain off-diagonals of $H$, thus reducing the required number of squeezing orders.

Unfortunately, when the ET distance $l = K - 1$ matches the order of protection achieved while idling, the eigenstructure of $H_{0}$ (Eq.~\eqref{eqn:H0ET}) contains $2\left\lfloor \frac{K - 1}{2}\right\rfloor + 2 = K + 1$ terms (when $K$ is odd), meaning $H_{0}$ is already full-rank, so no additional terms of the form $\lambda_{i}|\psi_{i}\rangle\langle\psi_{i}|$ can be added to the $H_{0}$ sector of $H$ (the one extra term that can be added when $K$ is even is not sufficient to cancel out any matrix elements).
In other words, the two-dimensional error subspaces arising from $\hat{n}, \hat{n}^{2}, \dots, \hat{n}^{\left\lfloor \frac{K - 1}{2} \right\rfloor}$, along with the codespace, fill up all $K + 1$ dimensions of the logical parity manifold, leaving no extra subspaces on which $H_{0}$ can act arbitrarily to cancel out matrix elements.
In this sense, we say that the error set $\mathcal{E}_{K - 1}$ \emph{saturates} the logical parity manifold of a cutoff-$K$ binomial code, fixing $H_{0}$ and thus the $\left\lfloor \frac{K + 1}{2} \right\rfloor$ required number of squeezing orders.

However, if we instead only seek operations with a smaller ET distance $l < K - 1$, then $\mathcal{E}_{l}$ will no longer saturate the logical parity manifold.
This gives us the freedom to add $K + 1 - 2(\lfloor l / 2 \rfloor + 1) = K - l - 1$ arbitrary terms to the eigendecomposition in Eq.~\eqref{eqn:H0ET} (or $K - l$ terms in the odd $l$ case), which theoretically involves enough free variables to cancel out matrix elements so that $H$ only requires $\left\lfloor l / 2 \right\rfloor + 1$ orders of squeezing.
Although directly determining the necessary $\lambda_{i}$ and $|\psi_{i}\rangle$ requires solving a quadratic system of equations that is not necessarily consistent, we find that this can indeed be done through a different construction:

\begin{theorem}
  For any rotation-symmetric code with a finite number of nonzero coefficients $\vec{c} = [c_{0}, \dots, c_{K}]$ which exactly corrects the error set $\mathcal{E}_{l}=\left\{\hat{a}^{m}\hat{n}^{k} \, \big| \, \frac{m}{2} + k \leq \frac{l}{2}\right\}$, we can always construct a parity nested Hamiltonian $H$ generating continuous logical $\bar{X}(\theta)$ gates which is error-transparent to the error set $\mathcal{E}_{l}$ and only requires $\left\lfloor l / 2 \right\rfloor + 1$ orders of squeezing  to implement.
  \label{thm:SqueezingScaling}
\end{theorem}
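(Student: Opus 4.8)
My plan is to build $H$ one generalized-parity manifold at a time, recognize inside each manifold the subspaces that must become eigenspaces as polynomial spaces twisted by the code's Fock coefficients, and then use the fact that polynomial spaces are kernels of (banded) finite-difference operators to control the number of off-diagonals, hence of squeezing orders. Concretely: write the parity-nested $H=\sum_{m=0}^{N-1}H_{m}$ as in Eq.~\eqref{eqn:parityBlocked}. Because $\hat{a}^{m}$ carries the $|0\bmod N\rangle$ manifold to the $|-m\bmod N\rangle$ manifold while $\hat{n}$ stays inside a manifold, imposing $H|\pm_{L}\rangle=\pm|\pm_{L}\rangle$ turns the ET conditions Eq.~\eqref{eqn:ETCondition} for $\mathcal{E}_{l}$ into the statement that, for each $m$, every $\hat{a}^{m}\hat{n}^{r}|+_{L}\rangle$ with $\tfrac{m}{2}+r\le\tfrac{l}{2}$ is an exact $+1$ eigenvector of $H_{m}$ and every $\hat{a}^{m}\hat{n}^{r}|-_{L}\rangle$ an exact $-1$ eigenvector. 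Since the number of squeezing orders is the number of distinct $j\ge 1$ for which some $H_{m}$ has a nonzero $j$-th off-diagonal (these are the $jN$-th off-diagonals of $H$), it suffices to realize each $H_{m}$ with nonzero entries only on the odd off-diagonals $1,3,\dots,2\lfloor l/2\rfloor+1$.

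Index the $m$-th manifold by $k$, so $|kN-m\rangle\leftrightarrow k$. Since $\hat{n}^{r}|kN\rangle=(kN)^{r}|kN\rangle$ and $\hat{a}^{m}$ scales by the $k$-dependent factor $\varepsilon_{k}^{m}$, the span of $\{\hat{a}^{m}\hat{n}^{r}|+_{L}\rangle:r\le d_{m}\}$ with $d_{m}=\lfloor(l-m)/2\rfloor$ equals $D_{m}\,\mathcal{Q}_{d_{m}}$, where $\mathcal{Q}_{d}$ is the $(d+1)$-dimensional space of polynomials of degree $\le d$ sampled on the manifold grid and $D_{m}=\mathrm{diag}(c_{k}\varepsilon_{k}^{m})$; the $|-_{L}\rangle$ span is $D_{m}S\,\mathcal{Q}_{d_{m}}$ with $S=\mathrm{diag}((-1)^{k})$. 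These two subspaces are orthogonal — this is exactly the Knill--Laflamme relation Eq.~\eqref{eqn:ECConditions} for the pair $\hat{a}^{m}\hat{n}^{r},\hat{a}^{m}\hat{n}^{r'}\in\mathcal{E}_{l}$: the product $(\hat{a}^{m}\hat{n}^{r})^{\dagger}(\hat{a}^{m}\hat{n}^{r'})$ is diagonal in Fock space, so $\langle+_{L}|(\cdot)|-_{L}\rangle=\tfrac12\big(\langle0_{L}|(\cdot)|0_{L}\rangle-\langle1_{L}|(\cdot)|1_{L}\rangle\big)=0$. The widest such subspace, with degree $d_{0}=\lfloor l/2\rfloor$, sits in the logical manifold $m=0$.

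In each manifold I would then build a Hermitian $H_{m}$ that acts as $+1$ on $D_{m}\mathcal{Q}_{d_{m}}$ and that is odd under conjugation by $\hat{Z}_{N}$ (i.e.\ $\hat{Z}_{N}H_{m}\hat{Z}_{N}^{\dagger}=-H_{m}$, equivalently $SH_{m}S=-H_{m}$, reflecting $\bar{Z}\bar{X}\bar{Z}=-\bar{X}$). This anticommutation does two jobs for free: it confines $H_{m}$ to odd off-diagonals, and it makes $D_{m}S\mathcal{Q}_{d_{m}}=S\big(D_{m}\mathcal{Q}_{d_{m}}\big)$ a $-1$ eigenspace automatically, so only ``$H_{m}=+1$ on $D_{m}\mathcal{Q}_{d_{m}}$'' needs engineering. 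The affine family of $S$-anticommuting Hermitian matrices meeting that condition is nonempty — it contains $P_{D_{m}\mathcal{Q}_{d_{m}}}-P_{D_{m}S\mathcal{Q}_{d_{m}}}$, which is the restriction of Eq.~\eqref{eqn:HETgeneral} to this manifold — and can be parametrized using the $(d_{m}+1)$-th finite-difference operator $\Delta^{d_{m}+1}$ (whose kernel is $\mathcal{Q}_{d_{m}}$, and which has half-bandwidth $d_{m}+1$) twisted by $D_{m}$; the step is to select from this family a representative whose entries vanish on every odd off-diagonal beyond $2d_{m}+1$. Assembling $H=\sum_{m}H_{m}$: it is parity nested, satisfies Eq.~\eqref{eqn:ETCondition} for all $E\in\mathcal{E}_{l}$ by construction, and restricts on the codespace to $\mathrm{diag}(+1,-1)$ in the $|\pm_{L}\rangle$ basis so that $e^{-i\theta H}$ realizes a continuous logical $\bar{X}(\theta)$; and all its nonzero $jN$-th off-diagonals have $j\in\{1,3,\dots,2\lfloor l/2\rfloor+1\}$, i.e.\ $\lfloor l/2\rfloor+1$ orders of squeezing.

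The hard part is the selection in the previous paragraph: producing one Hermitian matrix that is simultaneously banded to the prescribed $\lfloor l/2\rfloor+1$ odd off-diagonals \emph{and} exactly $+1$ on the $(d_{m}+1)$-dimensional twisted-polynomial subspace. Dimension counting does not settle existence — as the text notes, the direct approach of appending $\lambda_{i}|\psi_{i}\rangle\langle\psi_{i}|$ terms to Eq.~\eqref{eqn:HETgeneral} gives a quadratic system that need not be consistent — so the argument must exploit the specific algebra of the target eigenspaces (finite-difference kernels twisted by the code's diagonal coefficients) to write the matrix down, or otherwise certify its existence, in closed form; this is where the Fock-space polynomial structure of rotation-symmetric codes does the real work. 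A secondary, easier point is that the $m>0$ manifolds, with shifted grids (virtual index from $1$), different twists $D_{m}$, and smaller degrees $d_{m}\le d_{0}$, can all be truncated to the \emph{same} odd off-diagonals, since the largest bandwidth is required at $m=0$.
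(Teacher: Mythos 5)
Your setup is sound and matches the paper's framing: the parity-nested decomposition, the identification of the target $\pm 1$ eigenspaces in the $m$-th manifold as twisted polynomial spaces $D_{m}\mathcal{Q}_{d_{m}}$ and $D_{m}S\mathcal{Q}_{d_{m}}$ with $d_{m}=\lfloor(l-m)/2\rfloor$, their orthogonality via the Knill--Laflamme conditions, and the observation that imposing $SH_{m}S=-H_{m}$ both restricts $H_{m}$ to odd off-diagonals and makes the $-1$ eigenspace come for free are all correct. But the proof stops exactly where the theorem begins. The entire content of Theorem~\ref{thm:SqueezingScaling} is the existence claim you defer in your final paragraph: that a Hermitian $H_{m}$, banded to the first $\lfloor l/2\rfloor+1$ odd off-diagonals, can be made exactly $+1$ on the $(d_{m}+1)$-dimensional twisted polynomial subspace for \emph{every} $\mathcal{E}_{l}$-correcting code. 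You correctly note that dimension counting and the naive $\lambda_{i}|\psi_{i}\rangle\langle\psi_{i}|$ augmentation of Eq.~\eqref{eqn:HETgeneral} lead to a quadratic system that need not be consistent, and the finite-difference device you gesture at does not close the gap: $\Delta^{d_{m}+1}$ certifies $\mathcal{Q}_{d_{m}}$ as a \emph{kernel} of a banded operator (and $(\Delta^{d_{m}+1})^{\dagger}\Delta^{d_{m}+1}$ has twice the bandwidth you are allowed), which is not the same as exhibiting a banded Hermitian matrix with that space as an exact $+1$ eigenspace and its $S$-reflection as a $-1$ eigenspace. No construction or existence certificate is actually produced, so the proof is incomplete.

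For comparison, the paper closes this gap with a mechanism you do not have. First, a lemma shows that a code exactly corrects $\mathcal{E}_{l}$ if and only if its vector of \emph{squared} coefficients $[|c_{0}|^{2},\dots,|c_{K}|^{2}]$ is a linear combination of the shifted cutoff-$(l+1)$ binomial coefficient vectors of Eq.~\eqref{eqn:binCoeffVecs} (these span the $(K-l)$-dimensional solution space of the moment conditions $\sum_{k}(-1)^{k}k^{l'}|c_{k}|^{2}=0$). Each such shifted binomial block is a ``minimal correcting unit'' whose naive Hamiltonian from Eq.~\eqref{eqn:HETgeneral} saturates its own small parity manifold and therefore occupies only the first $\lfloor l/2\rfloor+1$ odd off-diagonals in its local Fock neighborhood. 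Second, a nonlinear recombination formula, Eq.~\eqref{eqn:HConstruction}, merges the Hamiltonians of two codes whose squared coefficients add, in a way that is verified to preserve both the $\pm1$ eigenstructure on all error words and the sparsity pattern (no new nonzero matrix elements are created). Recursing over the decomposition yields the banded ET Hamiltonian whose existence your argument assumes. If you want to complete your route independently, you would need to supply an analogue of this decomposition-and-merge step, or some other constructive certificate that the banded, $S$-anticommuting, $+1$-on-$D_{m}\mathcal{Q}_{d_{m}}$ family is nonempty for arbitrary admissible $\vec{c}$.
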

We leave the full details of the proof for Appendix~\ref{app:thm1}, but the general idea of this construction is to first break down the code into a series of smaller shifted binomial codes of size $K' = l + 1$, which can be thought of as the `minimal correcting units' of the error set $\mathcal{E}_{l}$.
We can then separately apply the error-transparent construction from Eq.~\eqref{eqn:HETgeneral} to each of these smaller $K' = l + 1$ binomial codes, yielding decoupled Hamiltonians which each only utilize the $\left\lfloor \frac{K' + 1}{2} \right\rfloor = \left\lfloor \frac{l}{2} \right\rfloor + 1$ odd off-diagonals in their immediate vicinity in Fock space.
We then recombine these decoupled Hamiltonians to arrive at the desired $l$-distance ET Hamiltonian $H$, which still only requires $\left\lfloor l / 2 \right\rfloor + 1$ squeezing drives to implement, just like its constituent parts.
Thus, the required number of squeezing orders scales with the desired ET distance $l$, instead of the size $K$ of the code. 

We remark that, while the Hamiltonian constructed in Theorem.~\ref{thm:SqueezingScaling} is certainly not the unique one ET to $\mathcal{E}_{l}$, there is a meaningful sense in which it is the minimal one with respect to squeezing orders.
In particular, as we prove in Appendix.~\ref{sec:minSqueezingOrders}, given any rotation-symmetric code which corrects $\mathcal{E}_{l}$, any Hamiltonian which is error-transparent to $\mathcal{E}_{l}$, continuously maps the logical subspace to itself, and yields a continuous amplitude-mixing gate must require at least $\lfloor l / 2 \rfloor + 1$ orders of squeezing, specifically involving matrix elements on the odd off-diagonals in the $H_{m}$ parity blocks.
The lowest $\lfloor l / 2 \rfloor + 1$ such orders of squeezing are the only ones utilized by the Hamiltonians in the Theorem.~\ref{thm:SqueezingScaling} construction, which is why it is in some sense minimal with respect to squeezing orders.
More generally, this result suggests that these squeezing orders are likely required to achieve exactly error-transparent amplitude-mixing gates for rotation-symmetric codes.
That said, logical unitaries which temporarily leave the codespace, for which the concept of error-transparency becomes much less straightforward, may be able to circumvent this constraint.
For the specific statement of this required squeezing theorem, as well as its proof and further discussion, see Appendix.~\ref{sec:minSqueezingOrders}.

\subsection{Single order of squeezing}\label{sec:singleSqueezing}
Simultaneous implementation of more than one order of squeezing may be challenging.
This motivates the question: what type of error-transparency (ET) performance can be achieved using only a single order of squeezing?
From Theorem~\ref{thm:SqueezingScaling}, we know that with a single order of squeezing we can achieve first-order, but not second-order, error-transparency to the pure-loss channel.
We now improve this result by showing that, using only a single order of squeezing, we can implement Hamiltonians that are more generally error-transparent to the first $l$ correctable photon jumps:
\begin{theorem}
  For any rotation-symmetric code with a finite number of nonzero coefficients $\vec{c} = [c_{0}, \dots, c_{K}]$ which exactly corrects the error set $\mathcal{A}_{l} = \{\hat{I}, \hat{a}, \dots, \hat{a}^{l}\}$, consider the parity nested Hamiltonian $H$ formed from nearest-neighbor parity manifold Hamiltonians $H_{m}$ given by
  \begin{equation}
      \begin{aligned}
          (H_{m})_{k, k + 1} & = (H_{m})_{k + 1, k}^{*} \\
          & = \frac{1}{(c_{m})_{k}^{*}(c_{m})_{k + 1}}\sum_{j = 0}^{K}(-1)^{j + k}|(c_{m})_{j}|^{2} ~,
      \end{aligned}
      \label{eqn:thm2}
  \end{equation}
  where the error coefficients $(c_{m})_{k} = \mathcal{N}_{m}\sqrt{(kN - m + 1)\cdots(kN)}c_{k}$ are defined below Eq.~\eqref{eqn:errorWordsRotSym}. The resulting $H$ then generates continuous logical $\bar{X}(\theta)$ gates, is error-transparent to the error set $\mathcal{A}_{l}$, and only requires a single order of squeezing to implement.
  \label{thm:SingleSqueezing}
\end{theorem}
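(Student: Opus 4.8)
The plan is to prove all three assertions by reducing them to a statement about each generalized-parity block $H_m$ of $H$ separately, and then solving a short recursion inside each block.

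\textbf{Reduction to the blocks.} Because $H=\sum_m H_m$ is parity nested, it is block diagonal under the permutation $M$ of \eqref{eqn:parityBlocked2}, and the logical words $|\pm_L\rangle$ together with each error word $|\pm_{\hat a^m}\rangle$ lie entirely inside a single manifold $|-m\bmod N\rangle$. So it suffices to show: (i) every $H_m$ is Hermitian; and (ii) for each $m$ with $0\le m\le l$ (the $m=0$ case being the codespace), $H_m$ has $|+_{\hat a^m}\rangle$ in its $+1$ eigenspace and $|-_{\hat a^m}\rangle$ in its $-1$ eigenspace. Given (i)--(ii): the code is spanned by eigenvectors of $H$, so $e^{-i\theta H}$ preserves the code and acts on it as $\bar{X}(\theta)$; and since $\hat a^m|\pm_L\rangle$ is proportional to $|\pm_{\hat a^m}\rangle$ by the definition of the error words (see \eqref{eqn:errorWordsRotSym}), we get $[\hat a^m,H]|\pm_L\rangle=\pm\hat a^m|\pm_L\rangle\mp\hat a^m|\pm_L\rangle=0$ for every $\hat a^m\in\mathcal A_l$, i.e.\ \eqref{eqn:ETCondition} holds — the same argument as below \eqref{eqn:HETgeneral}.

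\textbf{A recursion inside one block.} Fix $m$ and write $|\pm_{\hat a^m}\rangle=\frac{1}{\sqrt2}\sum_k(\pm1)^k(c_m)_k|kN-m\rangle$; the $(c_m)_k$ are nonzero for all relevant $k$ because $c_0,\dots,c_K\neq0$ and $m<N$. Imposing $H_m|\pm_{\hat a^m}\rangle=\pm|\pm_{\hat a^m}\rangle$, with $H_m$ a zero-diagonal tridiagonal matrix, yields one linear equation per Fock component. Here the alternating-grid form of rotation-symmetric codes \eqref{eqn:rotSymmetric} does the essential work: for each Fock component the equation coming from $|+_{\hat a^m}\rangle$ and the one coming from $|-_{\hat a^m}\rangle$ coincide, so the whole eigenstructure requirement collapses to the single three-term recursion $(H_m)_{k,k-1}(c_m)_{k-1}+(H_m)_{k,k+1}(c_m)_{k+1}=(c_m)_k$ with the boundary terms dropped at the bottom and top of the manifold. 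Propagating this forward from the bottom boundary fixes every coupling uniquely, and substitution (a short induction) reproduces the closed form \eqref{eqn:thm2} and confirms $(H_m)_{k+1,k}=(H_m)_{k,k+1}^*$, giving (i).

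\textbf{The obstacle, and the squeezing count.} The recursion is over-determined by exactly one equation — a block has one more Fock component than it has couplings — so the genuine content is checking that the forward solution also satisfies the leftover equation at the top of the manifold; this is the step I expect to be the real obstacle. Substituting \eqref{eqn:thm2}, this consistency requirement collapses to $\sum_j(-1)^j|(c_m)_j|^2=0$, equivalently $\|\hat a^m|0_L\rangle\|^2=\|\hat a^m|1_L\rangle\|^2$, which is precisely the $i=j=m$ Knill--Laflamme equation for the code; it holds for every $m\le l$ because the code exactly corrects $\mathcal A_l$ (and trivially for $m=0$), and this is the only place the hypothesis enters. The blocks $H_m$ with $l<m\le N-1$ need nothing beyond Hermiticity, which \eqref{eqn:thm2} supplies. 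Finally, the single-squeezing-order statement is immediate from the construction: since each $H_m$ has nonzero entries only on its first off-diagonals, undoing $M$ shows that $H$ connects only $|n\rangle$ and $|n\pm N\rangle$, so $H=f(\hat n)^{*}\hat a^{N}+f(\hat n)(\hat a^{\dagger})^{N}$ for a diagonal $f(\hat n)$ obtained by dividing the block couplings by the matrix elements of $\hat a^{N}$ — a single order of generalized squeezing.
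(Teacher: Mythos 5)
Your proposal is correct and follows essentially the same route as the paper's proof in Appendix~\ref{app:thm2}: impose Hermiticity and the nearest-neighbor (tridiagonal, zero-diagonal) structure on each parity block, observe that the $|+\rangle$ and $|-\rangle$ eigenstate equations coincide thanks to the alternating Fock-grid support, solve the resulting three-term recursion forward, and identify the single leftover consistency equation with $\sum_{j}(-1)^{j}|(c_{m})_{j}|^{2}=0$, i.e.\ the exact Knill--Laflamme condition for $\hat{a}^{m}$. The only notable difference is that you make the over-determination count and the role of the top-boundary equation more explicit, which the paper states only implicitly by calling the system ``over-constrained'' and ``consistent exactly when'' the correction condition holds.
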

We now give a sketch of the proof, leaving a full derivation of Eq.~\eqref{eqn:thm2} to Appendix~\ref{app:thm2}.
Using Theorem~\ref{thm:SqueezingScaling} with $l = 0$, we can first derive a nearest-neighbor logical parity manifold $H_{0}$ that generates continuous logical $\bar{X}(\theta)$ gates in the codespace, yielding the $m = 0$ case of Eq.~\eqref{eqn:thm2}.
Then, since the error words with respect to pure-jump $\hat{a}^{m}$ errors are shifted rotation-symmetric codes (see Sec.~\ref{sec:rotSymmetric}), we can repeat this construction for each error parity manifold Hamiltonian $H_{m}$, yielding a full Hamiltonian $H$ which generates $\bar{X}(\theta)$ gates in both the codespace and all the $\hat{a}^{m}$ error spaces and only requires a single order of squeezing.
The resulting Hamiltonian (Eq.~\eqref{eqn:thm2}) indeed satisfies the ET conditions in Eq.~\eqref{eqn:ETCondition}.

This theorem tells us that the no-jump errors are what force the use of all the additional $\left\lfloor l / 2 \right\rfloor$ squeezing orders, beyond the first, required for full pure-loss error-transparency.
Although this theorem does not yield single-order squeezing Hamiltonians with ET distances greater than $1$ (since we lack error-transparency to the second-order no-jump error $\hat{n} \in \mathcal{E}_{2}$), it does allow us to suppress the higher-order jump errors (e.g. $\hat{a}^{2} \in \mathcal{E}_{2}$) in addition to $\hat{a} \in \mathcal{E}_{1}$. Since $\mathcal A_{l} \supset \mathcal E_{1}$, this indeed yields a substantial reduction in errors as we demonstrate in Sec.~\ref{sec:performance}.
A similar procedure can be used to also get error improvements over the construction in Theorem~\ref{thm:SqueezingScaling} when we are limited to a larger number of squeezing orders (more than one), but we leave the corresponding derivation to a future work.

\section{Examples and Performance}\label{sec:performance}
As an example and to test the performance of our scheme, we now apply the various parity nested constructions presented in Sec.~\ref{sec:parityNested} to the $N = 3$, $K = 3$ binomial code.
This binomial code has codewords
\begin{equation}
    \begin{aligned}
        |0_{L}\rangle & = \frac{|0\rangle + \sqrt{3}|6\rangle}{2} \\
        |1_{L}\rangle & = \frac{\sqrt{3}|3\rangle + |9\rangle}{2}
    \end{aligned}
\end{equation}
and exactly corrects the error set $\mathcal{E}_{2} = \{\hat{I}, \hat{n}, \hat{a}, \hat{a}^{2}\}$.

\emph{Construction 1.}
From Sec.~\ref{sec:derivingGates}, the parity nested $\bar{X}$ gate for this code that is error-transparent (ET) to the full error set $\mathcal{E}_{2}$ requires $\left\lfloor \frac{K + 1}{2} \right\rfloor = 2$ squeezing orders to implement.
In particular, using Eq.~\eqref{eqn:H0ET} we can write the logical parity manifold Hamiltonian as
\begin{equation}
    \begin{aligned}
        H_{0} & = |0_{L}\rangle\langle 1_{L}| + |1_{L}\rangle\langle 0_{L}| + |0_{\hat{n}}\rangle\langle 1_{\hat{n}}| + |1_{\hat{n}}\rangle\langle 0_{\hat{n}}| \\
        & = \frac{1}{2}\begin{bmatrix} 0 & \sqrt{3} & 0 & -1 \\ \sqrt{3} & 0 & 1 & 0 \\ 0 & 1 & 0 & \sqrt{3} \\ -1 & 0 & \sqrt{3} & 0 \end{bmatrix} ~.
    \end{aligned}
    \label{eqn:exampleH0}
\end{equation}
Likewise for the first and second error parity manifolds, we have:
\begin{equation}
    \begin{aligned}
        H_{1} & = |0_{\hat{a}}\rangle\langle 1_{\hat{a}}| + |1_{\hat{a}}\rangle\langle 0_{\hat{a}}| \hspace{30pt} \\
        & = \frac{1}{\sqrt{2}}\begin{bmatrix} 0 & 1 & 0 \\ 1 & 0 & 1 \\ 0 & 1 & 0 \end{bmatrix} ~, \\ \\
        H_{2} & = |0_{\hat{a}^{2}}\rangle\langle 1_{\hat{a}^{2}}| + |1_{\hat{a}^{2}}\rangle\langle 0_{\hat{a}^{2}}| \\
        & = \frac{1}{\sqrt{5}}\begin{bmatrix} 0 & 1 & 0 \\ 1 & 0 & 2 \\ 0 & 2 & 0 \end{bmatrix} ~.
    \end{aligned}
    \label{eqn:exampleH12}
\end{equation}
The full parity-blocked Hamiltonian $H$ (see Eq.~\eqref{eqn:parityBlocked2}) will then be given by
\begin{equation}
    \begin{aligned}
        & H = M^{-1}\begin{bmatrix} \color{blue}H_{0}\color{black} & 0 & 0 \\ 0 & \color{violet}H_{1}\color{black} & 0 \\ 0 & 0 & \color{red}H_{2}\color{black} \end{bmatrix}M = \\
        & \begin{bmatrix} 0 & 0 & 0 & \color{blue}\frac{\sqrt{3}}{2}\color{black} & 0 & 0 & 0 & 0 & 0 & \color{blue}-\frac{1}{2}\color{black} \\ 0 & 0 & 0 & 0 & \color{red}\frac{1}{\sqrt{5}}\color{black} & 0 & 0 & 0 & 0 & 0 \\ 0 & 0 & 0 & 0 & 0 & \color{violet}\frac{1}{\sqrt{2}}\color{black} & 0 & 0 & 0 & 0 \\ \color{blue}\frac{\sqrt{3}}{2}\color{black} & 0 & 0 & 0 & 0 & 0 & \color{blue}\frac{1}{2}\color{black} & 0 & 0 & 0 \\ 0 & \color{red}\frac{1}{\sqrt{5}}\color{black} & 0 & 0 & 0 & 0 & 0 & \color{red}\frac{2}{\sqrt{5}}\color{black} & 0 & 0 \\ 0 & 0 & \color{violet}\frac{1}{\sqrt{2}}\color{black} & 0 & 0 & 0 & 0 & 0 & \color{violet}\frac{1}{\sqrt{2}}\color{black} & 0 \\ 0 & 0 & 0 & \color{blue}\frac{1}{2}\color{black} & 0 & 0 & 0 & 0 & 0 & \color{blue}\frac{\sqrt{3}}{2}\color{black} \\ 0 & 0 & 0 & 0 & \color{red}\frac{2}{\sqrt{5}}\color{black} & 0 & 0 & 0 & 0 & 0 \\ 0 & 0 & 0 & 0 & 0 & \color{violet}\frac{1}{\sqrt{2}}\color{black} & 0 & 0 & 0 & 0 \\ \color{blue}-\frac{1}{2}\color{black} & 0 & 0 & 0 & 0 & 0 & \color{blue}\frac{\sqrt{3}}{2}\color{black} & 0 & 0 & 0 \end{bmatrix} ~.
    \end{aligned}
\end{equation}
As we can see, this Hamiltonian indeed requires two orders of squeezing to implement: 3rd-order squeezing to tune the matrix elements on the 3rd off-diagonal and 9th-order squeezing for the $-\frac{1}{2}$ matrix element at the anti-diagonal corners of the matrix.

\emph{Construction 2.}
Since this binomial code is the `minimal correcting unit' with respect to the error set $\mathcal{E}_{2}$, this is the same Hamiltonian we would get if we used the construction in Theorem~\ref{thm:SqueezingScaling} with desired ET distance set to $l = 2$.
However, if we instead only seek an ET distance of $l = 1$, Theorem~\ref{thm:SqueezingScaling} will give us a different Hamiltonian that only requires a single order of squeezing.
In particular, the parity manifold Hamiltonians will become:
\begin{equation}
    \begin{aligned}
        H_{0} & = \frac{1}{3}\begin{bmatrix} 0 & \sqrt{3} & 0 & 0 \\ \sqrt{3} & 0 & 2 & 0 \\ 0 & 2 & 0 & \sqrt{3} \\ 0 & 0 & \sqrt{3} & 0 \end{bmatrix} \\
        H_{1} & = \frac{1}{\sqrt{2}}\begin{bmatrix} 0 & 1 & 0 \\ 1 & 0 & 1 \\ 0 & 1 & 0 \end{bmatrix} \\
        H_{2} & = \begin{bmatrix} 0 & 0 & 0 \\ 0 & 0 & 0 \\ 0 & 0 & 0 \end{bmatrix} ~,
    \end{aligned}
    \label{eqn:exampleHnew}
\end{equation}
which are all nearest-neighbor, meaning the resulting Hamiltonian $H$ will only require 3rd-order squeezing.
This $H_{0}$ has changed by replacing the $|0_{\hat{n}}\rangle\langle 1_{\hat{n}}| + |1_{\hat{n}}\rangle\langle 0_{\hat{n}}|$ contribution in Eq.~\eqref{eqn:exampleH0} (which is no longer needed since $\hat{n} \not\in \mathcal{E}_{1}$) with some other coupling, orthogonal to the code space, that specifically cancels out the $(H_{0})_{0, 3} = (H_{0})_{3, 0}$ matrix element.
This $H_{1}$ has not changed because it is only a $3 \times 3$ matrix, meaning there are no other terms that can be added to the $H_{1} = |0_{\hat{a}}\rangle\langle 1_{\hat{a}}| + |1_{\hat{a}}\rangle\langle 0_{\hat{a}}|$ eigendecomposition to cancel out matrix elements -- in general, for codes with larger $K$, this will not be the case.
Finally, $H_{2} = 0$ since $\hat{a}^{2} \not\in \mathcal{E}_{1}$, so we do not need to impose any sort of evolution in the second error parity manifold to achieve the desired ET distance of $l = 1$.

We remark that, under this $l = 1$ Theorem~\ref{thm:SqueezingScaling} construction, $H_{0} = \frac{1}{3 / 2}\hat{J}_{x}^{(3 / 2)}$ and $H_{1} = \hat{J}_{x}^{(1)}$, where $\hat{J}_{x}^{(J)}$ is the spin-$J$ $\hat{J}_{x}$ operator.
Indeed, it holds more generally for any binomial code that the $l = 1$ Theorem~\ref{thm:SqueezingScaling} construction will yield $H_{0} = \frac{1}{J}\hat{J}_{x}^{(J)}$, $H_{1} = \frac{1}{J - 1/2}\hat{J}_{x}^{(J - 1/2)}$, and $H_{m} = 0$ for $m > 1$, where $J = \frac{K}{2}$ is the spin associated with the dimension $K + 1$ of the logical parity manifold.
This follows because, as first pointed out in~\cite{albert_performance_2018}, under the spaced Holstein-Primakoff map $|kN\rangle \leftrightarrow |J, J - k\rangle$, the $|\pm_{L}\rangle$ states of the binomial codes correspond to the spin-$J$ spin-coherent states lying on the $\pm\hat{x}$ axes, meaning they are the $\pm J$ eigenstates of the $\hat{J}_{x}$ spin operator under this mapping.
Thus, $\frac{1}{J}\hat{J}_{x}^{(J)}$ is a nearest-neighbor (so only requires a single order of squeezing) and Hermitian operator that generates continuous $\bar{X}(\theta)$ gates in the logical parity manifold.
Since the $l = 1$ Theorem~\ref{thm:SqueezingScaling} construction uniquely determines the logical parity manifold Hamiltonian, we therefore must have $H_{0} = \frac{1}{J}\hat{J}_{x}^{(J)}$.
Similarly, due to the combinatorial identity $k\binom{K}{k} = K\binom{K - 1}{k - 1}$, the $|\pm_{\hat{a}}\rangle \propto \left(\sum_{n = 0}^{KN}\sqrt{n}|n - 1\rangle\langle n|\right)|\pm_{L}\rangle$ error words are also spin-coherent states lying on the $\pm\hat{x}$ axes, but this time of spin $J' = \frac{K - 1}{2}$ since the $|0\rangle$ Fock state has been annihilated.
Since the $l = 1$ Theorem~\ref{thm:SqueezingScaling} construction also uniquely determines the first error parity manifold Hamiltonian (but not any of the other $H_{m}$ for $m > 1$), we then will also have $H_{1} = \frac{1}{J - 1/2}\hat{J}_{x}^{(J - 1/2)}$.

\emph{Construction 3.}
We now improve on this single squeezing result from Theorem~\ref{thm:SqueezingScaling} by using Theorem~\ref{thm:SingleSqueezing} to achieve error-transparency to $\mathcal{A}_{2} = \{\hat{I}, \hat{a}, \hat{a}^{2}\} \subseteq \mathcal{E}_{2}$, still only using a single order of squeezing.
In this case, this entails adding to Eq.~\eqref{eqn:exampleHnew} the $H_2$ from Eq.~\eqref{eqn:exampleH12} .
We note that, in general, the new Hamiltonians $H_1, H_{2}$ will be different from the ones from the Sec.~\ref{sec:derivingGates} construction.
We also note that, unlike $H_{0}$ and $H_{1}$, this second error parity manifold Hamiltonian will not generically be equal to a spin matrix, and so generally should be determined from our formula in Eq.~\eqref{eqn:thm2}.

We now demonstrate the performance of these three constructions by plotting their gate infidelities, following recovery, when subject to the pure photon-loss channel for a range of error rates $\kappa$ (Fig.~\ref{fig:performance}).
As expected, we find that \emph{Construction 1} (Sec.~\ref{sec:derivingGates} or $l = 2$ Theorem~\ref{thm:SqueezingScaling}), requiring two orders of squeezing, has infidelities scaling with $\gamma^{3}$, exactly matching the infidelities during idle evolution.
Likewise, we find that \emph{Construction 2} ($l = 1$ Theorem~\ref{thm:SqueezingScaling}), requiring only a single order of squeezing, has infidelities scaling with $\gamma^{2}$, meaning it indeed achieves an ET distance of $\lfloor 1 / 2 \rfloor + 1 = 1$.
Finally, we find that the improved single squeezing \emph{Construction 3} (Theorem~\ref{thm:SingleSqueezing}), while still scaling with $\gamma^{2}$ for small $\gamma$ since it is not ET to $\hat{n} \in \mathcal{E}_{2}$, yields about a factor of $10$ improvement over \emph{Construction 2} due to being additionally ET to the second photon-jump $\hat{a}^{2} \in \mathcal{A}_{2} \supset \mathcal{E}_{1}$.

\begin{figure*}
    \centering
    \includegraphics[width = \textwidth]{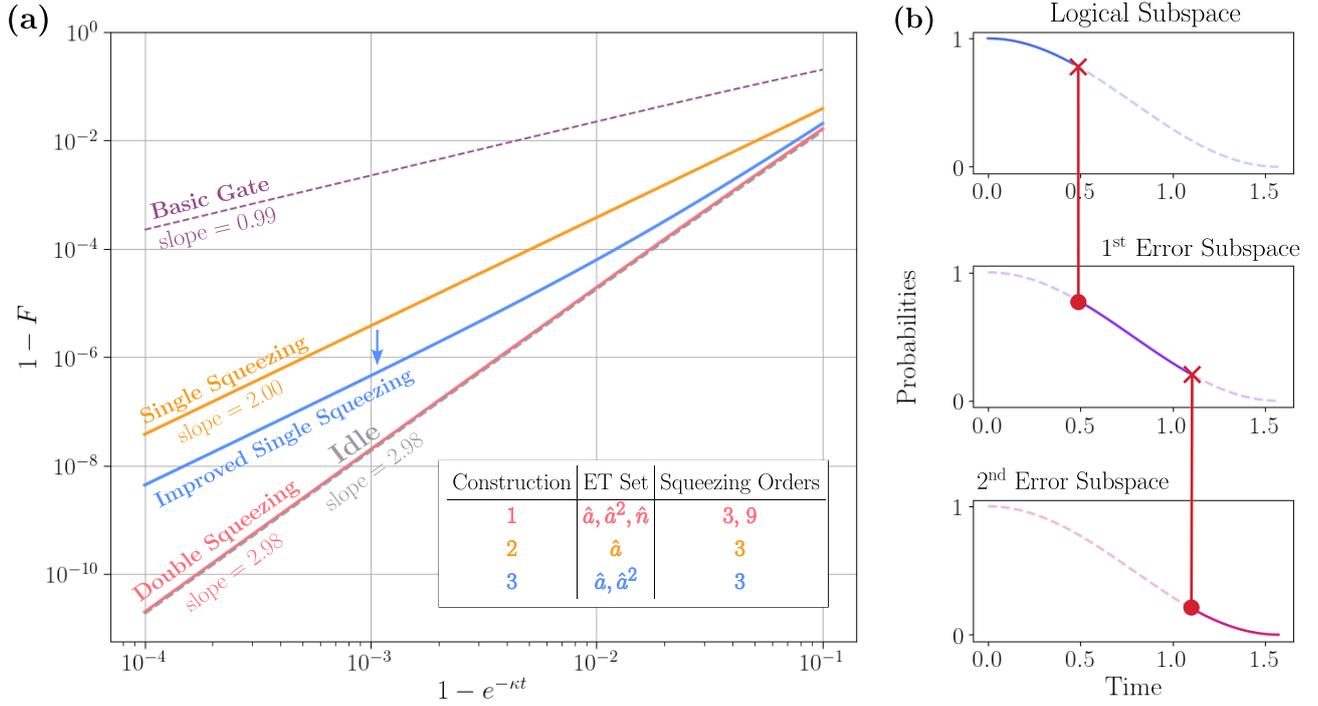}
    \caption{ 
        \textbf{Parity nested gate performance.} 
        \textbf{(a)} Average gate infidelities, following recovery, of various parity nested $\bar{X}$ gates for the $N = 3$, $K = 3$ binomial code, subject to the pure photon-loss channel with total gate time $t = \frac{\pi}{2}$ and a range of error rates $\kappa$ (see Appendix~\ref{app:performance}). The double squeezing (red), single squeezing (orange), and improved single squeezing (blue) gate Hamiltonians come from constructions 1, 2, and 3, respectively, outlined in Sec.~\ref{sec:performance}. The table inset summarizes the ET properties and required squeezing orders of these three constructions. For comparison, the performance of a basic $\bar{X}$ gate (purple), generated by $H = |0_{L}\rangle\langle 1_{L}| + |1_{L}\rangle\langle 0_{L}|$, and idle evolution (gray) are also shown. As can be seen, the improved single squeezing gate essentially matches the scaling of idle evolution (and double squeezing) for large error rates, and yields a constant $8.6$ factor improvement over the single squeezing infidelity for smaller error rates. This constant factor is not generic, decreasing for larger gate times $t$ and increasing for larger binomial codes (see Appendix~\ref{app:performance}). \textbf{(b)} Depiction of the time evolution in each of the error subspaces for the improved single squeezing gate. The solid lines show an example trajectory of the $|0_{L}\rangle$ state for one possible set of photon jump times.
    }
    \label{fig:performance} 
\end{figure*}

\section{Discussion}
We have introduced the concept of parity-nested operations and applied them to design error-transparent (ET) amplitude-mixing operations for the binomial encodings.
These gates can be ET to all the correctable photon jumps using only a single squeezing order, or can achieve any allowed ET distance $l$ using $\lfloor l / 2 \rfloor + 1$ squeezing orders.
This provides a conceptual completion of the ET gate set for binomial encodings, though a practical implementation remains to be designed.
Although presented here for the binomial codes, our theorems and constructions more generally apply to any hard Fock state cutoff rotation-symmetric code that can exactly correct for photon losses (which are in fact combinations of binomial codes; see Appendix~\ref{app:thm1}).
In particular, the same constructions can also be applied to yield ET $\bar{Y}(\theta)$ gates (or any logical rotation about equatorial axes of the Bloch sphere), since they are just $\bar{X}(\theta)$ gates for the rotation-symmetric code $\mathcal{C}'$ with codewords $|0_{L}\rangle' = |0_{L}\rangle$ and $|1_{L}\rangle' = i|1_{L}\rangle$.
Direct extension of our concept to other bosonic encodings that do not have a hard Fock state cutoff may require tuning an infinite number of couplings, but such ET operations could have perfect fidelity in principle.

The possibility of experimental implementation remains to be accomplished. 
We describe one possible implementation in Appendix~\ref{app:concept_implementation} using a version of our recently demonstrated matrix-element modification protocol applied to a squeezing drive rather than a linear displacement~\cite{roy_synthetic_2025}.
We also give a preliminary discussion of how this protocol can be made error-transparent to ancilla errors in Appendix~\ref{app:ancillaET}.
This proof-of-concept method shows that such parity-nested operations are possible in principle, but the strict adiabaticity requirements makes that protocol impractical.
Building on that protocol with optimal control methods~\cite{khaneja_optimal_2005,de_fouquieres_second_2011} may have the potential to successfully accomplish high-fidelity and highly ET gates in a short duration by building on the insights pointed out in our work.
We also remark that squeezing \cite{wang_efficient_2020}, trisqueezing \cite{eriksson_universal_2024} and six wave mixing processes \cite{vanselow_dissipating_2025} have been demonstrated in experiments which can be leveraged for an experimental implementation of ET gates.
Although experimentally implementing our specific Hamiltonians may be challenging, our work provides a conceptual touchstone through which future efforts can be guided, motivating working towards either (a) efficiently generating these high orders of squeezing, (b) searching for approximate or numerically-optimized ET operations based on our constructions, or (c) exploring new bosonic encodings which may be more compatible with amplitude-mixing ET gates.

Finally, we remark that our ET constructions are also compatible with qudit encodings that allow for multi-frequency control. 
In those systems, multi-frequency control makes possible tuning of effective matrix elements in a multi-frequency rotating wave approximation.
The relevant experimental systems include heavy transmons~\cite{neeley_emulation_2009,wu_high-fidelity_2020,fischer_universal_2023,wang_high-e_je_c_2025}, nuclear spins in solid state defects~\cite{asaad_coherent_2020,gupta_robust_2024,fernandez_de_fuentes_navigating_2024,vaartjes_certifying_2025,yu_schrodinger_2025}, and the multi-component manifolds of trapped ions and neutral atoms~\cite{ringbauer_universal_2022,aksenov_realizing_2023,hrmo_native_2023,buchemmavari_entangling_2024,omanakuttan_quantum_2021,anderson_accurate_2015}. 
To this end, although our ET constructions focus on errors generated by photon loss jump operators, Table~\ref{tab:errors} shows how the same operations are ET against dephasing errors, which are the dominant error in some of these cases.

\section*{Acknowledgements and Other Information}

\paragraph{Author contributions}
O. C. W. led the research.
S. R. accomplished the simulations shown in App.~\ref{app:concept_implementation}.
B. R. and V. F. supervised and guided the effort, including identification of the central question.
O. C. W. wrote the manuscript, with input from all authors.

\paragraph{Funding information}
O. C. W., S. R., and V. F. are thankful for the support of the Aref and Manon Lahham Faculty Fellowship that contributed to this work.
B. R. acknowledges funding from the Canada First Research Excellence Fund, the Natural Sciences and Engineering Research Council of Canada (NSERC) as well as the Fonds de Recherche du Québec, Nature et Technologie (FRQNT).

\paragraph{Other acknowledgements}
We acknowledge illuminating discussions with Steve Yu and Andrea Morello regarding nuclear spin cats in solid state defects. 

\paragraph{Data and code availability}
All data generated and code used in this work are available at: \href{https://doi.org/10.5281/zenodo.15491315}{10.5281/zenodo.15491315}.

\paragraph{Competing Interests}
The authors declare no competing interests.

\bibliographystyle{quantum}
\bibliography{vf-references.bib, ow-references.bib}

\appendix
\numberwithin{equation}{section}

\section{Error-Transparent Phase Gates for Binomial Codes} \label{app:phasegates}
As mentioned in Sec.~\ref{sec:ETgates}, error-transparent (ET) phase gates can be accomplished using diagonal Hamiltonians of the form $H = \sum_{n = 0}^{n_{\text{trc}}}\varphi_{n}|n\rangle\langle n|$.
In particular, since the $|0_{L}\rangle$ and $|1_{L}\rangle$ states of rotation-symmetric codes have support on disjoint Fock states (see Eq.~\eqref{eqn:rotSymmetric}), setting $\varphi_{n} = 0$ for $|n\rangle$ in the support of $|0_{L}\rangle$ and $\varphi_{n} = \theta$ for $|n\rangle$ in the support of $|1_{L}\rangle$ automatically yields a phase gate $\bar{Z}(\theta)$ on the logical subspace: $\exp(-iH)|0_{L}\rangle = |0_{L}\rangle$ and $\exp(-iH)|0_{L}\rangle = \exp(-i\theta)|1_{L}\rangle$.
Since the error words $|0_{\hat{a}}\rangle \propto \hat{a}|0_{L}\rangle$ and $|1_{\hat{a}}\rangle \propto \hat{a}|1_{L}\rangle$ also have disjoint Fock support, we can likewise set the $\varphi_{n}$ for the $|n\rangle$ in this error subspace (which are disjoint from the $|n\rangle$ in the logical subspace) so that $H$ also generates a phase gate $\bar{Z}(\theta)$ in the $\hat{a}$ error subspace. The resulting $H$ will then satisfy the ET conditions (Eq.~\eqref{eqn:ETCondition}) for the error set $\mathcal{A}_{1} = \{\hat{I}, \hat{a}\}$.
This construction can easily be extended to more photon losses $\mathcal{A}_{l} = \{\hat{I}, \hat{a}, \dots, \hat{a}^{l}\}$ (as long as $\mathcal{A}_{l}$ is correctable) by repeating this process for each $\hat{a}^{m}$ error subspace, all of which live on disjoint Fock states.
Since $H$ is diagonal in the Fock basis, it will also automatically commute with the $\hat{n}^{k}$ no-jump contributions of the errors in $\mathcal{E}_{l}$ (Eq.~\eqref{eqn:ETSet}), implying that these phase gates are more generally ET to $\mathcal{E}_{l}$.
Thus, phase gates with a saturated ET distance (i.e. matching the order of idle protection) can be achieved for binomial codes with relatively simple diagonal Hamiltonians.

We remark that several parts of this error-transparency construction rely on binomial phase gates not requiring the mixing of different Fock states (i.e. $H$ can be diagonal).
In particular, diagonal matrix elements interact simply under commutation with the $\sqrt{n}$ factor of $\hat{a}$, and the $\hat{n}^{k}$ no-jump contributions, unlike their non-diagonal counterparts that are essential for amplitude-mixing gates which are the focus of our work.

\section{Parity Nested Gate Constructions}
\subsection{Error-transparency to $\mathcal{E}_{l}$ (Theorem 1)}\label{app:thm1}
We start with the proof of Theorem~\ref{thm:SqueezingScaling}, which we repeat here:
\newtheorem*{T1}{Theorem~\ref{thm:SqueezingScaling}}
\begin{T1}
  For any rotation-symmetric code with a finite number of nonzero coefficients $\vec{c} = [c_{0}, \dots, c_{K}]$ which exactly corrects the error set $\mathcal{E}_{l}=\left\{\hat{a}^{m}\hat{n}^{k} \, \big| \, \frac{m}{2} + k \leq \frac{l}{2}\right\}$, we can always construct a parity nested Hamiltonian $H$ generating continuous logical $\bar{X}(\theta)$ gates which is error-transparent to the error set $\mathcal{E}_{l}$ and only requires $\left\lfloor l / 2 \right\rfloor + 1$ orders of squeezing  to implement.
\end{T1}
As discussed in the main text, this construction is based on breaking down the coefficients $[c_{0}, \dots, c_{K}]$ defining the code into the coefficients of the `minimal encoding unit' binomial code which corrects the error set $\mathcal{E}_{l}$.
In particular, we use the following lemma:
\begin{lemma}
  A rotation-symmetric code defined by coefficients $\vec{c} = [c_{0}, \dots, c_{K}]$ exactly corrects the error set $\mathcal{E}_{l}=\left\{\hat{a}^{m}\hat{n}^{k} \, \big| \, \frac{m}{2} + k \leq \frac{l}{2}\right\}$ if and only if the vector of squared coefficients $\left[|c_{0}|^{2}, \dots, |c_{K}|^{2}\right]$ can be written as a linear combination of the set of variably shifted binomial coefficient vectors
  \begin{equation}
    \begin{aligned}
      & \left\{\left[\binom{K'}{0}, \binom{K'}{1}, \dots, \binom{K'}{K'}, 0, \dots, 0\right],\right. \\
      & \left[0, \binom{K'}{0}, \binom{K'}{1}, \dots, \binom{K'}{K'}, 0, \dots, 0\right],  \\
      & \left. \dots, \left[0, \dots, 0, \binom{K'}{0}, \binom{K'}{1}, \dots, \binom{K'}{K'}\right]\right\} ~,
    \end{aligned}
    \label{eqn:binCoeffVecs}
  \end{equation}
  where $K' = l + 1$.
\end{lemma}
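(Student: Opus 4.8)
The plan is to reduce the hypothesis ``exactly corrects $\mathcal{E}_{l}$'' to a single one-parameter family of moment conditions on the squared coefficients $p_{k} := |c_{k}|^{2}$, and then to recognize those conditions, via generating functions, as divisibility of a certain polynomial by $(1-x)^{l+1}$ — which is exactly the dual condition to membership in the span of the shifted binomial vectors of Eq.~\eqref{eqn:binCoeffVecs}. I would work under the (necessary) assumption $l < N$, which holds throughout the paper and is what guarantees that the jump errors $\hat{a}^{m}$, $m \le l$, occupy distinct generalized-parity manifolds.

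First I would strip off the no-jump contributions. Since $\mathcal{A}_{l} = \{\hat{I}, \hat{a}, \dots, \hat{a}^{l}\} \subseteq \mathcal{E}_{l}$, correcting $\mathcal{E}_{l}$ implies correcting $\mathcal{A}_{l}$; conversely, for any rotation-symmetric code correcting $\mathcal{A}_{l}$ already implies correcting the full $\mathcal{E}_{l}$ (the fact recalled in Sec.~\ref{sec:errorChannel} from~\cite{michael_new_2016}), so it suffices to analyze the Knill--Laflamme conditions for $\mathcal{A}_{l}$ alone. Evaluating $\langle\mu_{L}|(\hat{a}^{\dagger})^{m}\hat{a}^{m'}|\nu_{L}\rangle$ for $0 \le m, m' \le l$: the operator shifts photon number by $m - m'$, and since the codewords are supported on multiples of $N$ while $|m - m'| \le l < N$, every cross term with $m \neq m'$ vanishes identically; for $m = m'$ the operator $(\hat{a}^{\dagger})^{m}\hat{a}^{m} = (\hat{n})_{m} := \hat{n}(\hat{n}-1)\cdots(\hat{n}-m+1)$ is diagonal, so the $\mu \neq \nu$ elements vanish by the disjoint Fock support of $|0_{L}\rangle$ and $|1_{L}\rangle$. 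What survives is exactly the diagonal requirement $\langle 0_{L}|(\hat{n})_{m}|0_{L}\rangle = \langle 1_{L}|(\hat{n})_{m}|1_{L}\rangle$ for $m = 0, \dots, l$, i.e.\ $\sum_{k=0}^{K}(-1)^{k}p_{k}\,(kN)_{m} = 0$. Because the falling factorials $\{(x)_{m}\}_{m=0}^{l}$ have distinct degrees and hence form a basis of the polynomials of degree $\le l$, and $k \mapsto kN$ is an affine change of variable, this family is equivalent to $\sum_{k=0}^{K}(-1)^{k}p_{k}\,Q(k) = 0$ for \emph{every} polynomial $Q$ with $\deg Q \le l$.

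The remaining step is a short generating-function computation. Set $A(x) := \sum_{k=0}^{K}(-1)^{k}p_{k}x^{k}$. Writing $\sum_{k}(-1)^{k}p_{k}k^{j} = (\theta^{j}A)(1)$ with $\theta = x\,d/dx$, and using that $\{\theta^{j}\}_{j=0}^{l}$ and $\{(d/dx)^{j}\}_{j=0}^{l}$ span the same space of order-$\le l$ differential operators, the moment conditions are equivalent to $A(1) = A'(1) = \cdots = A^{(l)}(1) = 0$, i.e.\ to $(1-x)^{l+1}\mid A(x)$. On the other side, the signed generating function of the $s$-th shifted binomial vector $v_{s}$ of Eq.~\eqref{eqn:binCoeffVecs} (entries $(v_{s})_{k} = \binom{K'}{k-s}$ with $K' = l+1$, $s = 0, \dots, K - K'$) is $\sum_{k}(-1)^{k}(v_{s})_{k}x^{k} = (-1)^{s}x^{s}(1-x)^{K'}$, so the signed generating functions of arbitrary real combinations $\sum_{s}\alpha_{s}v_{s}$ range over exactly $(1-x)^{l+1}$ times the polynomials of degree $\le K - l - 1$. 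Since $\deg A \le K$, divisibility of $A$ by $(1-x)^{l+1}$ already forces its cofactor into that range, so ``$[p_{0},\dots,p_{K}] \in \mathrm{span}\{v_{0},\dots,v_{K-l-1}\}$'' $\Longleftrightarrow$ ``$(1-x)^{l+1}\mid A(x)$'' $\Longleftrightarrow$ ``the code corrects $\mathcal{E}_{l}$,'' which is the lemma. (Consistency check: the $Q \equiv 1$ instance is the normalization $\sum_{k \text{ even}}p_{k} = \sum_{k \text{ odd}}p_{k}$, which every $v_{s}$, and hence every combination, satisfies.)

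I expect the main obstacle to be the first reduction — collapsing the full Knill--Laflamme tableau for $\mathcal{E}_{l}$, including the no-jump cross terms and the clause that $a_{ij}$ be independent of $\mu,\nu$, down to the clean one-parameter moment family, with all the parity and degree bookkeeping watertight. Once that is in place, the divisibility/duality argument and the binomial generating-function identity are essentially mechanical; the only subtle point there is noting that the degree bound on the cofactor comes for free because $\deg A \le K$.
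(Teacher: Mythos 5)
Your proposal is correct, and while it lands on the same intermediate characterization as the paper --- the alternating moment conditions $\sum_{k}(-1)^{k}k^{l'}|c_{k}|^{2}=0$ for $l'\in\{0,\dots,l\}$, which the paper simply cites from~\cite{michael_new_2016} and you instead derive from the Knill--Laflamme conditions for $\mathcal{A}_{l}$ (correctly flagging the need for $l<N$ so the cross terms vanish by parity) --- the second half of your argument is genuinely different from the paper's. The paper proves that the shifted binomial vectors solve the moment system by an induction on the shift, using Pascal's rule applied to binomial codes of successively larger cutoff $K'+J$, and then closes the equivalence by a dimension count: the $l+1$ moment equations are independent, so the solution space has dimension $K-l$, which matches the number of (manifestly independent) shifted vectors. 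You instead pass to the signed generating function $A(x)=\sum_{k}(-1)^{k}|c_{k}|^{2}x^{k}$, observe that the moment conditions are exactly $(1-x)^{l+1}\mid A(x)$ (your identification of the functionals $A\mapsto(\theta^{j}A)(1)$ with $A\mapsto A^{(j)}(1)$ via the triangular Stirling-number change of basis is the right statement, even if phrased loosely as an operator identity), and that the $s$-th shifted vector has signed generating function $(-1)^{s}x^{s}(1-x)^{l+1}$, so the span of the shifted vectors is precisely the set of vectors whose $A$ is $(1-x)^{l+1}$ times a polynomial of degree at most $K-l-1$ --- which the degree bound $\deg A\le K$ delivers for free. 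This replaces both the Pascal's-rule induction and the rank/dimension count with a single divisibility statement, and is arguably cleaner; what the paper's route buys in exchange is an explicit constructive decomposition of $[|c_{0}|^{2},\dots,|c_{K}|^{2}]$ into shifted binomial pieces, which is the form actually consumed by the recursive Hamiltonian-combination step in the proof of Theorem~\ref{thm:SqueezingScaling}. Your generating-function cofactor $B(x)$ encodes the same decomposition, so nothing downstream is lost.
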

\begin{proof}
A hard energy cutoff order-$N$ rotation-symmetric code exactly corrects the error set $\mathcal{E}_{l}$ if and only if the $\hat{n}^{l'}$ moments are equal for the $|0_{L}\rangle$ and $|1_{L}\rangle$ codewords up to $\hat{n}^{l}$~\cite{michael_new_2016}.
In other words,
\begin{equation}
  \begin{aligned}
    \langle 0_{L} | \hat{n}^{l'} | 0_{L} \rangle = \langle 1_{L} | \hat{n}^{l'} | 1_{L} \rangle ~,
  \end{aligned}
  \label{eqn:equalMoments}
\end{equation}
for all $l' \in \{0, \dots, l\}$, where the $l' = 0$ case ensures the code words are normalized.
This set of conditions can be written as
\begin{equation}
  \begin{aligned}
    \sum_{k \text{ even}}^{K}c_{k}^{*}(Nk)^{l'}c_{k} & = \sum_{k \text{ odd}}^{K}c_{k}^{*}(Nk)^{l'}c_{k} \\
    \iff \sum_{k = 0}^{K}(-1)^{k}k^{l'}|c_{k}|^{2} & = 0 ~,
  \end{aligned}
  \label{eqn:squaredCoeffsCond}
\end{equation}
for all $l' \in \{0, \dots, l\}$, which is notably a linear system of equations in the vector of squared coefficients $\left[|c_{0}|^{2}, \dots, |c_{K}|^{2}\right]$.
Since we know the $K' = l + 1$ binomial code corrects the error set $\mathcal{E}_{l}$, its coefficients (given by the square root of the binomial coefficients) must satisfy Eq.~\eqref{eqn:squaredCoeffsCond}, in particular implying $\left[\binom{K'}{0}, \binom{K'}{1}, \dots \binom{K'}{K'}, 0, \dots, 0\right]$ is one solution to this linear system.
Since the binomial code with cutoff $K' + 1$ also corrects the error set $\mathcal{E}_{l}$, $\left[\binom{K' + 1}{0}, \binom{K' + 1}{1}, \dots \binom{K' + 1}{K' + 1}, 0, \dots, 0\right]$ is also a solution to Eq.~\eqref{eqn:squaredCoeffsCond}, immediately implying by linearity and Pascal's rule that
\begin{widetext}
    \begin{equation}
      \begin{aligned}
        \left[0, \binom{K'}{0}, \binom{K'}{1}, \dots \binom{K'}{K'}, 0, \dots, 0\right] = & \left[\binom{K' + 1}{0}, \binom{K' + 1}{1}, \dots \binom{K' + 1}{K' + 1}, 0, \dots, 0\right] - \\
        & \left[\binom{K'}{0}, \binom{K'}{1}, \dots \binom{K'}{K'}, 0, \dots, 0\right]
      \end{aligned}
    \end{equation}
    is also a solution to Eq.~\eqref{eqn:squaredCoeffsCond}.
    
    We can recursively repeat this process to shift the binomial coefficients to the right by any amount.
    In particular, since the binomial code with cutoff $K' + J$ also corrects the error set $\mathcal{E}_{l}$ for any $J \geq 0$, we can conclude by linearity and standard combinatorial identities that
    \begin{equation}
      \begin{aligned}
        \left[0_{1}, \dots, 0_{J}, \binom{K'}{0}\right. & \left., \binom{K'}{1}, \dots \binom{K'}{K'}, 0_{1}, \dots, 0_{K - K' - J}\right] \\
        ={}& \left[\binom{K' + J}{0}, \binom{K' + J}{1}, \dots \binom{K' + J}{K' + J}, 0_{1}, \dots, 0_{K - K' - J}\right] - \\
        &\sum_{j = 0}^{J - 1}a_{j}\left[0_{1}, 0_{2}, \dots, 0_{j}, \binom{K'}{0}, \binom{K'}{1}, \dots \binom{K'}{K'}, 0_{1}, \dots, 0_{K - K' - J}\right]
      \end{aligned}
      \label{eqn:shiftedBinCoeffsProof}
    \end{equation}
\end{widetext}
solves Eq.~\eqref{eqn:squaredCoeffsCond} if all the less-shifted vectors in this summation already do, where $a_{j}$ are some coefficients, which happen to be equal to $a_{j} = \binom{J}{j}$.
This Eq.~\eqref{eqn:shiftedBinCoeffsProof} follows from the fact that Pascal's rule implies the $(K' + 1)$-th row of Pascal's triangle is equal to the element-wise sum of two copies of the $K'$-th row, which applied recursively in turn implies the $(K' + J)$-th row is equal to some linear combination of different shifted versions of the $K'$-th row -- since $\binom{K' + J}{K' + J} = 1$ this linear combination must only involve one copy of the `most shifted' $K'$-th row (i.e. the vector on the LHS of Eq.~\eqref{eqn:shiftedBinCoeffsProof}).

Since we already know the $J = 0$ case (with no initial padding zeros) solves Eq.~\eqref{eqn:squaredCoeffsCond}, we can therefore conclude by strong induction on $J$ that all the variably shifted binomial coefficient vectors, given in Eq.~\eqref{eqn:binCoeffVecs}, also solve Eq.~\eqref{eqn:squaredCoeffsCond} and thus correspond to codes which exactly correct the error set $\mathcal{E}_{l}$.
Once again by linearity, this means that if the vector of squared coefficients $\left[|c_{0}|^{2}, \dots, |c_{K}|^{2}\right]$ can be written as a linear combination of the variably shifted binomial coefficient vectors, then it will satisfy Eq.~\eqref{eqn:equalMoments} and thus exactly correct the error set $\mathcal{E}_{l}$, proving one direction of the implication in the lemma.
The other direction of the implication follows because each of the $l + 1$ linear equations in Eq.~\eqref{eqn:squaredCoeffsCond} are independent, and there are $K + 1$ variables in $\left[|c_{0}|^{2}, |c_{1}|^{2}, \dots, |c_{K}|^{2}\right]$, so the solution subspace has dimension $K + 1 - (l + 1) = K - l$.
There are $K - l$ vectors in Eq.~\eqref{eqn:binCoeffVecs} and they are linearly independent (since when stacked vertically, they yield a matrix that is automatically in row echelon form with $K - l$ pivots), so they must span the solution subspace, meaning any vector of squared coefficients $\left[|c_{0}|^{2}, |c_{1}|^{2}, \dots, |c_{K}|^{2}\right]$ satisfying Eq.~\eqref{eqn:squaredCoeffsCond}, and thus corresponding to a code which exactly corrects $\mathcal{E}_{l}$, must be some linear combination of the variably shifted binomial coefficient vectors.
\end{proof}

As discussed in the main text, the idea of the proof of Thoerem~\ref{thm:SqueezingScaling} is to use this lemma to decompose the code into a bunch of smaller shifted copies of the cutoff $K' = l + 1$ binomial code, apply Eq.~\eqref{eqn:HETgeneral} to each of these copies of the binomial code (each of which requires $\left\lfloor l / 2 \right\rfloor + 1$ off-diagonals to achieve error-transparency), and then recombine the series of resulting Hamiltonians to form a Hamiltonian that is error-transparent (ET) to the original code and still only utilizes $\left\lfloor l / 2 \right\rfloor + 1$ off-diagonals.
The difficulty here lies in the fact that the lemma deals with the square of the coefficients, not the coefficients themselves, so this decomposition is not a simple linear combination, and therefore this recombination process of forming the final ET Hamiltonian is not entirely straightforward.
Nevertheless, this general procedure still works, and so we now give the proof of Theorem~\ref{thm:SqueezingScaling}, repeated again for convenience, based on this idea.
\begin{T1}
  For any rotation-symmetric code with a finite number of nonzero coefficients $\vec{c} = [c_{0}, \dots, c_{K}]$ which exactly corrects the error set $\mathcal{E}_{l}=\left\{\hat{a}^{m}\hat{n}^{k} \, \big| \, \frac{m}{2} + k \leq \frac{l}{2}\right\}$, we can always construct a parity nested Hamiltonian $H$ generating continuous logical $\bar{X}(\theta)$ gates which is error-transparent to the error set $\mathcal{E}_{l}$ and only requires $\left\lfloor l / 2 \right\rfloor + 1$ orders of squeezing  to implement.
\end{T1}

\begin{proof}
We start with the simpler case of decomposing a given order-$N$ hard energy cutoff rotation-symmetric code $\mathcal{C}$ into two order-$N$ rotation-symmetric codes $\mathcal{A}$ and $\mathcal{B}$ which each also exactly correct $\mathcal{E}_{l}$ and have coefficients given by $\vec{a} = [a_{0}, \dots, a_{K}]$ and $\vec{b} = [b_{0}, \dots, b_{K}]$, respectively.
In particular, following Lemma 1, this decomposition is in terms of the square of each coefficient and says that $|c_{k}|^{2} = \alpha|a_{k}|^{2} + \beta|b_{k}|^{2}$ for all $k \in \{0, \dots, K\}$.
In other words, following Eq.~\eqref{eqn:dualBasisRotSymmetric}, we can write this in bra-ket notation as
\begin{widetext}
    \begin{equation}
        \begin{aligned}
            |\pm_{c}\rangle = \sum_{k = 0}^{K}(\pm1)^{k}e^{i\varphi_{k}}\sqrt{\alpha|\langle kN | \pm_{a}\rangle|^{2} + \beta|\langle kN | \pm_{b}\rangle|^{2}}|kN\rangle ~,
        \end{aligned}
        \label{eqn:codewordConstruction}
    \end{equation}
\end{widetext}
where $|\pm_{a / b / c}\rangle$ are the $|\pm_{L}\rangle$ states of codes $\mathcal{A}$, $\mathcal{B}$, and $\mathcal{C}$, respectively, and $\varphi_{k} = \text{arg}(c_{k})$.
Since the codes $\mathcal{A}$ and $\mathcal{B}$ exactly correct $\mathcal{E}_{l}$, we can use the eigendecomposition construction outlined in Sec.~\ref{sec:derivingGates} to construct Hamiltonians $H^{a}$ and $H^{b}$ which generate continuous logical $\bar{X}$ gates on $\mathcal{A}$ and $\mathcal{B}$, respectively, and are each ET to $\mathcal{E}_{l}$.
We then construct the new Hamiltonian $H^{c}$ one parity manifold at a time according to
\begin{equation}
  \begin{aligned}
    (H_{m}^{c})_{k, k'} = \frac{\alpha (H_{m}^{a})_{k, k'}a_{k}^{*}a_{k'} + \beta (H_{m}^{b})_{k, k'}b_{k}^{*}b_{k'}}{c_{k}^{*}c_{k'}} ~,
  \end{aligned}
  \label{eqn:HConstruction}
\end{equation}
where, critically, we see that if $(H_{m}^{a})_{k, k'} = (H_{m}^{b})_{k, k'} = 0$, then $(H_{m}^{c})_{k, k'} = 0$, meaning $H^{c}$ does not make use of any new matrix elements that were not used before by either $H^{a}$ or $H^{b}$.
We note that if $c_{k}$ or $c_{k'}$ is zero, then when using Eq.~\eqref{eqn:HConstruction}, any arbitrary (but consistent) nonzero values can be plugged in for them instead (e.g. if $c_{4} = c_{7} = 0$, then values of $c_{4} = 1$ and $c_{7} = 2$ could be plugged in whenever $c_{4}$ or $c_{7}$ appear in Eq.~\eqref{eqn:HConstruction}).
In this case, the construction will still work and all the following reasoning (including Eq.~\eqref{eqn:gateDerivation} and Eq.~\eqref{eqn:ETDerivation}) will still hold.

We now show that this new Hamiltonian $H^{c} = \sum_{m = 0}^{N - 1}H_{m}^{c}$ generates continuous logical $\bar{X}(\theta)$ gates on $\mathcal{C}$ and is also ET to $\mathcal{E}_{l}$.
For this first continuous $\bar{X}(\theta)$ gate part, we indeed have:
\begin{widetext}
    \begin{equation}
      \begin{aligned}
        H^{c}|\pm_{c}\rangle & = (H_{0}^{c})|\pm_{c}\rangle \\
        & = \sum_{k = 0}^{K}\left(\sum_{k' = 0}^{K}(H_{0}^{c})_{k, k'}(\pm1)^{k'}c_{k'}\right)|kN\rangle \\
        & = \sum_{k = 0}^{K}\left(\sum_{k' = 0}^{K}\frac{\alpha (H_{0}^{a})_{k, k'}a_{k}^{*}a_{k'} + \beta (H_{0}^{b})_{k, k'}b_{k}^{*}b_{k'}}{c_{k}^{*}c_{k'}}(\pm1)^{k'}c_{k'}\right)|kN\rangle \\
        & = \sum_{k = 0}^{K}\frac{1}{c_{k}^{*}}\left(\alpha a_{k}^{*}\sum_{k' = 0}^{K}(H_{0}^{a})_{k, k'}(\pm1)^{k'}a_{k'} + \beta b_{k}^{*}\sum_{k' = 0}^{K}(H_{0}^{b})_{k, k'}(\pm1)^{k'}b_{k'}\right)|kN\rangle \\
        & = \sum_{k = 0}^{K}\frac{1}{c_{k}^{*}}\left(\alpha a_{k}^{*}\left(\pm(\pm)^{k}a_{k}\right) + \beta b_{k}^{*}\left(\pm(\pm)^{k}b_{k}\right)\right)|kN\rangle \\
        & = \pm\sum_{k = 0}^{K}\frac{(\pm)^{k}}{c_{k}^{*}}(\alpha|a_{k}|^{2} + \beta|b_{k}|^{2})|kN\rangle \\
        & = \pm\sum_{k = 0}^{K}(\pm)^{k}c_{k}|kN\rangle \\
        & = \pm|\pm_{c}\rangle ~,
      \end{aligned}
      \label{eqn:gateDerivation}
    \end{equation}
    meaning $H^{c}$ generates continuous logical $\bar{X}(\theta)$ gates on $\mathcal{C}$, where we have used the fact that $(H_{0}^{a})|\pm_{a}\rangle = \pm|\pm_{a}\rangle$ implies $\sum_{k' = 0}^{K}(H_{0}^{a})_{k, k'}(\pm1)^{k'}a_{k'} = \pm(\pm)^{k}a_{k}$, and likewise for $\mathcal{B}$.
    
    Now, for the error-transparency part, consider an arbitrary error $\hat{a}^{m}\hat{n}^{j} \in \mathcal{E}_{l}$ with $m \leq l \leq N - 1$ and $j \leq \left\lfloor \frac{l - m}{2} \right\rfloor$.
    Since $H^{a}$ and $H^{b}$ are both ET to $\mathcal{E}_{l}$, we have $\left[H^{a}, \hat{a}^{m}\hat{n}^{j}\right]|\pm_{a}\rangle = 0 \Rightarrow (H_{m}^{a})\hat{a}^{m}\hat{n}^{j}|\pm_{a}\rangle = \hat{a}^{m}\hat{n}^{j}(H_{0}^{a})|\pm_{a}\rangle = \pm\hat{a}^{m}\hat{n}^{j}|\pm_{a}\rangle$, meaning $\sum_{k' = 0}^{K}(H_{m}^{a})_{k, k'}\left((\pm1)^{k'}\varepsilon_{k'}^{m, j}a_{k'}\right) = \pm(\pm)^{k}\varepsilon_{k}^{m, j}a_{k'}$, and likewise for $\mathcal{B}$, where $\varepsilon_{k}^{m, j} = \sqrt{(k'N - m + 1)\cdots(k'N - 1)(k'N)}(k'N)^{j}$ is the prefactor on the $k$-th coefficient induced by the error $\hat{a}^{m}\hat{n}^{j}$.
    We can then follow a similar line of reasoning as in Eq.~\eqref{eqn:gateDerivation} to prove that $H^{c}$ is also error-transparent to this arbitrary error $\hat{a}^{m}\hat{n}^{j}$. In particular, we have:
    \begin{equation}
      \begin{aligned}
        H^{c}\hat{a}^{m}\hat{n}^{j}|\pm_{c}\rangle & = (H_{m}^{c})\hat{a}^{m}\hat{n}^{j}|\pm_{c}\rangle \\
        & = \sum_{k = 0}^{K}\left(\sum_{k' = 0}^{K}(H_{m}^{c})_{k, k'}(\pm1)^{k'}\varepsilon_{k'}^{m, j}c_{k'}\right)|kN\rangle \\
        & = \sum_{k = 0}^{K}\frac{1}{c_{k}^{*}}\left(\alpha a_{k}^{*}\sum_{k' = 0}^{K}(H_{m}^{a})_{k, k'}(\pm1)^{k'}\varepsilon_{k'}^{m, j}a_{k'} + \beta b_{k}^{*}\sum_{k' = 0}^{K}(H_{m}^{b})_{k, k'}(\pm1)^{k'}\varepsilon_{k'}^{m, j}b_{k'}\right)|kN\rangle \\
        & = \pm\sum_{k = 0}^{K}\frac{(\pm)^{k}\varepsilon_{k}^{m, j}}{c_{k}^{*}}(\alpha|a_{k}|^{2} + \beta|b_{k}|^{2})|kN\rangle \\
        & = \pm\hat{a}^{m}\hat{n}^{j}|\pm_{c}\rangle \\
        & = \hat{a}^{m}\hat{n}^{j}H^{c}|\pm_{c}\rangle ~,
      \end{aligned}
      \label{eqn:ETDerivation}
    \end{equation}
\end{widetext}
meaning $[H^{c}, \hat{a}^{m}\hat{n}^{j}]|\pm_{c}\rangle = 0$, making $H^{c}$ error-transparent to $\hat{a}^{m}\hat{n}^{j}$, and thus also to the entire error set $\mathcal{E}_{l}$ (since this holds for any $\hat{a}^{m}\hat{n}^{j} \in \mathcal{E}_{l}$).
We remark that this part of the proof relies critically on the fact that all the errors in $\mathcal{E}_{l}$ only contain one nonzero matrix element in each row, which is what allows us to write the action of each error $\hat{a}^{m}\hat{n}^{j}$ as the simple accumulation of a prefactor $\varepsilon_{k}^{m, j}$ on each coefficient, which can then be distributed to each term of our decomposition in Eq.~\eqref{eqn:HConstruction}.
The construction given by Eq.~\eqref{eqn:HConstruction} thus indeed takes two order-$N$ rotation-symmetric codes $\mathcal{A}$ and $\mathcal{B}$, with coefficients $\vec{a} = [a_{0}, \dots, a_{K}]$ and $\vec{b} = [b_{0}, \dots, b_{K}]$ and logical $\bar{X}(\theta)$-generating and error-transparent (to $\mathcal{E}_{l}$) Hamiltonians $H^{a}$ and $H^{b}$, respectively, and outputs a new logical $\bar{X}(\theta)$-generating and error-transparent (to $\mathcal{E}_{l}$) Hamiltonian $H^{c}$ for a new order-$N$ rotation-symmetric code $\mathcal{C}$ whose coefficients $\vec{c} = [c_{0}, \dots, c_{K}]$ satisfy $|c_{k}|^{2} = \alpha|a_{k}|^{2} + \beta|b_{k}|^{2}$ for all $k \in \{0, \dots, K\}$ and some $\alpha$ and $\beta$.

Now consider the rotation-symmetric code from the statement of Theorem~\ref{thm:SqueezingScaling}.
From Lemma 1, there must exist some decomposition of the vector of its squared coefficients $[|c_{0}|^{2}, \dots, |c_{K}|^{2}]$ in terms of the squared coefficients of the variably shifted cutoff $K' = l + 1$ binomial codes.
First using Eq.~\eqref{eqn:HETgeneral} to construct Hamiltonians that generate continuous logical $\bar{X}(\theta)$ gates and are ET to $\mathcal{E}_{l}$ for each of these variably shifted codes, we can then recursively apply Eq.~\eqref{eqn:HConstruction} to combine these Hamiltonians based on the decomposition of the squared coefficients $[|c_{0}|^{2}, \dots, |c_{K}|^{2}]$ to eventually yield our desired error-transparent (to $\mathcal{E}_{l}$) Hamiltonian $H$ acting on the original code.
As we remarked below Eq.~\eqref{eqn:HConstruction}, this construction does not introduce any new matrix elements, so since each composite shifted cutoff $K' = l + 1$ binomial code Hamiltonian only contains nonzero matrix elements on the first $\left\lfloor \frac{K' + 1}{2} \right\rfloor = \left\lfloor l / 2 \right\rfloor + 1$ odd off-diagonals (see Sec.~\ref{sec:derivingGates}), $H$ must also only contain nonzero matrix elements on these same off-diagonals.
Thus, $H$ indeed only requires $\left\lfloor l / 2 \right\rfloor + 1$ orders of squeezing to implement.
\end{proof}

We remark that, although the construction in Eq.~\eqref{eqn:HConstruction} is useful for this proof, once the existence of these error-transparent Hamiltonians is established, there is a more straightforward way to compute them.
In particular, as is done in the proof of Theorem~\ref{thm:SingleSqueezing} in the following section, we can directly solve the linear system whose variables are the nonzero matrix elements of $H$ (on the first $\left\lfloor l / 2 \right\rfloor + 1$ odd off-diagonals) and whose constraining equations are given by the ET conditions in Eq.~\eqref{eqn:ECConditions} with error set $\mathcal{E} = \mathcal{E}_{l}$.
Therefore, the role of the above proof is to guarantee that this overconstrained linear system will always be consistent for any rotation-symmetric code, which is a nontrivial result.

\subsection{Error-transparency to $\mathcal{A}_{l}$ (Theorem 2)}\label{app:thm2}
We now give the derivation of Theorem~\ref{thm:SingleSqueezing}, which we repeat here for convenience:
\newtheorem*{T2}{Theorem~\ref{thm:SingleSqueezing}}
\begin{T2}
  For any rotation-symmetric code with a finite number of nonzero coefficients $\vec{c} = [c_{0}, \dots, c_{K}]$ which exactly corrects the error set $\mathcal{A}_{l} = \{\hat{I}, \hat{a}, \dots, \hat{a}^{l}\}$, consider the parity nested Hamiltonian $H$ formed from nearest-neighbor parity manifold Hamiltonians $H_{m}$ given by
  \begin{equation}
      \begin{aligned}
          (H_{m})_{k, k + 1} & = (H_{m})_{k + 1, k}^{*} \\
          & = \frac{1}{(c_{m})_{k}^{*}(c_{m})_{k + 1}}\sum_{j = 0}^{K}(-1)^{j + k}|(c_{m})_{j}|^{2} ~,
      \end{aligned}
      \label{eqn:thm2App}
  \end{equation}
  where the error coefficients $(c_{m})_{k} = \mathcal{N}_{m}\sqrt{(kN - m + 1)\cdots(kN)}c_{k}$ are defined below Eq.~\eqref{eqn:errorWordsRotSym}. The resulting $H$ then generates continuous logical $\bar{X}(\theta)$ gates, is error-transparent to the error set $\mathcal{A}_{l}$, and only requires a single order of squeezing to implement.
\end{T2}
\begin{proof}
    For a parity nested Hamiltonian $H$ to only require a single order of squeezing, it must only have a single nonzero off-diagonal, meaning each sub-matrix $H_{m}$ must only contain nearest-neighbor couplings. We thus start in the logical parity manifold by constructing -- for any order-$N$ rotation-symmetric code with coefficients $\vec{c}$ -- the unique nearest-neighbor logical subspace Hamiltonian $H_{0}$ which generates continuous $\bar{X}(\theta)$ gates.
    In particular, to ensure $H_{0}$ is Hermitian, nearest-neighbor, and generates the $\bar{X}(\theta)$ gate, we must choose the $(H_{0})_{k, k + 1} = (H_{0})_{k + 1, k}^{*}$ nearest-neighbor matrix elements (for $0 \leq k < K$) which solve the eigenstate equation
    \begin{widetext}
        \begin{equation}
            \begin{aligned}
                H_{0}|\pm_{L}\rangle & = \pm|\pm_{L}\rangle \\
                \iff \left((\pm)^{k - 1}(H_{0})_{k - 1, k}^{*}c_{k - 1} + (\pm)^{k + 1}(H_{0})_{k, k + 1}c_{k + 1}\right)|kN\rangle & = \pm (\pm)^{k}c_{k}|kN\rangle \hspace{10pt} \forall 0 \leq k \leq K \\
                \iff (H_{0})_{k - 1, k}^{*}c_{k - 1} + (H_{0})_{k, k + 1}c_{k + 1} & = c_{k} \hspace{10pt} \forall 0 \leq k \leq K ~,
            \end{aligned}
            \label{eqn:eigenstateEqn}
        \end{equation}    
    \end{widetext}
    where $(H_{0})_{-1, 0} = (H_{0})_{K, K + 1} = 0$, so that the first and last of these equations only have one term on the left hand side.
    Conveniently, we see that all the $\pm$ factors cancel out, yielding the exact same set of $K + 1$ equations for both the $|+_{L}\rangle$ and $|-_{L}\rangle$ eigenstate equations.
    This is an over-constrained system of $K + 1$ real-linear equations in the $K$ variables $(H_{0})_{k, k + 1}$ (for $0 \leq k < K$), which is consistent (as it must be from the Theorem~\ref{thm:SqueezingScaling}) exactly when the coefficients $\vec{c}$ are normalized (Eq.~\eqref{eqn:normalization}).
    The unique solution to Eq.~\eqref{eqn:eigenstateEqn} is then given explicitly by
    \begin{equation}
        \begin{aligned}
            (H_{0})_{k, k + 1} = \frac{1}{c_{k}^{*}c_{k + 1}}\sum_{j = 0}^{k}(-1)^{j + k}|c_{j}|^{2} ~.
        \end{aligned}
        \label{eqn:H0Sol}
    \end{equation}
    
    We can similarly construct the $m$-th error parity manifold Hamiltonian $H_{m}$ by enforcing error-transparency to the corresponding error $\hat{a}^{m} \in \{\hat{a}, \dots, \hat{a}^{l}\}$.
    In particular, since the error words $|\pm_{\hat{a}^{m}}\rangle \propto \hat{a}^{m}|\pm_{L}\rangle$ only live in the $m$-th error parity manifold, we see that $H_{m}$ must satisfy the eigenstate equation $H_{m}|\pm_{\hat{a}^{m}}\rangle = \pm|\pm_{\hat{a}^{m}}\rangle$.
    Further requiring $H_{m}$ to be Hermitian and nearest-neighbor results in the exact same equations as in Eq.~\eqref{eqn:eigenstateEqn}, except with $|\pm_{L}\rangle$ replaced with $|\pm_{\hat{a}^{m}}\rangle$ and thus $c_{k}$ replaced with the error coefficients $(c_{m})_{k} = \mathcal{N}_{m}\sqrt{(kN)(kN - 1)\cdots(kN - m)}c_{k}$.
    This set of equations is then consistent exactly when $\sum_{k = 0}^{K}(-1)^{k}|(c_{m})_{k}|^{2} = 0$, which is exactly the statement that the order-$N$ rotation-symmetric code defined by $\vec{c}$ exactly corrects the error $\hat{a}^{m}$.
    The unique $m$-th error subspace Hamiltonian $H_{m}$ is then given explicitly by
    \begin{equation}
        \begin{aligned}
            (H_{m})_{k, k + 1} = \frac{1}{(c_{m})_{k}^{*}(c_{m})_{k + 1}}\sum_{j = 0}^{k}(-1)^{j + k}|(c_{m})_{j}|^{2} ~,
        \end{aligned}
        \label{eqn:HmSol}
    \end{equation}
    where $k$ ranges from $1$ to $K$ (since $(c_{m})_{0} = 0$ for $m > 0$).    
\end{proof}

\section{Minimum required squeezing orders}\label{sec:minSqueezingOrders}
Our error-transparency constructions generally require the use of high orders of squeezing which are unlikely to be experimentally achievable in the near future.
Since these Hamiltonains are not necessarily the unique ones ET to the full pure-loss channel, we are left with the question of whether these squeezing orders are actually fundamentally required for exactly error-transparent amplitude-mixing gates for rotation-symmetric codes, or if they are simply a byproduct of our restricted parity nested structure and particular construction.
In this section, we make progress on this question by addressing the special case of gates which do not leave the codespace: we show that any unitary which is error-transparent to the full pure-loss channel, continuously maps the codespace to itself, and achieves an amplitude-mixing gate for a rotation-symmetric code must involve at least the squeezing orders used in the Theorem~\ref{thm:SqueezingScaling} Hamiltonian.
In particular, we have the following theorem:
\begin{theorem}
  Any Hamiltonian $H$ which is error-transparent (as according to Def.~\ref{def:ET}) to the error set $\mathcal{E}_{l}=\left\{\hat{a}^{m}\hat{n}^{k} \, \big| \, \frac{m}{2} + k \leq \frac{l}{2}\right\}$ and generates a continuous amplitude-mixing gate for an $\mathcal{E}_{l}$-correcting order-$N$ rotation-symmetric code must have nonzero matrix elements on at least $\lfloor l / 2 \rfloor + 1$ odd off-diagonals in the parity blocks of the Hamiltonian.
  In other words, it requires at least $\lfloor l / 2 \rfloor + 1$ squeezing orders to implement, specifically with orders of the form $Nk$ for $k$ odd.
  \label{thm:SqueezingRequired}
\end{theorem}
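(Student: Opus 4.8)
The plan is to prove the bound using only the logical parity block of $H$ --- the generalized-parity sector containing the codewords --- since among all blocks it carries the most no-jump error words and therefore imposes the tightest constraint. Write $p = \lfloor l/2 \rfloor$. Since the gate maps the codespace to itself, $H$ has a well-defined restriction $h$ to the codespace, and the fact that the gate is amplitude-mixing (rather than a phase gate) means $h$ is not diagonal in $\{|0_L\rangle, |1_L\rangle\}$, i.e. $h_{10} = \langle 1_L|H|0_L\rangle \neq 0$. Because $\hat{n}^{j} \in \mathcal{E}_{l}$ for $0 \le j \le p$, error-transparency gives $[\hat{n}^{j}, H]|\mu_L\rangle = 0$, hence $H\hat{n}^{j}|\mu_L\rangle = \hat{n}^{j}H|\mu_L\rangle = \sum_{\nu} h_{\nu\mu}\hat{n}^{j}|\nu_L\rangle$; forming linear combinations over $j$ upgrades this to
\[
H\, f(\hat{n})|\mu_L\rangle = \sum_{\nu} h_{\nu\mu}\, f(\hat{n})|\nu_L\rangle \qquad \text{for every polynomial } f \text{ with } \deg f \le p .
\]
This single consequence of the ET hypothesis is the engine of the argument.

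Next I would project the $\mu = 0$ instance onto the Fock grid carrying $|1_L\rangle$. The left-hand side then becomes the action of the cross-grid matrix elements of $H$ inside the codeword parity manifold --- exactly the entries on the \emph{odd} off-diagonals of the logical parity block --- and componentwise, for each odd index $k'$ and every $\deg f \le p$,
\[
\sum_{k \text{ even}} (H)_{k'N, kN}\, c_{k}\, f(kN) = h_{10}\, c_{k'}\, f(k'N) ,
\]
with the symmetric statement holding when the two grids are exchanged. Let $\mathcal{D}$ be the set of odd off-diagonals $d$ on which the logical block of $H$ has a nonzero entry; for fixed $k'$ the surviving terms on the left sit only at positions $k = k' \pm d$ with $d \in \mathcal{D}$.

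The decisive move is to test this identity at the extremal index $k_{\min}$, the smallest index with $c_{k_{\min}} \neq 0$ (assume $k_{\min}$ odd; the even case is identical after swapping $|0_L\rangle \leftrightarrow |1_L\rangle$ and using the conjugate relation). Since $c_{k} = 0$ for $k < k_{\min}$, every surviving term on the left lives at a position $k = k_{\min} + d$ with $d \in \mathcal{D}$, so the left-hand side is a linear combination of the values of $f$ at a set of at most $|\mathcal{D}|$ points, none of which equals $k_{\min}N$. If $|\mathcal{D}| \le p$ one may choose a polynomial $f$ of degree $\le p$ that vanishes at all of those points while $f(k_{\min}N) = 1$ (Lagrange interpolation through at most $p+1$ nodes); this forces the left-hand side to vanish while the right-hand side equals $h_{10} c_{k_{\min}} \neq 0$, a contradiction. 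Hence $|\mathcal{D}| \ge p + 1$: $H$ has nonzero matrix elements on at least $\lfloor l/2 \rfloor + 1$ odd off-diagonals of its parity block, i.e. on off-diagonals $Nd$ of $H$ with $d$ odd, so at least that many squeezing orders are required.

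I expect the main subtlety --- and the reason a naive approach undershoots --- to be the choice of the row at which the identity is tested: at a generic odd $k'$ both $k' - d$ and $k' + d$ contribute, which yields only a bound of roughly $p/2$, whereas the one-sidedness of the band at the extremal row $k_{\min}$ is exactly what removes that factor of two and reproduces the $\lfloor l/2 \rfloor + 1$ count achieved constructively in Theorem~\ref{thm:SqueezingScaling}. A secondary point to dispatch along the way is the reduction to the logical block: one should note either that error-transparency to $\hat{a}^{m}$ for $m \le l$ already forces the relevant matrix elements of $H$ to respect the generalized-parity grading, or simply that any cross-manifold matrix elements of a non-parity-nested $H$ would only add further squeezing orders, so restricting attention to the zeroth block is without loss of generality.
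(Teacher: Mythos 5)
Your argument is correct in substance and takes a genuinely different, and considerably shorter, route than the paper's. The paper proceeds in two stages: first a lemma (Lemma~\ref{lemma:reduceNaiveH}) showing that any Hamiltonian satisfying the closure-style ET condition of Def.~\ref{def:ET} must agree, on the odd off-diagonals of each parity block, with a ``naive'' eigenprojector construction $\beta\sum_{E}\left(|\psi_{E}\rangle\langle\psi_{E}|-|\psi_{E}^{\perp}\rangle\langle\psi_{E}^{\perp}|\right)+\cdots$ (this requires analyzing how $H$ acts on the error spaces $V_{\psi}$ and $V_{\psi^{\perp}}$ and a basis-counting argument for the error operators); and second, a rank computation on a $(2\lfloor l/2\rfloor+2)$-row augmented linear system for the first row of $H_{0}$, resolved by factoring a determinant into three Vandermonde blocks. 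You collapse both stages into a single identity: projecting $H\hat{n}^{j}|0_{L}\rangle$ onto the odd Fock grid of the logical manifold isolates exactly the odd-off-diagonal entries (the diagonal and even off-diagonals connect the even grid to itself and drop out automatically), and evaluating at the extremal row $k_{\min}$, where the band is one-sided, lets a degree-$\le\lfloor l/2\rfloor$ Lagrange polynomial annihilate every surviving term on the left while the right-hand side stays $h_{10}c_{k_{\min}}\neq 0$. The paper also works with an extremal row ($\langle 0|$), but because its system retains the even-off-diagonal unknowns it needs the full Vandermonde machinery; your choice of the $Z$ basis plus the cross-grid projection is what buys the simplification, and your $k_{\min}$ normalization also handles infinite-support codes and vanishing low-lying coefficients cleanly.

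One point must be patched before this proves the theorem as stated. The theorem assumes only Def.~\ref{def:ET}, under which $[H,\hat{n}^{j}\mathcal{P}_{C}]$ is not required to vanish on the codespace but merely to lie in $V_{\mathcal{E}_{l}}$; your opening step $[\hat{n}^{j},H]|\mu_{L}\rangle=0$ is the stricter main-text condition, so as written you prove the bound only for that narrower class of Hamiltonians. Fortunately your projection rescues the argument: the slack terms allowed by Def.~\ref{def:ET} have the form $\sum\alpha_{m'k'}\hat{a}^{m'}\hat{n}^{k'}|0_{L}\rangle$ with $m'\le l<N$, and each such term either lives in a different generalized-parity manifold ($m'>0$) or on the even Fock grid ($m'=0$), so every one of them vanishes when you take the $\langle k'N|$ component with $k'$ odd. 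You should say this explicitly --- it is precisely the content that the paper's Lemma~\ref{lemma:reduceNaiveH} exists to supply, and without it your first displayed equation is unjustified under the hypothesis actually in force. With that sentence added (and the remark that Hermiticity gives $h_{01}=h_{10}^{*}\neq 0$ for the even-$k_{\min}$ case), the proof is complete and sound.
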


Here, we are implicitly dealing with time-independent Hamiltonians and thus single-pulse continuous amplitude-mixing gates.
This is sufficient since any general unitary which maps the codespace to itself at each time step and is not a phase gate must be built up from single-pulse continuous logical gates, at least one of which must must rotate around a non-$Z$ axis of the logical Bloch sphere (i.e. be amplitude-mixing).
Also, in order for this theorem to hold in broader contexts, we use a more general definition of error-transparency than the one presented in the main text~\cite{tsunoda_error-detectable_2023,xu_fault-tolerant_2024}:
\begin{definition}
    A Hamiltonian $H$ is \textbf{error-transparent} to an error set $\mathcal{E}$ if $[H, E] \in V_{\mathcal{E}}$ for any error $E \in V_{\mathcal{E}}$, where $V_{\mathcal{E}} = \text{span}\left(\left\{E'\mathcal{P}_{C} \, | \, E' \in \mathcal{E}\right\}\right)$ is the vector subspace of linear operators spanned by the errors restricted to the codespace ($\mathcal{P}_{C} = |0_{L}\rangle\langle 0_{L}| + |1_{L}\rangle\langle1_{L}|$ is the codespace projector).
    \label{def:ET}
\end{definition}
Note that, by the linearity of the Knill-Laflamme conditions and their restriction to the codespace, if $\mathcal{E}$ is correctable, then so is $V_{\mathcal{E}}$.
More specifically, the errors whose action on the codespace falls in $V_{\mathcal{E}}$ is precisely the set of errors that is corrected by any recovery operation which corrects $\mathcal{E}$.
The definition requires this set of errors to be closed under commutation with $H$, meaning the errors will commute `transparently' through the gate precisely in the sense that any idling recovery operation still corrects the errors regardless of when they occur during the gate.
This definition of error-transparency has the advantage of being linear in the Hamiltonian $H$ and the error set $\mathcal{E}$, and it generalizes the one presented in the main text and the one from~\cite{ma_error-transparent_2020}.
It is similar to the full error-closure conditions~\cite{tsunoda_error-detectable_2023}, which allow $V_{\mathcal{E}}$ to be replaced by any correctable error set that contains $V_{\mathcal{E}}$.
While satisfying the error-closure conditions still requires the Hamiltonian to map the errors to mutually correctable errors, the resulting errors are not necessarily correctable using the idling recovery operation, so the errors are no longer `transparent' in the sense of the main text that they are effectively the same as when idling.
In particular, weakening the definition to full error-closure may require large changes to recovery operations, and different recovery operations may be required following different gates.

Having noted the caveats and definitions of Theorem.~\ref{thm:SqueezingRequired}, we briefly remark on its implications before proceeding to the proof.
Essentially, this theorem reveals that all the squeezing orders in our constructions, or higher ones, are required if we want to achieve exactly error-transparent amplitude-mixing gates for rotation-symmetric codes.
Note that, unlike in Theorems.~\ref{thm:SqueezingScaling}-\ref{thm:SingleSqueezing}, this result holds for all exactly-correcting rotation-symmetric codes, not just ones with a finite number of fock-grid coefficients.
The only ways to possibly circumvent this theorem are to weaken to the error-closure conditions, which would require changes to the recovery operations, or allow the unitary to temporarily map the codespace elsewhere, where the errors are no longer guaranteed to be correctable, resulting in a much less straightforward concept of error-transparency.

\subsection{Reducing to Naive ET Construction}
The proof of Theorem.~\ref{thm:SqueezingRequired} hinges on a similar naive ET construction as in Eq.~\eqref{eqn:addedTerms}.
In particular, any continuous amplitude-mixing gate has logical state eigenvectors $|\psi_{L}\rangle = |0_{L}\rangle + \alpha|1_{L}\rangle$ and $|\psi_{L}^{\perp}\rangle = \alpha^{*}|0_{L}\rangle - |1_{L}\rangle$ for $\alpha \neq 0$, so we can write an ET Hamiltonian $H$ as
\begin{equation}
    \begin{aligned}
        H = \beta\sum_{E \in \mathcal{E}_{l}}\left(|\psi_{E}\rangle\langle\psi_{E}| - |\psi_{E}^{\perp}\rangle\langle\psi_{E}^{\perp}|\right) + \cdots ~,
    \end{aligned}
    \label{eqn:naiveET}
\end{equation}
where $\beta \neq 0$ is some overall scaling factor, $|\psi_{E}\rangle$ and $|\psi_{E}^{\perp}\rangle$ are the orthogonalized error words of $|\psi_{L}\rangle$ and $|\psi_{L}^{\perp}\rangle$, respectively, as in Eq.~\eqref{eqn:errorWords}, and $\cdots$ refers to further orthogonal terms of the eigendecomposition.
In other words, defining $V_{\psi} = \text{span}\left(\left\{E|\psi_{L}\rangle \, | \, E \in \mathcal{E}_{l}\right\}\right)$ to be the error space of $|\psi_{L}\rangle$, and $V_{\psi^{\perp}} = \text{span}\left(\left\{E|\psi_{L}^{\perp}\rangle \, | \, E \in \mathcal{E}_{l}\right\}\right)$ to be the error space of $|\psi_{L}^{\perp}\rangle$, we require $H$ to have $V_{\psi}$ in its $+1$ eigenspace and $V_{\psi^{\perp}}$ in its $-1$ eigenspace, where the Knill-Laflamme conditions guarantee that $V_{\psi}$ and $V_{\psi^{\perp}}$ are orthogonal.
We say this parity nested $H$ is naively error-transparent because it satisfies every known definition of error-transparency.
Beyond this, since adding multiples of $\mathcal{P}_{C}$ or $E\mathcal{P}_{C}$ to $H$ results in the same error-transparent gate just with different logical or error global phases, this construction even stipulates the precise global phase that is imparted on the codespace and each error space.
However, we find that it is sufficient to consider Hamiltonians of this naive form, meaning a restricted definition of error-transparency and our restricted parity nested structure do not effectively inhibit the ET Hamiltonians we can construct.
In particular, we split the proof of Theorem.~\ref{thm:SqueezingRequired} into two parts: in this section, we begin by proving that any Hamiltonian which generates a continuous amplitude-mixing gate and obeys the general ET conditions of Def.~\ref{def:ET} has the same matrix elements on the odd off-diagonals of each parity block as the naive Hamiltonian from Eq.~\eqref{eqn:naiveET}; in the following section, we then prove that any Hamiltonian having the eigenstructure of Eq.~\eqref{eqn:naiveET} must have nonzero matrix elements on at least $\lfloor l / 2 \rfloor + 1$ odd off-diagonals in the parity blocks of the Hamiltonian.

We now prove the following lemma:
\begin{lemma}
    Any Hamiltonian $H$ which generates a continuous amplitude-mixing gate and is error-transparent (Def.~\ref{def:ET}) to the full pure-loss channel $\mathcal{E}_{l}$ has the same matrix elements on the odd off-diagonals of each parity block as some Hamiltonian of the form of Eq.~\eqref{eqn:naiveET}.
    \label{lemma:reduceNaiveH}
\end{lemma}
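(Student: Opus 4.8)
The plan is to reduce, after a harmless normalization, any such $H$ to a naive Hamiltonian of the form \eqref{eqn:naiveET} in three moves: normalize the logical eigenvalues, split off the part of $H$ that lives outside the code and error spaces, and then check parity block by parity block that what is left over differs from $\beta(\Pi_\psi-\Pi_{\psi^\perp})$ only by a ``sublattice-diagonal'' operator, which is invisible on the odd off-diagonals. For the normalization: amplitude-mixing means $H$ restricted to the codespace is Hermitian with eigenvectors $|\psi_L\rangle=|0_L\rangle+\alpha|1_L\rangle$, $|\psi_L^\perp\rangle=\alpha^*|0_L\rangle-|1_L\rangle$ ($\alpha\neq0$) and distinct eigenvalues $\lambda_\pm$; I would pass to $H'=H-\tfrac12(\lambda_++\lambda_-)\mathcal P_C$, which, using $I\in\mathcal E_l$ and the exact Knill--Laflamme identity $\mathcal P_C E\mathcal P_C\propto\mathcal P_C$ (Eq.~\eqref{eqn:ECConditions}), still satisfies Def.~\ref{def:ET} while now $H'|_{\mathcal C}$ has eigenvalues $\pm\beta$, $\beta=\tfrac12(\lambda_+-\lambda_-)\neq0$. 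Because $\mathcal P_C$ is supported entirely in the logical parity block and $|0_L\rangle,|1_L\rangle$ occupy disjoint sublattices there, $\mathcal P_C$ has no odd off-diagonal matrix elements whatsoever, so $H$ and $H'$ have identical odd off-diagonals and it suffices to treat $H'$.

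Next I would split off the orthogonal complement. Let $V_\psi=\mathrm{span}\{E|\psi_L\rangle:E\in\mathcal E_l\}$ and $V_{\psi^\perp}$ its analogue for $|\psi_L^\perp\rangle$; these are orthogonal by Eq.~\eqref{eqn:ECConditions}, and $V=V_\psi\oplus V_{\psi^\perp}$ decomposes as $\bigoplus_m R_m$ with each $R_m$ lying inside parity block $m$. Applying $[H',E\mathcal P_C]\in V_{\mathcal E_l}$ to $|\psi_L\rangle$, together with $H'|\psi_L\rangle=\beta|\psi_L\rangle\in\mathcal C$, gives $H'(V_\psi)\subseteq V_\psi$; likewise $H'(V_{\psi^\perp})\subseteq V_{\psi^\perp}$, so by Hermiticity $H'=H'|_V\oplus H'|_{V^\perp}$. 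Since the projector onto parity block $m$ commutes with the projector onto $V$, the diagonal parity block of $H'$ is the sum of $\Pi_{R_m}H'\Pi_{R_m}$ and a piece built only from $H'|_{V^\perp}$; choosing the ``$\cdots$'' in \eqref{eqn:naiveET} to be the spectral decomposition of the Hermitian operator $H'|_{V^\perp}$ makes the latter pieces coincide for $H'$ and the naive Hamiltonian. The lemma thus reduces to showing, for each $m$, that $\Pi_{R_m}H'\Pi_{R_m}$ and $\beta\big(\Pi_{R_m^\psi}-\Pi_{R_m^{\psi^\perp}}\big)$ have the same odd off-diagonals, where $R_m^\psi=V_\psi\cap(\text{block }m)$ and $R_m^{\psi^\perp}=V_{\psi^\perp}\cap(\text{block }m)$.

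This last reduction is the crux. Fixing $m$, take $\{v_k\}$ to be the non-redundant vectors $\hat a^m\hat n^k|\psi_L\rangle$, a basis of $R_m^\psi$, and $\{v_k'\}$ the analogues from $|\psi_L^\perp\rangle$, a basis of $R_m^{\psi^\perp}$. Writing the ET condition as $[H',\hat a^m\hat n^k\mathcal P_C]=\sum_{E'}c^{(k)}_{E'}E'\mathcal P_C$ with fixed coefficients and projecting its action on $|\psi_L\rangle$, respectively $|\psi_L^\perp\rangle$, onto block $m$ shows that $H'$ acts on $R_m^\psi$ as $\beta$ plus the matrix $C=\big[c^{(k)}_{\hat a^m\hat n^{k'}}\big]$ in the basis $\{v_k\}$, and on $R_m^{\psi^\perp}$ as $-\beta$ plus the \emph{same} matrix $C$ in the basis $\{v_k'\}$ --- the same because these coefficients belong to the operator $[H',\hat a^m\hat n^k\mathcal P_C]$, not to the state it is applied to. Hence $\Gamma_m:=\Pi_{R_m}H'\Pi_{R_m}-\beta\big(\Pi_{R_m^\psi}-\Pi_{R_m^{\psi^\perp}}\big)$ acts as $C$ on $R_m^\psi$ in $\{v_k\}$ and on $R_m^{\psi^\perp}$ in $\{v_k'\}$. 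Decomposing $v_k=a_k+\alpha b_k$ and $v_k'=\alpha^* a_k-b_k$, with $a_k$ on the even and $b_k$ on the odd sublattice of block $m$, and using $\alpha\neq0$ so that $\{v_k\}\cup\{v_k'\}$ and $\{a_k\}\cup\{b_k\}$ span the same space, one verifies $\Gamma_m=\Lambda_0\oplus\Lambda_1$, where $\Lambda_0$ has matrix $C$ in $\{a_k\}$ and $\Lambda_1$ has matrix $C$ in $\{b_k\}$. Therefore $\Gamma_m$ is block-diagonal in the even/odd sublattice grading and carries no odd off-diagonal elements, which finishes the argument.

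I expect the main obstacle to be exactly this core step --- pinning down that one fixed family of commutator coefficients governs the $\psi$- and $\psi^\perp$-error blocks identically, so that the ``$C$'' piece cancels out of the odd-off-diagonal (even/odd sublattice) coupling while only the ``$\pm\beta$'' piece survives; this is also precisely why the eigenvalue normalization to $\pm\beta$ must be carried out first. The remaining ingredients should be routine once the structure above is in place: linear independence of $\{a_k\}$ and $\{b_k\}$ (a Vandermonde argument on $\{\hat n^k|\mu_L\rangle\}$ together with injectivity of $\hat a^m$ on the relevant Fock support, both secured by exact correction of $\mathcal E_l$), and the handling of parity blocks in which some $\mathcal E_l$-errors act redundantly on the codespace, which is dealt with by simply using the genuine error-word basis of the (possibly smaller) $R_m^\psi$ and $R_m^{\psi^\perp}$.
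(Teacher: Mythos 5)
Your proof is correct, and its skeleton matches the paper's: show that $H$ (suitably normalized) preserves the error spaces $V_{\psi}$ and $V_{\psi^{\perp}}$, absorb the action on the orthogonal complement into the ``$\cdots$'' of Eq.~\eqref{eqn:naiveET}, and then show that the residual after subtracting $\beta\left(\Pi_{\psi}-\Pi_{\psi^{\perp}}\right)$ carries no odd off-diagonal matrix elements within the parity blocks. The execution of the central step differs, though, in a way worth noting. The paper diagonalizes $H$ on $V_{\psi}$, picks error operators $E_{i}$ whose images $E_{i}|\psi_{L}\rangle$ form an eigenbasis, and uses the fact that $\{I\}\cup\{E_{i}\}$ is an operator basis of $V_{\mathcal{E}_{l}}$ to force $E_{i}|\psi_{L}^{\perp}\rangle$ to also be eigenvectors with a rigidly paired eigenvalue $\lambda_{i^{\perp}}=\lambda_{i}-\lambda_{\psi}+\lambda_{\psi^{\perp}}$; the residual is then identified as $\sum_{i}\gamma_{i}E_{i}\mathcal{P}_{C}E_{i}^{\dagger}$, which is manifestly even-off-diagonal because $E_{i}|0_{L}\rangle$ and $E_{i}|1_{L}\rangle$ occupy disjoint sublattices. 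You avoid the diagonalization entirely: you extract the commutator coefficient matrix $C$ from $[H',\hat{a}^{m}\hat{n}^{k}\mathcal{P}_{C}]=\sum_{E'}c^{(k)}_{E'}E'\mathcal{P}_{C}$, observe that the \emph{same} $C$ governs the action on both $\{v_{k}\}$ and $\{v_{k}'\}$ (the same operator identity applied to two codespace vectors), and then the little linear-algebra computation $x+\alpha y=0$, $\alpha^{*}x-y=0\Rightarrow x=y=0$ shows the residual preserves the even/odd sublattice grading. Both arguments hinge on the identical key fact --- the expansion coefficients live at the operator level, not the state level --- but yours is arguably more elementary (no appeal to simultaneous diagonalizability or to the eigenvalue-pairing relation) and makes more transparent why the normalization to $\pm\beta$ must precede everything else. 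One small bookkeeping point: your final sublattice argument does not actually need the Vandermonde linear-independence of $\{a_{k}\}$ that you flag as a remaining ingredient --- the cancellation $x=y=0$ and the orthogonal decomposition $R_{m}=\mathrm{span}\{a_{k}\}\oplus\mathrm{span}\{b_{k}\}$ already suffice --- so that machinery can be dropped.
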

\begin{proof}
    From Def.~\ref{def:ET}, we must have $[H, E\mathcal{P}_{C}] \in V_{\mathcal{E}_{l}} = \text{span}\left(\left\{E'\mathcal{P}_{C} \, | \, E' \in \mathcal{E}_{l}\right\}\right)$ for any $E \in V_{\mathcal{E}_{l}}$.
    Writing this out in terms of the logical eigenvectors $|\psi_{L}\rangle$ and $|\psi_{L}^{\perp}\rangle$ of the targeted continuous amplitude-mixing gate, we have:
    \begin{widetext}
        \begin{equation}
            \begin{aligned}
                HE\left(|\psi_{L}\rangle\langle\psi_{L}| + |\psi_{L}^{\perp}\rangle\langle\psi_{L}^{\perp}|\right) - E\left(|\psi_{L}\rangle\langle\psi_{L}| + |\psi_{L}^{\perp}\rangle\langle\psi_{L}^{\perp}|\right)H = \left(\sum_{E' \in \mathcal{E}_{l}}\alpha_{E'}E'\right)\left(|\psi_{L}\rangle\langle\psi_{L}| + |\psi_{L}^{\perp}\rangle\langle\psi_{L}^{\perp}|\right) ~,
            \end{aligned}
        \end{equation}
        for some linear combination coefficients $\alpha_{E'}$ for each $E' \in \mathcal{E}_{l}$.
        Abbreviating $A = \sum_{E' \in \mathcal{E}_{l}}\alpha_{E'}E'$ and letting $\lambda_{\psi}$ and $\lambda_{\psi^{\perp}}$ be the real eigenvalues of $|\psi_{L}\rangle$ and $|\psi_{L}^{\perp}\rangle$, respectively, with respect to $H$, we can write this as
        \begin{equation}
            \begin{aligned}
                \left(HE|\psi_{L}\rangle - \lambda_{\psi}E|\psi_{L}\rangle\right)\langle\psi_{L}| + \left(HE|\psi_{L}^{\perp}\rangle - \lambda_{\psi^{\perp}}E|\psi_{L}^{\perp}\rangle\right)\langle\psi_{L}^{\perp}| = \left(A|\psi_{L}\rangle\right)\langle\psi_{L}| + \left(A|\psi_{L}^{\perp}\rangle\right)\langle\psi_{L}^{\perp}| ~,
            \end{aligned}
            \label{eqn:errorSpaceClosureEqn}
        \end{equation}
    \end{widetext}
    meaning we must have $HE|\psi_{L}\rangle - \lambda_{\psi}E|\psi_{L}\rangle = A|\psi_{L}\rangle$ and $HE|\psi_{L}^{\perp}\rangle - \lambda_{\psi^{\perp}}E|\psi_{L}^{\perp}\rangle = A|\psi_{L}^{\perp}\rangle$.
    Since $\lambda_{\psi}E|\psi_{L}\rangle, A|\psi_{L}\rangle \in V_{\psi}$ and $\lambda_{\psi^{\perp}}E|\psi_{L}^{\perp}\rangle, A|\psi_{L}^{\perp}\rangle \in V_{\psi^{\perp}}$, we see from vector subspace closure that $HE|\psi_{L}\rangle \in V_{\psi}$ and $HE|\psi_{L}^{\perp}\rangle \in V_{\psi^{\perp}}$.
    Since this is true for any $E \in V_{\mathcal{E}_{l}}$, we see that $H$ must map both $V_{\psi}$ and $V_{\psi^{\perp}}$ to themselves.

    Therefore, there must exist some orthogonal basis of $V_{\psi}$ which are all eigenvectors of $H$.
    In particular, there must exist $\text{dim}(V_{\psi}) - 1$ errors $E_{i} \in V_{\mathcal{E}_{l}}$ such that $\mathcal{B} = \{|\psi_{L}\rangle\} \cup \{E_{i}|\psi_{L}\rangle \, | \, i \in \{1, \dots, \text{dim}(V_{\psi}) - 1\}\}$ are orthogonal eigenvectors of $H$ with eigenvalues $\lambda_{i}$.
    Re-deriving Eq.~\eqref{eqn:errorSpaceClosureEqn} with $E = E_{i}$ and using the fact that these $E_{i}|\psi_{L}\rangle$ are eigenvectors of $H$, we see that
    \begin{equation}
        \begin{aligned}
            \left(\lambda_{i} - \lambda_{\psi}\right)E_{i}|\psi_{L}\rangle\langle\psi_{L}| + \left(H - \lambda_{\psi^{\perp}}\right)E_{i}|\psi_{L}^{\perp}\rangle\langle\psi_{L}^{\perp}|
        \end{aligned}
        \label{eqn:errorSpaceClosureEigen}
    \end{equation}
    must be in $V_{\mathcal{E}_{l}}$.
    Consider the set of error \emph{operators} corresponding to the basis $\mathcal{B}$ of $V_{\psi}$: $\mathcal{B}_{\text{ops}} = \{I\} \cup \{E_{i} \, | \, i \in \{1, \dots, \text{dim}(V_{\psi}) - 1\}\}$.
    Since the elements of $\mathcal{B}$ are linearly independent and $V_{\psi}$ and $V_{\psi^{\perp}}$ are orthogonal, the elements of $\mathcal{B}_{\text{ops}}$ are also linearly independent.
    In addition, the number of elements in $\mathcal{B}_{\text{ops}}$ is equal to the dimension of $V_{\mathcal{E}_{l}}$ as an operator subspace, allowing us to conclude that $\mathcal{B}_{\text{ops}}$ is a basis for $V_{\mathcal{E}_{l}}$.
    Using this basis to construct the right-hand side of Eq.~\eqref{eqn:errorSpaceClosureEigen}, the first term of Eq.~\eqref{eqn:errorSpaceClosureEigen} dictates that the right-hand side must be equal to $(\lambda_{i} - \lambda_{\psi})E_{i}\mathcal{P}_{C}$.
    From this, we can conclude that $\left(H - \lambda_{\psi^{\perp}}\right)E_{i}|\psi_{L}^{\perp}\rangle = \left(\lambda_{i} - \lambda_{\psi}\right)E_{i}|\psi_{L}^{\perp}\rangle$, implying that $E_{i}|\psi_{L}^{\perp}\rangle$ is also an eigenvector of $H$, with eigenvalue $\lambda_{i^{\perp}} = \lambda_{i} - \lambda_{\psi} + \lambda_{\psi^{\perp}}$.
    Since the $E_{i}|\psi_{L}^{\perp}\rangle$ are also orthogonal (from the orthogonality of the $E_{i}|\psi_{L}\rangle$ and the Knill-Leflamme conditions), we have thus determined the eigenstructure of $H$ on the error spaces $V_{\psi}$ and $V_{\psi^{\perp}}$.
    In particular, the eigenstructure of $H$ must have the form
    \begin{equation}
        \begin{aligned}
            H = \sum_{i = 0}^{d}\left(\lambda_{i}|\psi_{i}\rangle\langle\psi_{i}| + \lambda_{i^{\perp}}|\psi_{i}^{\perp}\rangle\langle\psi_{i}^{\perp}|\right) + \cdots ~,
        \end{aligned}
    \end{equation}
    where $|\psi_{i}\rangle = E_{i}|\psi_{L}\rangle$, $|\psi_{i}^{\perp}\rangle = E_{i}|\psi_{L}^{\perp}\rangle$, $\lambda_{0} = \lambda_{\psi}$, $\lambda_{0^{\perp}} = \lambda_{\psi^{\perp}}$, $E_{0} = I$, and $d = \text{dim}\left(V_{\mathcal{E}_{l}}\right)$.
    The eigenvalue relation $\lambda_{i^{\perp}} = \lambda_{i} - \lambda_{\psi} + \lambda_{\psi^{\perp}}$ now stipulates that $\lambda_{0} - \lambda_{0^{\perp}} = \lambda_{1} - \lambda_{1^{\perp}} = \cdots = \lambda_{d} - \lambda_{d^{\perp}} = 2\beta$ for some constant $\beta \neq 0$.
    
    Writing the eigenvalues in the form $\lambda_{i} = \gamma_{i} + \beta$ and $\lambda_{i^{\perp}} = \gamma_{i} - \beta$ for constants $\gamma_{i}$, we can write $H$ as
    \begin{equation}
        \begin{aligned}
            H = & \beta\sum_{i = 0}^{d}\left(|\psi_{i}\rangle\langle\psi_{i}| - |\psi_{i}^{\perp}\rangle\langle\psi_{i}^{\perp}|\right) + \\
            & \sum_{i = 0}^{d}\gamma_{i}\left(|\psi_{i}\rangle\langle\psi_{i}| + |\psi_{i}^{\perp}\rangle\langle\psi_{i}^{\perp}|\right) + \cdots ~,
        \end{aligned}
        \label{eqn:gammaBetaEigenstructure}
    \end{equation}
    where, as before, $\cdots$ refers to terms of the eigendecomposition orthogonal to $V_{\psi}$ and $V_{\psi^{\perp}}$.
    We can recognize each of the terms in this second summation as proportional to $E_{i}\mathcal{P}_{C}E_{i}^{\dagger}$.
    Since the $E_{i}$ errors are correctable and arise from photon-loss plus dephasing, they can be written as some linear combination of the errors $\hat{a}^{m}\hat{n}^{k}\mathcal{P}_{C}$ with $m < N$.
    Since an error of this form only has nonzero matrix elements on the $m$-th off-diagonal, the $E_{i}$ can only have nonzero matrix elements on the first $N - 1$ off-diagonals.
    Thus, since the $|0_{L}\rangle$ and $|1_{L}\rangle$ codewords of rotation-symmetric codes have alternating support within the logical parity subspace, after projecting onto the $m$-th error parity subspace, the error words $E_{i}|0_{L}\rangle$ and $E_{i}|1_{L}\rangle$ will also have support on alternating Fock states within the $m$-th error parity subspace.
    Therefore, similar to the reasoning in Sec.~\ref{sec:derivingGates}, each $E_{i}|0_{L}\rangle\langle0_{L}|E_{i}^{\dagger} + E_{i}|1_{L}\rangle\langle1_{L}|E_{i}^{\dagger}$ pair of terms will only contribute nonzero matrix elements to the \emph{even} off-diagonals within each parity subspace.
    
    Thus, the second summation in Eq.~\eqref{eqn:gammaBetaEigenstructure} does not involve matrix elements on the odd off-diagonals in each parity block, so subtracting off this second summation from $H$ and dividing all the terms by the nonzero $\beta$, we see that $H$ has the same matrix elements on the odd off-diagonals of each parity block, up to the constant factor $\beta$, as the Hamiltonian $\sum_{i = 0}^{d}\left(|\psi_{i}\rangle\langle\psi_{i}| - |\psi_{i}^{\perp}\rangle\langle\psi_{i}^{\perp}|\right) + \cdots$, where the $\cdots$ terms are the same terms as in Eq.~\eqref{eqn:gammaBetaEigenstructure}.
    We can see that this summation eigenstructure has $V_{\psi}$ in its $+1$ eigenspace and $V_{\psi}^{\perp}$ in its $-1$ eigenspace, so is in fact exactly equivalent to a Hamiltonian of the form of Eq.~\eqref{eqn:naiveET}.
    This concludes the proof of Lemma.~\ref{lemma:reduceNaiveH}.
\end{proof}

\subsection{Naive ET required squeezing}
To finish proving Theorem.~\ref{thm:SqueezingRequired}, it remains to show that any naive ET Hamiltonian of the form of Eq.~\eqref{eqn:naiveET} must have nonzero matrix elements on at least $\lfloor l / 2 \rfloor + 1$ odd off-diagonals in the parity blocks.
Since by Lemma.~\ref{lemma:reduceNaiveH}, any general ET amplitude-mixing Hamiltonian is equal to a Hamiltonian of this form plus a contribution which does not change these odd off-diagonals, this statement then also holds for any general ET amplitude-mixing Hamiltonian.
We now complete the proof of Theorem.~\ref{thm:SqueezingRequired} in this way, which we have repeated below for convenience.
\newtheorem*{T3}{Theorem~\ref{thm:SqueezingRequired}}
\begin{T3}
  Any Hamiltonian $H$ which is error-transparent (as according to Def.~\ref{def:ET}) to the error set $\mathcal{E}_{l}=\left\{\hat{a}^{m}\hat{n}^{k} \, \big| \, \frac{m}{2} + k \leq \frac{l}{2}\right\}$ and generates a continuous amplitude-mixing gate for an $\mathcal{E}_{l}$-correcting order-$N$ rotation-symmetric code must have nonzero matrix elements on at least $\lfloor l / 2 \rfloor + 1$ odd off-diagonals in the parity blocks of the Hamiltonian.
  In other words, it requires at least $\lfloor l / 2 \rfloor + 1$ squeezing orders to implement, specifically with orders of the form $Nk$ for $k$ odd.
\end{T3}
\begin{proof}
    To show that at least $\lfloor l / 2 \rfloor + 1$ odd off-diagonals in the parity blocks are required, we consider a Hamiltonian $H$ with arbitrary matrix elements outside the parity blocks, on all the even off-diagonals in the parity blocks, and on $\lfloor l / 2 \rfloor$ of the odd off-diagonals in the parity blocks, with the remaining odd off-diagonals in the parity blocks having matrix elements fixed at zero.
    We then show that such a Hamiltonian cannot have the eigenstructure of Eq.~\eqref{eqn:naiveET} no matter what these arbitrary matrix elements are.
    This implies that more than $\lfloor l / 2 \rfloor$ odd off-diagonals in the parity blocks are required to achieve the eigenstructure of Eq.~\eqref{eqn:naiveET} and thus is sufficient to conclude the proof.
    To show this, we focus on the logical parity manifold (any parity subspace would work) and the associated $\hat{n}^{k}$ errors.
    The eigenstructure of Eq.~\eqref{eqn:naiveET} stipulates that $\hat{n}^{k}|\psi_{L}\rangle$ must be in the $+1$ eigenspace of $H$ and $\hat{n}^{k}|\psi_{L}^{\perp}\rangle$ must be in the $-1$ eigenspace of $H$ for all $k \in \{0, \dots, \lfloor l / 2 \rfloor\}$.
    In other words, using that these states only have support on the logical parity manifold, we must have $H_{0}\hat{n}^{k}|\psi_{L}\rangle = \hat{n}^{k}|\psi_{L}\rangle$ and $H_{0}\hat{n}^{k}|\psi_{L}^{\perp}\rangle = -\hat{n}^{k}|\psi_{L}^{\perp}\rangle$ for all $k \in \{0, \dots, \lfloor l / 2 \rfloor\}$.
    We will proceed by expressing these conditions as a linear system of equations in the matrix elements of $H_{0}$, and showing that there is no solution.
    In fact, it suffices to show there is no solution by just looking at the first row of $H_{0}$ (i.e. the $|0\rangle\langle Nk'|$ matrix elements for $k' \in \mathbb{N}$).

    Denote the $\lfloor l / 2 \rfloor$ odd off-diagonals which are allowed to be nonzero by the index set $J = \{j_{1}, j_{2}, \dots, j_{\lfloor l / 2 \rfloor}\}$ of $\lfloor l / 2 \rfloor$ distinct odd numbers.
    Recalling that $|\psi_{L}\rangle = |0_{L}\rangle + \alpha|1_{L}\rangle$ and $|\psi_{L}^{\perp}\rangle = \alpha^{*}|0_{L}\rangle - |1_{L}\rangle$ for $\alpha \neq 0$, we can then write this system of linear equations as
    \begin{widetext}
        \begin{equation}
            \begin{aligned}
                \langle 0|H_{0}\hat{n}^{k}|\psi_{L}\rangle & = \langle 0|\hat{n}^{k}|\psi_{L}\rangle \\
                \iff \sum_{j \in J}(H_{0})_{0, j}\left(\alpha j^{k}c_{j}\right) + \sum_{j \text{ even}}^{\infty}(H_{0})_{0, j}\left(j^{k}c_{j}\right) & = 0^{k}c_{0}
            \end{aligned}
            \label{eqn:linearSystem1}
        \end{equation}
        and
        \begin{equation}
            \begin{aligned}
                \langle 0|H_{0}\hat{n}^{k}|\psi_{L}^{\perp}\rangle & = -\langle 0|\hat{n}^{k}|\psi_{L}^{\perp}\rangle \\
                \iff \sum_{j \in J}(H_{0})_{0, j}\left(- j^{k}c_{j}\right) + \sum_{j \text{ even}}^{\infty}(H_{0})_{0, j}\left(\alpha^{*}j^{k}c_{j}\right) & = -0^{k}\left(\alpha^{*} c_{0}\right) ~,
            \end{aligned}
            \label{eqn:linearSystem2}
        \end{equation}
        for $k \in \{0, \dots, \lfloor l / 2 \rfloor\}$, where we have used the fact that $|0_{L}\rangle$ only has support on even Fock states and $|1_{L}\rangle$ only has support on odd Fock states in the logical parity manifold.
        This constitutes a system of $2\lfloor l / 2 \rfloor + 2$ equations in the infinite number of variables $(H_{0})_{0, j}$ for $j \in J$ or $j$ even (i.e. the potentially nonzero matrix elements in the first row of $H_{0}$).
        It suffices to show that this system of linear equations has no solution.
    
        We begin by rewriting the system of linear equations Eq.~\eqref{eqn:linearSystem1}-\eqref{eqn:linearSystem2} in matrix form as $M\vec{x} = \vec{b}$ :
        \begin{equation}
            \begin{aligned}
                \left(\begin{array}{ccc|cccc}
                    \alpha c_{j_{1}} & \cdots & \alpha c_{j_{\lfloor l / 2 \rfloor}} & c_{0} & c_{2} & c_{4} & \cdots \\
                    \alpha j_{1}c_{j_{1}} & \cdots & \alpha j_{\lfloor l / 2 \rfloor}c_{j_{\lfloor l / 2 \rfloor}} & 0 & 2c_{2} & 4c_{4} & \cdots \\
                    \alpha j_{1}^{2}c_{j_{1}} & \cdots & \alpha j_{\lfloor l / 2 \rfloor}^{2}c_{j_{\lfloor l / 2 \rfloor}} & 0 & 4c_{2} & 16c_{4} & \cdots \\
                    \vdots & \ddots & \vdots & \vdots & \vdots & \vdots & \ddots \\
                    \alpha j_{1}^{\lfloor l / 2 \rfloor}c_{j_{1}} & \cdots & \alpha j_{\lfloor l / 2 \rfloor}^{\lfloor l / 2 \rfloor}c_{j_{\lfloor l / 2 \rfloor}} & 0 & 2^{\lfloor l / 2 \rfloor}c_{2} & 4^{\lfloor l / 2 \rfloor}c_{4} & \cdots \\[6pt]
                    \hline
                    -c_{j_{1}} & \cdots & -c_{j_{\lfloor l / 2 \rfloor}} & \alpha^{*}c_{0} & \alpha^{*}c_{2} & \alpha^{*}c_{4} & \cdots \\
                    -j_{1}c_{j_{1}} & \cdots & -j_{\lfloor l / 2 \rfloor}c_{j_{\lfloor l / 2 \rfloor}} & 0 & 2\alpha^{*}c_{2} & 4\alpha^{*}c_{4} & \cdots \\
                    -j_{1}^{2}c_{j_{1}} & \cdots & -j_{\lfloor l / 2 \rfloor}^{2}c_{j_{\lfloor l / 2 \rfloor}} & 0 & 4\alpha^{*}c_{2} & 16\alpha^{*}c_{4} & \cdots \\
                    \vdots & \ddots & \vdots & \vdots & \vdots & \vdots & \ddots \\
                    -j_{1}^{\lfloor l / 2 \rfloor}c_{j_{1}} & \cdots & -j_{\lfloor l / 2 \rfloor}^{\lfloor l / 2 \rfloor}c_{j_{\lfloor l / 2 \rfloor}} & 0 & 2^{\lfloor l / 2 \rfloor}\alpha^{*}c_{2} & 4^{\lfloor l / 2 \rfloor}\alpha^{*}c_{4} & \cdots \\
                \end{array}\right)
                \left(\begin{array}{c}
                    (H_{0})_{0, j_{1}} \\ \vdots \\ (H_{0})_{0, j_{\lfloor l / 2 \rfloor}} \\[5pt]
                    \hline \\[-10pt]
                    (H_{0})_{0, 0} \\ (H_{0})_{0, 2} \\ (H_{0})_{0, 4} \\ \vdots
                \end{array}\right) & = \left(\begin{array}{c} c_{0} \\ 0 \\ 0 \\ \vdots \\ 0 \\\hline \\[-12pt] -\alpha^{*}c_{0} \\ 0 \\ 0 \\ \vdots \\ 0 \end{array}\right) ~,
            \end{aligned}
            \label{eqn:matrixEquation}
        \end{equation}
    \end{widetext}
    where for visual clarity we have added lines to separate the odd off-diagonal terms from the even ones and the $|\psi_{L}\rangle$ constraint equations from the $|\psi_{L}^{\perp}\rangle$ ones.
    This matrix equation has no solutions if $\vec{b}$ is not in the span of the columns of $M$, or equivalently, if $\text{rank}\left(\left(M \, \big| \, \vec{b}\right)\right) > \text{rank}\left(M\right)$.
    To show this is the case, we will make frequent reference to Vandermonde matrices, which are any $(n + 1) \times (m + 1)$ matrices of the form
    \begin{equation}
        \begin{aligned}
            V = \begin{pmatrix}
                1 & 1 & 1 & \cdots & 1 \\
                x_{0} & x_{1} & x_{2} & \cdots & x_{m} \\ x_{0}^{2} & x_{1}^{2} & x_{2}^{2} & \cdots & x_{m}^{2} \\ \vdots & \vdots & \vdots & \ddots & \vdots \\ x_{0}^{n} & x_{1}^{n} & x_{2}^{n} & \cdots & x_{m}^{n}
            \end{pmatrix} ~.
        \end{aligned}
    \end{equation}
    Such a matrix has determinant
    \begin{equation}
        \begin{aligned}
            \text{det}(V) = \prod\limits_{0 \leq i \leq j \leq m}(x_{j} - x_{i}) ~,
        \end{aligned}
    \end{equation}
    which is nonzero if and only if the $x_{i}$ are all distinct~\cite{macon_inverses_1958}.

    Looking first at the right submatrix of $M$, since the bottom-right submatrix is equal to the top-right submatrix multiplied by $\alpha^{*}$, any columns which span the top-right submatrix alone when restricted to the top rows must also span the entire right submatrix when including the whole column.
    Now, if we divide all the columns of this right submatrix by the corresponding coefficient $c_{i}$, we see that the top-right submatrix takes the form of an infinite Vandermonde matrix with distinct $x_{i} = 2i$ for all $i \in \mathbb{N}$, and powers up to $n = \lfloor l / 2 \rfloor$.
    This implies that any $\lfloor l / 2 \rfloor + 1$ columns of this submatrix will constitute a square Vandermonde matrix with distinct $x_{i}$ and thus will be non-singular.
    Thus, any $\lfloor l / 2 \rfloor + 1$ columns of the right submatrix are linearly independent and span the right submatrix.
    This allows us to effectively reduce $M$ to a finite-dimensional $(2\lfloor l / 2 \rfloor + 2) \times (2\lfloor l / 2 \rfloor + 1)$ matrix $M'$ without changing its column span by replacing the right submatrix with any $\lfloor l / 2 \rfloor + 1$ of its columns.

    In particular, we reduce to $M'$ by choosing the first column of the right submatrix, along with the arbitrary but distinct $(k_{1} / 2)\text{-th}, (k_{2} / 2)\text{-th}, \dots, (k_{\lfloor l / 2 \rfloor} / 2)\text{-th}$ columns.
    It then suffices to prove that the square matrix $A = \left(M' \, \big| \, \vec{b}\right)$ is full rank, since then it trivially has higher rank than $M'$.
    To show this, since both row and column operations preserve rank, we first divide the columns of the left submatrix of $A$ by the corresponding coefficients $\alpha j_{i}c_{j_{i}}$, then multiply the rows of the bottom submatrix by $-\alpha$, then divide the columns of the right submatrix (except the first) by the corresponding $k_{i}$, then subtract the top submatrix from the bottom one, and finally rearrange the rows and columns to arrive at the matrix
    \begin{widetext}
        \begin{equation}
            \begin{aligned}
                A' & = \left(\begin{array}{cc|ccc|ccc}
                    1 & 1 & 1 / j_{1} & \cdots & 1 / j_{\lfloor l / 2 \rfloor} & 1 / k_{1} & \cdots & 1 / k_{\lfloor l / 2 \rfloor} \\
                    -|\alpha|^{2} & |\alpha|^{2} & 1 / j_{1} & \cdots & 1 / j_{\lfloor l / 2 \rfloor} & -|\alpha|^{2} / k_{1} & \cdots & -|\alpha|^{2} / k_{\lfloor l / 2 \rfloor} \\[2pt]
                    \hline 
                    0 & 0 & 1 & \cdots & 1 & 1 & \cdots & 1 \\
                    0 & 0 & j_{1} & \cdots & j_{\lfloor l / 2 \rfloor} & k_{1} & \cdots & k_{\lfloor l / 2 \rfloor} \\
                    \vdots & \vdots & \vdots & \ddots & \vdots & \vdots & \ddots & \vdots \\ 
                    0 & 0 & j_{1}^{\lfloor l / 2 \rfloor} & \cdots & j_{\lfloor l / 2 \rfloor}^{\lfloor l / 2 \rfloor} & k_{1}^{\lfloor l / 2 \rfloor} & \cdots & k_{\lfloor l / 2 \rfloor}^{\lfloor l / 2 \rfloor} \\[4pt]
                    \hline
                    0 & 0 & 0 & \cdots & 0 & -|\alpha|^{2} - 1 & \cdots & -|\alpha|^{2} - 1 \\
                    0 & 0 & 0 & \cdots & 0 & \left(-|\alpha|^{2} - 1\right)k_{1} & \cdots & \left(-|\alpha|^{2} - 1\right)k_{\lfloor l / 2 \rfloor} \\
                    \vdots & \vdots & \vdots & \ddots & \vdots & \vdots & \ddots & \vdots \\ 
                    0 & 0 & 0 & \cdots & 0 & \left(-|\alpha|^{2} - 1\right)k_{1}^{\lfloor l / 2 \rfloor} & \cdots & \left(-|\alpha|^{2} - 1\right)k_{\lfloor l / 2 \rfloor}^{\lfloor l / 2 \rfloor} \\
                \end{array}\right) \\
                & = \left(\begin{array}{c|c|c}
                    X & \cdot & \cdot \\
                    \hline
                    0 & Y & \cdot \\
                    \hline
                    0 & 0 & Z
                \end{array}\right) ~.
            \end{aligned}
        \end{equation}
    \end{widetext}
    Rearranged in this convenient form, we can see that $\text{det}(A') = \text{det}(X)\text{det}(Y)\text{det}(Z)$.
    Here, $X$ is a $2 \times 2$ Vandermonde matrix with distinct $x_{i}$ given by the set $\{-|\alpha|^{2}, |\alpha|^{2}\}$; $Y$ is a $\lfloor l / 2 \rfloor \times \lfloor l / 2 \rfloor$ Vandermonde matrix with distinct $x_{i}$ given by the set $\{j_{1}, \dots, j_{\lfloor l / 2 \rfloor}\}$; and $Z$ is $-|\alpha|^{2} - 1 \neq 0$ times a $\lfloor l / 2 \rfloor \times \lfloor l / 2 \rfloor$ Vandermonde matrix with distinct $x_{i}$ given by the set $\{k_{1}, \dots, k_{\lfloor l / 2 \rfloor}\}$.
    Therefore, $X$, $Y$, and $Z$, all have nonzero determinants, and thus so does $A'$.
    This means $A'$, and thus also $A$, is full rank, allowing us to conclude that the system of linear equations given by Eq.~\eqref{eqn:matrixEquation} indeed has no solution.
    This concludes the proof that at least $\lfloor l / 2 \rfloor + 1$ odd off-diagonals in the parity blocks are required to achieve a naive ET Hamiltonian of the form of Eq.~\eqref{eqn:naiveET}, and thus also concludes the proof of Theorem.~\ref{thm:SqueezingRequired}.
\end{proof}

\section{Numerical Simulations}\label{app:performance}
To test the performance of a given error-transparent gate generated by Hamiltonian $H$, we compute the channel fidelity of the noisy $\bar{X}$ gate generated by $H$ (with recovery), relative to the perfect unitary evolution $U = \exp\left(-i\frac{\pi}{2}H\right)$.
In particular, similar to~\cite{albert_performance_2018}, we follow the evolution of the four codespace Pauli matrices
\begin{equation}
    \begin{aligned}
        P_{L} & = |0_{L}\rangle\langle 0_{L}| + |1_{L}\rangle\langle 1_{L}| \\
        X_{L} & = |0_{L}\rangle\langle 1_{L}| + |1_{L}\rangle\langle 0_{L}| \\
        Y_{L} & = i|1_{L}\rangle\langle 0_{L}| - i|0_{L}\rangle\langle 1_{L}| \\
        Z_{L} & = |0_{L}\rangle\langle 0_{L}| - |1_{L}\rangle\langle 1_{L}|
    \end{aligned}
\end{equation}
through the combined quantum channel $\mathcal{R}_{\kappa t} \circ \mathcal{L}_{\kappa}$, where $\mathcal{R}_{\kappa t}$ is the perfect recovery operation for the binomial code (described in~\cite{michael_new_2016}) and $\mathcal{L}_{\kappa}$ represents the Lindbladian evolution with Hamiltonian $H$, collapse operator $\hat{a}$, and error rate $\kappa$:
\begin{equation}
    \dot{\rho} = -\frac{i}{\hbar}[H, \rho] + \kappa\left(\hat{a}\rho\hat{a}^{\dagger} - \frac{1}{2}\left\{\hat{a}^{\dagger}\hat{a}, \rho\right\}\right) ~,
\end{equation}
integrated over the total gate time $t = \frac{\pi}{2}$.
We can then calculate the process fidelity of the gate as~\cite{hashim_practical_2025}
\begin{equation}
    F({\kappa}) = \frac{1}{8}\sum_{M \in \{P, X, Y, Z\}}\text{Tr}\left[M_{L}\mathcal{R}_{\kappa t}(\mathcal{L}_{\kappa}(U^{\dagger}M_{L}U))\right]
\end{equation}
for each error rate $\kappa$.

In Fig.~\ref{fig:performance} we calculated this gate fidelity for a range of $\kappa$ for each of our constructed Hamiltonians $H$ for the $N = K = 3$ binomial code.
We repeat this performance plot for the $N = K = 4$ and $N = K = 5$ binomial codes in Fig.~\ref{fig:performanceApp}.
As can be seen, the error scaling of the various gates indeed matches the specified orders of error-transparency of each of the Hamiltonians, as guaranteed by the Theorem~\ref{thm:SqueezingScaling} construction.
We also note that the single-squeezing improvement of the Theorem~\ref{thm:SingleSqueezing} construction increases as the binomial code becomes larger.

\begin{figure*}
    \centering
    \includegraphics[width = \textwidth]{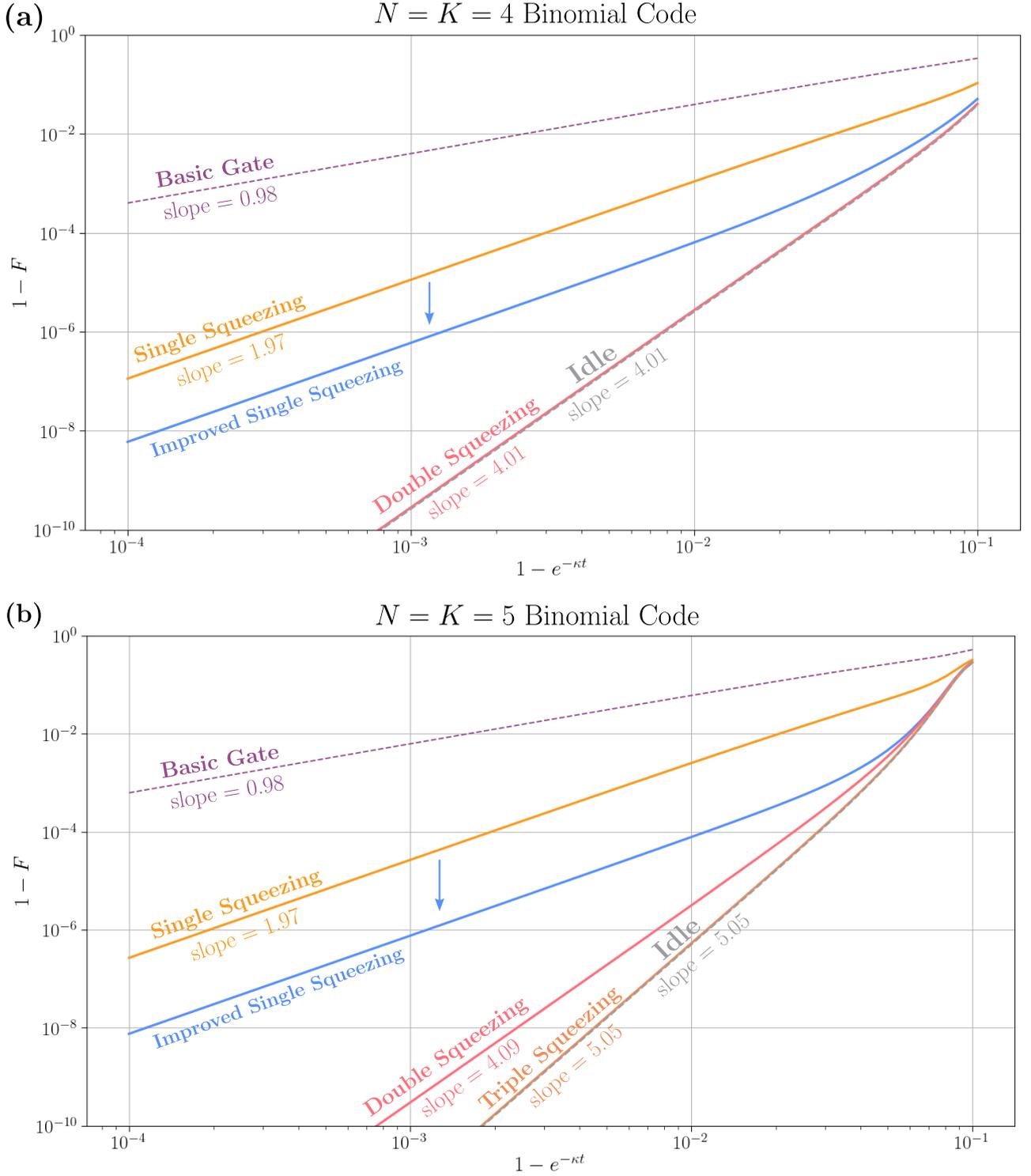}
    \caption{ 
        \textbf{Larger binomial code performance.}
        Repeat of performance plot (Fig.~\ref{fig:performance}) for the \textbf{(a)} $N = K = 4$ and \textbf{(b)} $N = K = 5$ binomial codes.
        The double squeezing (red) Hamiltonian comes from Theorem~\ref{thm:SqueezingScaling} with $l = 3$ (instead of $l = 2$), since these codes can accommodate three photon losses.
        The triple squeezing (dark orange) Hamiltonian comes from Theorem~\ref{thm:SqueezingScaling} with $l = 4$, and is not shown for the $K = N = 4$ binomial code since it cannot accommodate four photon losses.
        We note that the improvement (relative to the $l = 1$ Theorem~\ref{thm:SqueezingScaling} construction) of the improved single squeezing Theorem~\ref{thm:SingleSqueezing} construction increases as the binomial code becomes larger, yielding a factor of $19$ improvement for the $N = K = 4$ code and a factor of $35$ improvement for the $N = K = 5$ code. The slopes for the $N = K = 5$ binomial code fidelity curves are computed only based on the $1 - e^{-\kappa t} < 0.05$ points.
    }
    \label{fig:performanceApp} 
\end{figure*}

\section{Proof-of-Concept Implementation} ~\label{app:concept_implementation}
These error-transparent (ET) gates can be easily realized in different physical platforms for quantum information processing.
Here we show the possibility of their implementation on the smallest $N = K = 2$ binomial code using a bosonic storage mode coupled to an ancilla qubit.
Binomial code encoding and error correction has been performed in such an experimental setup~\cite{hu_quantum_2019,ni_beating_2023}.
For controlling the harmonic bosonic storage mode, some non-linearity is introduced by coupling a qubit to the storage mode. 
The qubit is used for state preparation, gate operations, error detection, error correction and probing the state in the storage mode.
The Hamiltonian for such a coupled system is given by:
\begin{equation}
    \begin{aligned}
        H_0 / \hbar =  \omega_c \hat{a}^\dagger \hat{a} + \omega_q \ket{e}\bra{e} + 
        \chi \hat{a}^\dagger \hat{a} \ket{e}\bra{e} ~.
    \end{aligned}
\end{equation}
Here $\omega_c$ is the oscillator frequency, $\omega_q$ is the qubit frequency, $\chi$ is the dispersive coupling, $\hat{a}$ is the annihilation operator for the storage mode and $|e\rangle$ is the excited state of the qubit. 

To achieve error-transparent gates on the $N=K=2$ binomial code we can use squeezing operations.
However, squeezing alone is not sufficient to accomplish unitary operations on the binomial encoding.
To accomplish a gate operation we need to modify the matrix elements of the driven oscillator which can be practically accomplished using the MEM protocol described in \cite{roy_synthetic_2025}.
In this protocol, we use a matrix element modification (MEM) frequency comb on the qubit with each frequency in the comb separated by $\chi$, the dispersive coupling of the qubit and the oscillator.
The number of frequencies in the frequency comb determines the Hilbert space size of the oscillator in which the dynamics is restricted, when starting from the ground state.
A double-frequency squeezing drive (of the form $\epsilon \hat{a}^2 + h.c.$) on the oscillator (at the frequency $2\omega_c, 2(\omega_c + \chi)$) induces dynamics in this limited Hilbert space of the oscillator.
Further, the phases ($\phi_n$) at each frequency of the qubit frequency comb is used to modify the matrix elements of the driven oscillator's annihilation operator. 
The expectation values of the $\hat{a}^2$ operator will change as $\langle n-2 | \hat{a}^2 | n\rangle = \sqrt{n(n-1)} \cos{ \left( (\phi_n-\phi_{n-2})/2 \right)}$ where $|n\rangle$ denotes the n-th Fock state of the oscillator.

For the $N=K=2$ binomial code, a frequency comb on the qubit with five frequencies (with spacing $\chi$) and appropriate phases ($\phi_n$) on the different frequency drives can be used to modify the matrix elements of the driven storage mode.
Then the storage drive can be applied for the required duration to accomplish a gate.
The drive Hamiltonian can be written as
\begin{widetext}
    \begin{equation}
        \begin{aligned}
            H_d={} \Omega \sum_{n=0}^{4} \cos\left(\omega_q t +  n \chi t + \phi_n + \frac{\pi}{2}\right)\hat{\sigma}_y -i \epsilon \sum_{k=0}^1 \cos\left( 2\omega_c t + 2k \chi t + \varphi+ \frac{\pi}{2}\right) (\hat{a}^2 - \hat{a}^{\dagger 2})
            \label{eqn:qubit_drive}
        \end{aligned}
    \end{equation}
\end{widetext}
where  $\Omega$ is the qubit Rabi drive rate, $t$ is the duration of the drive, $\phi_n$ is the phase of the drive at $n\chi$ shifted frequency, $\hat{\sigma}_y$ is the Pauli Y operator, $\epsilon$ is the drive rate of the storage mode, and $\varphi$ is the common drive phase for storage mode drive at both the frequencies. We set $\varphi=0$ for this work.
For the codespace, we can apply qubit drive at frequencies $\omega_q$, $\omega_q + 2\chi$ and $\omega_q + 4\chi$ and set the phases to $\phi_0 = \phi_2 = 0$ and $\phi_4 = 2.3$. 
For the error space evolution we have to apply qubit drive at frequencies $\omega_q + \chi$ and $\omega_q + 3\chi$ and set the phases to $\phi_1  = 0 $ and $\phi_3 = 2.0$.
Note that we need to change the phase on the $n=3$ qubit drive to match the codespace and error space evolution trajectory. 
Using these drives we can error-transparently create any superposition of $|0_L\rangle$ and $|1_L\rangle$ in the $XZ$ plane of the logical Bloch sphere for the $N = K = 2$ binomial code.
In Fig. \ref{fig:sf_bosonic_et}, we show the evolution of the states in the storage mode under such a drive when starting from $|1_L\rangle$. 
In this simulation, the system is lossless and in the adiabatic limit with $\epsilon:\Omega:|\chi| = 0.01: 1: 100$.
Fig.~\ref{fig:sf_bosonic_et} (a) and (b) shows the evolution in the code space and error space, respectively.
Driving the storage for the marked duration of 0.79$\epsilon^{-1}$, an error-transparent X gate can be accomplished.
At the end of the oscillator and qubit drives, we disentangle the oscillator and qubit by applying phases $-\phi_n/2$ on Fock level $|n\rangle$ of the oscillator. 
An additional drive duration dependent phase is added to Fock state $|4\rangle$ to stay in the codespace.
Both these phase additions can be done in the same step using a standard SNAP gate \cite{heeres_cavity_2015}.
At the end of the gate, we can check for photon loss errors by checking the parity of the state in the storage mode and then can correct the error.
This corrects for any photon loss error during the gate operation and hence achieves an error-transparent operation.

\begin{figure*}
    \centering
    \includegraphics[width = \textwidth]{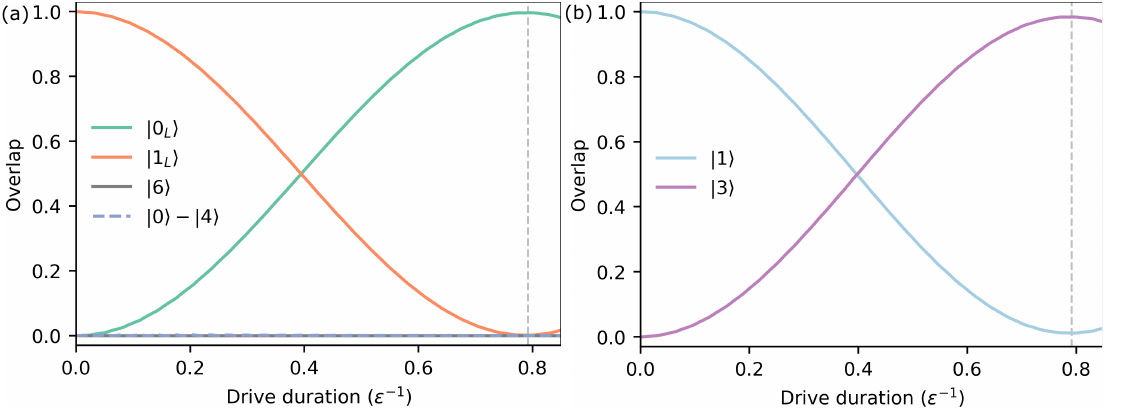}
    \caption{ 
        \textbf{Parity nested gate in superconducting cavity.} Implementation of the error-transparent gates on $N=K=2$ binomial code in a storage oscillator mode coupled to a qubit. Using an appropriate MEM comb on the qubit and driving the storage cavity for different duration we can accomplish error-transparent gates that can be used to create arbitrary superposition of the logical states with real coefficients. (a) and (b) respectively show the time evolution of the states in codespace and error space in the storage cavity. By driving the storage for the marked duration of 0.79$\epsilon^{-1}$, for both the error space and codespace, an error-transparent $\bar{X}$ gate can be accomplished. At the end of the gate we will need to detect errors by doing a parity check of the state in the storage cavity and then correct for any errors to accomplish the error-transparent gate. For these simulations the system was considered to be lossless and in the adiabatic limit with $\epsilon:\Omega:|\chi| = 0.01: 1: 100$.
    }
    \label{fig:sf_bosonic_et} 
\end{figure*}

While the above procedure shows the possibility of accomplishing such gates with superconducting circuits, experimental implementation will be challenging. 
The MEM protocol and simulation in Fig.~\ref{fig:sf_bosonic_et} assume adiabaticity (i.e. $\epsilon \ll \Omega \ll \chi$). 
To compete with decoherence in real experimental devices, we can not be in the strictly adiabatic regime. 
This will cause lower gate fidelities and less error transparent operations as the matrix element modification by the MEM protocol will not be perfect. 

To produce the squeezing drives shown in Eq.~\eqref{eqn:qubit_drive} we can use four wave mixing in a transmon qubit coupled to the oscillator~\cite{wang_efficient_2020}.
We can also use a superconducting nonlinear asymmetric inductive element (SNAIL) for creating faster squeezing and trisqueezing operations~\cite{eriksson_universal_2024}. 
However, these drives will also suffer from not being in the strictly adiabatic limit. 
For example, when we try to generate a high amplitude squeezing drive, a low amplitude linear displacement drive may also be generated. 
We can leverage numerical optimization techniques such as gradient descent or reinforcement learning to take into account these nonidealities and generate high fidelity error transparent operations starting from the MEM comb pulses as a seed.
We also note that the rate of squeezing achievable may also be a limiting factor in realization of ET gates in a short duration. 

In the main text, we only considered being error-transparent to photon loss in the oscillator. 
We are still susceptible to ancilla errors during the operation, in which case it will no longer be error-transparent. 
Nevertheless, we can still expect to see improvements in gate performance by being error-transparent to just oscillator photon loss as oscillator losses contribute significantly to
errors during gate operation.
In addition, in Appendix~\ref{app:ancillaET} we show how modifying the MEM protocol to use a four-level ancilla can ensure first-order error-transparency to ancilla errors.

For higher distance binomial codes, the required generalized order of squeezing can be achieved with higher order wave mixing processes.
A six-wave mixing process has recently been demonstrated in~\cite{vanselow_dissipating_2025}.
It might also be possible to design pulses numerically so that the operation is more robust to ancilla errors, but we leave this to be explored in future efforts.
In summary, while it is possible to achieve error-transparent amplitude-mixing operations with superconducting circuits, it will require significant progress in experimental and numerical techniques to outperform existing gate operations.

\section{Ancilla Error-Transparency}\label{app:ancillaET}
The MEM protocol~\cite{roy_synthetic_2025} can be made first-order error-transparent to ancilla errors by modifying the Hamiltonian to use a four-level ancilla.
This new Hamiltonian is given by
\begin{widetext}
    \begin{equation}
      \begin{aligned}
        \hat{H} / \hbar = & \hspace{4pt} \omega_{h}|h\rangle\langle h| + \omega_{e}|e\rangle\langle e| + \omega_{f}|f\rangle\langle f| + \omega_{c}\hat{n} + \hat{n}\left(\chi_{h}|h\rangle\langle h| + \chi_{e}|e\rangle\langle e| + \chi_{f}|f \rangle\langle f|\right) + \\
        & \hspace{4pt} \Omega_{gf}(t)\left(i|f\rangle\langle g| - i|g\rangle\langle f|\right) + \Omega_{he}(t)\left(i|e\rangle\langle h|  - i|h\rangle\langle e|\right) - i\epsilon(t)(\hat{a}^{2} - \hat{a}^{\dagger 2}) ~, \\
        \Omega_{gf}(t) = & \hspace{4pt} \bar{\Omega}\sum_{n = 0}^{N}\cos\left(\omega_{f}t + n\chi_{f}t + \phi_{n} + \frac{\pi}{2}\right) ~, \\
        \Omega_{he}(t) = & \hspace{4pt} \bar{\Omega}\sum_{n = 0}^{N}\cos\left(\left(\omega_{e} - \omega_{h}\right)t + n\left(\chi_{e} - \chi_{h}\right)t + \phi_{n} + \frac{\pi}{2}\right) ~, \\
        \epsilon(t) = & \hspace{4pt} \bar{\epsilon}\sum_{\chi = \{0, \chi_{h}, \chi_{f}\}}\cos\left(2\omega_{c}t + \chi t + \frac{\pi}{2}\right) ~,
      \end{aligned}
      \label{eqn:ancillaETHam}
    \end{equation}
    where $\hat{n} = \hat{a}^{\dagger}\hat{a}$ is the number operator of the oscillator with basis eigenkets $\{|n\rangle\,|\,n \in \mathbb{N}\}$ and $\{|g\rangle, |h\rangle, |e\rangle, |f\rangle\}$ are the basis kets of the four-level ancilla.
    Note that since the oscillator drive $\epsilon(t)$ does not include a $\chi_{e}$ term in the summation, we must enforce $\chi_{e} = \chi_{f}$ to properly achieve the MEM action.
    This is to avoid requiring $|\chi_{e} - \chi_{f}| \gg \bar{\Omega}$, which conflicts with the condition $\chi_{e} \approx \chi_{f}$ that is needed for error-transparency.
    
    To view the MEM action of this Hamiltonian, we successively switch into different interaction frames according to the following four unitaries:
    \begin{equation}
      \begin{aligned}
        \hat{U}_{1} & = \exp\left(i\left(\omega_{h}|h\rangle\langle h| + \omega_{e}|e\rangle\langle e| + \omega_{f}|f\rangle\langle f| + \omega_{c}\hat{n}\right)t\right) ~, \\
        \hat{U}_{2} & = \exp\left(i\left(\chi_{h}|h\rangle\langle h| + \chi_{e}|e\rangle\langle e| + \chi_{f}|f\rangle\langle f|\right)t\hat{n}\right) ~, \\
        \hat{U}_{3} & = \exp\left(\frac{i}{2}\sum_{n = 0}^{N}\phi_{n}|n\rangle\langle n| \otimes \left(|f\rangle\langle f| - |g\rangle\langle g| + |e\rangle\langle e| - |h\rangle\langle h|\right)\right) ~, \\
        \hat{U}_{4} & = \exp\left(\frac{i}{2}\bar{\Omega}t\sum_{n = 0}^{N}|n\rangle\langle n| \otimes \left(|f\rangle\langle g| + |g\rangle\langle f| + |e\rangle\langle h| + |h\rangle\langle e|\right)\right) ~.
      \end{aligned}
      \label{eqn:ancillaETUnitaries}
    \end{equation}
    Applying rotating-wave approximations between the frame switches according to the parameter hierarchy
    \begin{equation}
      \begin{aligned}
        \omega_{e} - \omega_{h}, \omega_{f}, \omega_{c} \gg |\chi_{f} - \chi_{h}|, |\chi_{h}|, |\chi_{e}|, |\chi_{f}| \gg \bar{\Omega} \gg \bar{\epsilon} ~,
      \end{aligned}
      \label{eqn:parameterHierarchy}
    \end{equation}
    the Hamiltonian ultimately becomes (see Appendix B of~\cite{roy_synthetic_2025} for a similar derivation)
    \begin{equation}
      \begin{aligned}
        \hat{H}' / \hbar & = \frac{\bar{\epsilon}}{2}\left(\sum_{n = 2}^{N}\sqrt{n(n - 1)}\cos\left(\frac{\phi_{n} - \phi_{n - 2}}{2}\right)|n - 2\rangle\langle n| + \sum_{N + 3}^{\infty}\sqrt{n}|n - 2\rangle\langle n|\right) + \text{h.c.} ~,
      \end{aligned}
      \label{eqn:MEMFinalHam}
    \end{equation}
\end{widetext}
so that the matrix-element modification is successfully accomplished for all states of the ancilla.

We now elaborate on the error-transparency of the Hamiltonian to first-order ancilla relaxation, using ideas from~\cite{reinhold_error-corrected_2020} and~\cite{xu_fault-tolerant_2024}.
Since this relaxation event induces ancilla dephasing, the situation for pure dephasing is similar.
The non-Hermitian effects of the errors similarly induce ancilla dephasing, so it suffices to only consider the error-transparency of the Hermitian evolution to the jump errors.
We consider an initial state of $|\psi_{\text{osc}, i}\rangle \otimes |g\rangle$, in the $\hat{U}_{\text{total}} = \hat{U}_{4}\hat{U}_{3}\hat{U}_{2}\hat{U}_{1}$ frame, where $|\psi_{\text{osc}, i}\rangle$ is some state of the oscillator that only has support up to the $N$-th Fock state.
Working in the $\hat{U}_{\text{total}}$ frame, we then follow the initial Hamiltonian evolution for time $t_{0}$, followed by a relaxation event at time $t_{0}$, followed by the remaining Hamiltonian evolution for time $T - t_{0}$, until the end of the gate is reached at time $T$.
For simplicity, we also choose the total gate time to be such that $\chi_{h}T = 2\pi k_{h}$, $\chi_{e}T = 2\pi k_{e}$, $\chi_{f}T = 2\pi k_{f}$, and $\bar{\Omega}T = 2\pi l$ for $k_{h}, k_{e}, k_{f}, l \in \mathbb{Z}$, so that at the beginning and end of the gate $\hat{U}_{\text{total}} = \hat{U}_{3}$.
This choice does not affect the error-transparency properties of the gate.

Since the ancilla stays in the $\{|g\rangle, |f\rangle\}$ subspace if there are no errors, the ancilla errors that matter to first order are the relaxation event $|e\rangle\langle f|$ and dephasing in the $\{|g\rangle, |f\rangle\}$ subspace (e.g. $|f\rangle\langle f|$).
In the frame associated with the total unitary $\hat{U}_{\text{total}} = \hat{U}_{4}\hat{U}_{3}\hat{U}_{2}\hat{U}_{1}$, the relaxation event becomes
\begin{widetext}
    \begin{equation}
      \begin{aligned}
        \hat{U}_{\text{total}}|e\rangle\langle f|\hat{U}_{\text{total}}^{\dagger} \propto & \hspace{4pt} \sum_{n = 0}^{N}\text{e}^{i(\chi_{e} - \chi_{f})nt}|n\rangle\langle n| \otimes \left|\left(\bar{\Omega}t, \pi / 2\right)_{eh}\right\rangle\left\langle\left(\bar{\Omega}t, \pi / 2\right)_{fg}\right| + \\
        & \sum_{n = N + 1}^{\infty}\text{e}^{i(\chi_{e} - \chi_{f})nt}|n\rangle\langle n| \otimes |e\rangle\langle f| ~,
      \end{aligned}
      \label{eqn:relexationError}
    \end{equation}
\end{widetext}
where $\left|\left(\theta, \phi\right)_{kj}\right\rangle = \cos(\theta / 2)|k\rangle + e^{i\phi}\sin(\theta / 2)|j\rangle$.
This error can be broken down into three parts: a time-dependent rotation of the oscillator $\text{e}^{(\chi_{e} - \chi_{f})\hat{n}t}$, the ancilla falling from the original qubit subspace $\{|g\rangle, |f\rangle\}$ to the error qubit subspace $\{|h\rangle, |e\rangle\}$, and dephasing of the ancilla via a collapse onto the state $\left|\left(\bar{\Omega}t, \pi / 2\right)_{eh}\right\rangle = \cos(\bar{\Omega}t/2)|e\rangle + i\sin(\bar{\Omega}t/2)|h\rangle$.
The pure dephasing event is similar, just without the random rotation and the fall into the $\{|h\rangle, |e\rangle\}$ subspace.
Similar to~\cite{reinhold_error-corrected_2020}, the random rotation error can be eliminated by fixing $\chi_{e} = \chi_{f}$, as already required by our Hamiltonian derivation.
Even if there is some small $\chi$ mismatch, if the Hamiltonian is error-transparent to random rotation errors -- or ET up to some order, like the parity nested Hamiltonians -- then we expect these dephasing errors can remain tolerable for small enough $(\chi_{e} - \chi_{f})t_{0}$, similar to~\cite{xu_fault-tolerant_2024}.

Following the evolution of the initial state through the channel and neglecting this random rotation error, we have:
\begin{widetext}
    \begin{equation}
      \begin{aligned}
        |\psi_{\text{final}}\rangle & \propto \text{e}^{i\hat{H}'(T - t_{0})}\left(\hat{U}_{\text{total}}|e\rangle\langle f|\hat{U}_{\text{total}}^{\dagger}\right)\bigg|_{t_{0}}\text{e}^{i\hat{H}'t_{0}}\left(|\psi_{\text{osc}, i}\rangle \otimes |g\rangle\right) \\
        & \propto \left(\text{e}^{i\hat{H}'T}|\psi_{\text{osc}, i}\rangle\right) \otimes \left(\cos\left(\bar{\Omega}t_{0} / 2\right)|e\rangle + i\sin\left(\bar{\Omega}t_{0} / 2\right)|h\rangle\right) ~.
      \end{aligned}
      \label{eqn:ETEvolution}
    \end{equation}
    Defining $|\psi_{\text{osc}, f}\rangle = \text{e}^{i\hat{H}'T}|\psi_{\text{osc}, i}\rangle$ and applying $\hat{U}_{3}^{\dagger}$ to return to the lab frame, we have:
    \begin{equation}
      \begin{aligned}
        \hat{U}_{3}^{\dagger}|\psi_{\text{final}}\rangle \propto \cos\left(\bar{\Omega}t_{0} / 2\right)\left(\hat{S}(-\vec{\phi} / 2)|\psi_{\text{osc}, f}\rangle \otimes |e\rangle\right) + i\sin\left(\bar{\Omega}t_{0} / 2\right)\left(\hat{S}(\vec{\phi} / 2)|\psi_{\text{osc}, f}\rangle \otimes |h\rangle\right) ~,
      \end{aligned}
      \label{eqn:labFinalState}
    \end{equation}
\end{widetext}
where $\hat{S}(\vec{\theta}) = \sum_{n}^{N}\text{e}^{i\theta_{n}}|n\rangle\langle n|$ is a SNAP operation.
Thus, if we measure the ancilla state in the lab frame to be $|e\rangle$, applying the SNAP operation $\hat{S}(\vec{\phi})$ -- which can be made ET~\cite{reinhold_error-corrected_2020} -- to the oscillator yields the correct final oscillator state $\hat{S}(\vec{\phi}/2)|\psi_{\text{osc}, f}\rangle = \text{Tr}_{A}\left[\hat{U}_{3}^{\dagger}\left(|\psi_{\text{osc}, f}\rangle \otimes |g\rangle\right)\right]$, where $\text{Tr}_{A}[\cdot]$ is the partial trace over the ancilla.
If we instead measure the state as $|h\rangle$, the oscillator is already in the correct final state $\hat{S}(\vec{\phi}/2)|\psi_{\text{osc}, f}\rangle$.
In the case of pure dephasing, if the measured ancilla state is $|f\rangle$, applying $\hat{S}(\vec{\phi})$ yields the correct final oscillator state $\hat{S}(\vec{\phi}/2)|\psi_{\text{osc}, f}\rangle$, and if the measured ancilla is $|g\rangle$, the oscillator is already in the correct final state.
This is consistent with the fact that the ancilla ends in $|g\rangle$ when there are no errors.
In this way, performing a measurement-conditioned SNAP operation at the end of the protocol ensures the Hamiltonian is first-order error-transparent to ancilla errors.
However, in an experiment, these operations will suffer from non-idealities (such as not being in the strictly adiabatic limit) similar to the MEM comb and will not be perfectly error transparent.

\end{document}